\g@addto@macro\bfseries{\boldmath}
\g@addto@macro\mdseries{\unboldmath}
\g@addto@macro\normalfont{\unboldmath}
\g@addto@macro\rmfamily{\unboldmath}
\g@addto@macro\upshape{\unboldmath}
\renewcommand*{\multicitedelim}{\addcomma\space}
\newcommand{\myhref}[1]{%
  \iffieldundef{doi}
    {\iffieldundef{url}
       {#1}
       {\href{\strfield{url}}{#1}}}
    {\href{http://dx.doi.org/\strfield{doi}}{#1}}%
}
    \newlength{\temp@x}%
    \newlength{\temp@y}%
    \newlength{\temp@w}%
    \newlength{\temp@h}%
    \def\my@coords#1#2#3#4{%
      \setlength{\temp@x}{#1}%
      \setlength{\temp@y}{#2}%
      \setlength{\temp@w}{#3}%
      \setlength{\temp@h}{#4}%
      \adjustlengths{}%
      \my@pdfliteral{\strip@pt\temp@x\space\strip@pt\temp@y\space\strip@pt\temp@w\space\strip@pt\temp@h\space re}}%
      \def\my@pdfliteral#1{\pdfliteral page{#1}}% I don't know why % this command...
      \def\adjustlengths{}%
      \def\my@pdfliteral #1{}% isn't equivalent to this one
      \def\adjustlengths{\setlength{\temp@h}{-\temp@h}\addtolength{\temp@y}{1in}\addtolength{\temp@x}{-1in}}%
    \def\Hy@colorlink#1{%
      \begingroup
        \ifHy@ocgcolorlinks
          \def\Hy@ocgcolor{#1}%
          \my@pdfliteral{q}%
          \my@pdfliteral{7 Tr}% Set text mode to clipping-only
        \else
          \HyColor@UseColor#1%
        \fi
    }%
    \def\Hy@endcolorlink{%
      \ifHy@ocgcolorlinks%
        \my@pdfliteral{/OC/OCPrint BDC}%
        \my@coords{0pt}{0pt}{\pdfpagewidth}{\pdfpageheight}%
        \my@pdfliteral{F}% Fill clipping path (the url's text) with
                           % current color
        %
        \my@pdfliteral{EMC/OC/OCView BDC}%
        \begingroup%
          \expandafter\HyColor@UseColor\Hy@ocgcolor%
          \my@coords{0pt}{0pt}{\pdfpagewidth}{\pdfpageheight}%
          \my@pdfliteral{F}% Fill clipping path (the url's text)
                             % with \Hy@ocgcolor
        \endgroup%
        \my@pdfliteral{EMC}%
        \my@pdfliteral{0 Tr}% Reset text to normal mode
        \my@pdfliteral{Q}%
      \fi
      \endgroup
    }%
\colorlet{DarkRed}{red!50!black}
\colorlet{DarkGreen}{green!50!black}
\colorlet{DarkBlue}{blue!50!black}
\declaretheorem[numberwithin=section]{theorem}
\declaretheorem[numberlike=theorem]{lemma}
\declaretheorem[numberlike=theorem]{proposition}
\declaretheorem[numberlike=theorem]{claim}
\declaretheorem[numberlike=theorem]{observation}
\newcommand{\dist}{\ensuremath{\operatorname{dist}}} %{\ensuremath{{\sf dist}}}
\newcommand{\distest}{\delta}
\newcommand{\lev}{\ell}
\newcommand{\ball}{\mathit{Ball}}
\newcommand{\p}{k}
\newcommand{\polylog}{\operatorname{polylog}}
\title{Decremental Single-Source Shortest Paths on Undirected Graphs in Near-Linear Total Update Time\thanks{Accepted to \emph{Journal of the ACM}. A preliminary version of this paper was presented at the \emph{55th IEEE Symposium on Foundations of Computer Science (FOCS 2014)}.}}
\author{
Monika Henzinger\thanks{University of Vienna, Faculty of Computer Science, Austria. Supported by the Austrian Science Fund (FWF): P23499-N23. The research leading to these results has received funding from the European Research Council under the European Union's Seventh Framework Programme (FP/2007-2013) / ERC Grant Agreement no.~340506 and from the European Union's Seventh Framework Programme (FP7/2007-2013) under grant agreement no.~317532.}
\and Sebastian Krinninger\thanks{University of Salzburg, Department of Computer Sciences, Austria. Supported by the University of Vienna (IK \mbox{I049-N}). Work done in large part while at University of Vienna, Austria.}
\and Danupon Nanongkai\thanks{KTH Royal Institute of Technology,  School of Electrical Engineering and Computer Science (EECS), Sweden. Work partially done while at ICERM, Brown University, USA, and Nanyang Technological University, Singapore 637371, and while supported in part by the following research grants: Nanyang Technological University grant M58110000, Singapore Ministry of Education (MOE) Academic Research Fund (AcRF) Tier 2 grant MOE2010-T2-2-082, and Singapore MOE AcRF Tier 1 grant MOE2012-T1-001-094.}
}
\date{}
\begin{document}
\maketitle
\begin{abstract}
In the decremental single-source shortest paths (SSSP) problem we want to maintain the distances between a given source node $s$ and every other node in an $n$-node $m$-edge graph~$G$ undergoing edge deletions.
While its static counterpart can be solved in near-linear time, this decremental problem is much more challenging even in the {\em undirected unweighted} case.
In this case, the classic $O(mn)$ total update time of Even and Shiloach~[JACM 1981] has been the fastest known algorithm for three decades. At the cost of a $(1+\epsilon)$-approximation factor, the running time was recently improved to $ n^{2+o(1)} $ by Bernstein and Roditty~[SODA 2011].
In this paper, we bring the running time down to near-linear: 
We give a $(1+\epsilon)$-approximation algorithm with $ m^{1+o(1)} $ expected total update time, thus obtaining {\em near-linear time}.
Moreover, we obtain $ m^{1+o(1)} \log W $ time for the weighted case, where the edge weights are integers from $ 1 $ to $ W $.
The only prior work on weighted graphs in $ o(m n) $ time is the $ m n^{0.9 + o(1)} $-time algorithm by Henzinger et al.~[STOC 2014, ICALP 2015] which works for directed graphs with quasi-polynomial edge weights.
The expected running time bound of our algorithm holds against an oblivious adversary.

In contrast to the previous results which rely on maintaining a sparse emulator, our algorithm relies on maintaining a so-called {\em sparse $(h, \epsilon)$-hop set} introduced by Cohen~[JACM 2000] in the PRAM literature.
An $(h, \epsilon)$-hop set of a graph $G=(V, E)$ is a set $F$ of weighted edges such that the distance between any pair of nodes in $G$ can be $(1+\epsilon)$-approximated by their $h$-hop distance (given by a path containing at most $h$ edges) on $G'=(V, E\cup F)$.
Our algorithm can maintain an $(n^{o(1)}, \epsilon)$-hop set of near-linear size in near-linear time under edge deletions.
It is the first of its kind to the best of our knowledge.
To maintain approximate distances using this hop set, we extend the monotone Even-Shiloach tree of Henzinger et al.~[SICOMP 2016] and combine it with the bounded-hop SSSP technique of Bernstein~[FOCS 2009, STOC 2016] and M\k{a}dry~[STOC 2010].
These two new tools might be of independent interest.

\end{abstract}
\newpage

\tableofcontents
\newpage

\section{Introduction}\label{sec:intro}

Dynamic graph algorithms refer to data structures on graphs that support update and query operations. They are classified according to the type of update operations they allow: {\em decremental} algorithms allow only edge deletions, {\em incremental} algorithms allow only edge insertions, and {\em fully dynamic} algorithms allow both insertions and deletions. 
In this paper, we consider decremental algorithms for the {\em single-source shortest paths (SSSP)} problem on {\em undirected} graphs. The {\em unweighted} case of this problem allows the following operations. 
\begin{itemize}
\setlength{\itemsep}{0.5pt}
\item {\sc Delete}($u, v$): delete the edge $(u, v)$ from the graph, and
\item {\sc Distance}($v$): return the distance $ \dist_G (s, v) $ between node $s$ and node $v$ in the current graph~$G$.
\end{itemize} 
The {\em weighted} case allows an additional operation {\sc Increase($ u, v, \Delta $)} which increases the weight of the edge $ (u, v) $ by $ \Delta $.
We allow positive integer edge weights in the range from $ 1 $ to $ W $, for some parameter~$ W $.
For any $\alpha\geq 1$, we say that an algorithm is an {\em $\alpha$-approximation} algorithm if, for any distance query {\sc Distance}($x$), it returns a distance estimate $\distest (s, x)$ such that $\dist_{G}(s, x)\leq \distest (s, x) \leq \alpha\dist_G(s, x)$.
There are two time complexity measures associated with this problem: {\em query time} denoting the time needed to answer {\em each} distance query, and {\em total update time} denoting the time needed to process {\em all} edge deletions. The running time will be measured in terms of $n$, the number of nodes in the graph, and $m$, the number of edges {\em before} the first deletion. For the weighted case, we additionally consider the dependence on $W$, the maximum edge weight.
We use $\tilde O$-notation to hide factors that are polylogarithmic in~$ n $.
In this paper, we focus on algorithms with small ($O(1)$ or $ \polylog n $) query time, and the main goal is to minimize the total update time, which will simply be referred to as {\em time} when the context is clear.

\paragraph{Related Work}
The static version of SSSP can be easily solved in $ \tilde O (m)$ time using, e.g., Dijkstra's algorithm. Moreover, due to the deep result of Thorup~\cite{ThorupJACM99}, it can even be solved in linear ($O(m)$) time in undirected graphs with positive integer edge weights.
This implies that in our setting we can naively solve decremental SSSP in $O(m^2)$ total update time by running the static algorithm after every deletion. The first non-trivial decremental algorithm is due to Even and Shiloach~\cite{EvenS81} from 1981 and takes $O(mn)$ total update time in unweighted undirected graphs.  
This algorithm will be referred to as {\em ES-tree} throughout this paper. It has many applications such as for decremental strongly connected components~\cite{Roditty13} and multicommodity flow problems~\cite{Madry10}; yet, the ES-tree has resisted many attempts of improving it for decades.
Roditty and Zwick~\cite{RodittyZ11} explained this phenomenon by providing evidence that the ES-tree is optimal for maintaining exact distances even on {\em unweighted undirected} graphs, unless there is a major breakthrough for Boolean matrix multiplication and many other long-standing problems \cite{WilliamsW10}. 
After the preliminary version of our work appeared, Henzinger et al.~\cite{HenzingerKNS15} showed that, up to subpolynomial factors, $ O(mn) $ is essentially the best possible total update time for maintaining exact distances under the assumption that there is no ``truly subcubic'' algorithm for a problem called online Boolean matrix-vector multiplication.
Under the same assumption, they also showed that there is no fully dynamic $\alpha$-approximate SSSP algorithm such that $ \alpha < 2 $ with amortized time $ O (m^{\gamma - \delta}) $ \emph{per update} and query time $ O (m^{1 - \gamma - \delta}) $ for any $ \gamma \in (0, 1) $ and $ \delta > 0 $.\footnote{This conditional lower bound then holds for graphs with $ m \leq \min(n^{1/\gamma}, n^{1/(1 - \gamma)}) $ many edges.}
In incremental and decremental algorithms, respectively, the same type of trade-off holds between the \emph{worst-case} update time and the query time.
It is thus natural to shift the focus to {\em amortized decremental approximation algorithms}; the amortization is usually done implicitly by only considering the \emph{total} update time over a sequence of up to $ m $ deletions.

The first improvement for unweighted undirected graphs was due to Bernstein and Roditty~\cite{BernsteinR11} who presented a randomized $(1+\epsilon)$-approximation algorithm with $ n^{2+O(1/\sqrt{\log n})} $ total update time.\footnote{To enhance readability we assume that $ \epsilon $ is a constant when citing related work, thus omitting the dependence on $ \epsilon $ in the running times.}
This time bound is only slightly larger than quadratic and beats the $O(mn)$ time of the ES-tree unless the input graph is very sparse. 
After the preliminary version of our work appeared, Bernstein and Chechik, presented deterministic $(1+\epsilon)$-approximation algorithms for unweighted undirected graphs with total update times $ \tilde O (n^2) $~\cite{BernsteinC16} and $ \tilde O (n^{1.25} \sqrt{m}) = \tilde O (m n^{3/4}) $~\cite{BernsteinC17}, respectively.
In weighted undirected graphs, an extension of the technique gives a total update time of $ \tilde O (n^2 \log{W}) $~\cite{Bernstein17}.

For the case of directed graphs, Henzinger and King~\cite{HenzingerK95} observed that the ES-tree can be easily adapted to unweighted directed graphs. King~\cite{King99} later extended the ES-tree to an $O(mnW)$-time algorithm for weighted directed graphs. A rounding technique used in recent algorithms of Bernstein~\cite{Bernstein09,Bernstein16} and M\k{a}dry~\cite{Madry10}, as well as earlier papers on approximate shortest paths~\cite{KleinS97,Cohen98,Zwick02}, gives a $(1+\epsilon)$-approximate $\tilde O(mn\log W)$-time algorithm for weighted directed graphs.
Very recently, we obtained a randomized $(1+\epsilon)$-approximation algorithm with total update time $ m n^{0.9 + o(1)} $ for decremental approximate SSSP in weighted directed graphs~\cite{HenzingerKNSTOC14,HenzingerKNICALP15} if $ W \leq 2^{\log^c{n}}$ for some constant~$ c $.
This gives the first $ o(m n)$-time algorithm for the directed case, as well as other important problems such as single-source reachability and strongly connected components \cite{RodittyZ08,Lacki13,Roditty13,ChechikHILP16}.
Also very recently, Abboud and Williams~\cite{AbboudW14} showed that ``deamortizing'' our algorithms in \cite{HenzingerKNSTOC14} might not be possible: a combinatorial algorithm with {\em worst case} update time and query time of $ O(n^{2-\delta})$ (for any $ \delta > 0 $) per deletion implies a faster combinatorial algorithm for Boolean matrix multiplication and, for the more general problem of maintaining the number of reachable nodes from a source under deletions (which our algorithms in \cite{HenzingerKNSTOC14} can do) a worst case update and query time of $ O(m^{1-\delta}) $ (for any $ \delta > 0 $) will falsify the strong exponential time hypothesis.

\paragraph{Our Results}
Given the significance of the decremental SSSP problem, it is important to understand its time complexity.

In this paper, we obtain a near-linear time algorithm for decremental \mbox{$ (1 + \epsilon) $}-approximate SSSP in weighted undirected graphs.
Its total update time is $ m^{1 + O(\log^{5/4} ((\log{n}) / \epsilon) / \log^{1/4}{n})} \log{W} $ and it maintains an estimate of the distance between the source node and every other node, guaranteeing constant worst-case query time.
The algorithm is randomized and assumes an oblivious adversary who fixes the sequence of updates in advance, an assumption that so far was also made for all other results on approximate decremental SSSP utilizing randomization.
The algorithm is always correct and the bound on its total update time holds in expectation, which makes it a Las Vegas algorithm.
In both, the weighted and the unweighted setting, our algorithm significantly improves upon previous algorithms, leaving room for running time improvements only with respect to subpolynomial factors, which so far has only been achieved in the very dense regime~\cite{BernsteinC16,Bernstein17}.
%
%In the unweighted case, our algorithm significantly improves our previous algorithm in \cite{HenzingerKNSODA14} as discussed above. 
%%
%There was no previous algorithm designed specifically for weighted undirected graphs, and the previous best running time for this case comes from our $ mn^{0.9 + o(1)} $ time for weighted directed graphs~\cite{HenzingerKNSTOC14,HenzingerKNICALP15}.

As a consequence of our techniques we also obtain an algorithm for the all-pairs shortest paths (APSP) problem.
For every integer $ k \geq 2 $ and every $ 0 < \epsilon \leq 1 $, we obtain a randomized decremental $ ((2 + \epsilon)^k - 1) $-approximate APSP algorithm with query time $ O (k^k) $ and total update time $ m^{1 + 1/k + O(\log^{5/4} ((\log{n}) / \epsilon) / \log^{1/4}{n}) } \log^2{W} $ in expectation.
We remark that for $ k = 2 $ and $ 1 / \epsilon = \polylog{n} $ our result gives a $ (3 + \epsilon) $-approximation with constant query time and total update time $ m^{1 + 1/2 + o(1)} \log{W} $.
For very sparse graphs with $ m = \Theta(n) $, this is almost optimal in the sense that it almost matches the static running time~\cite{ThorupZ05} of $ O (m \sqrt{n}) $, providing stretch of $ 3 + \epsilon $ instead of $ 3 $ as in the static setting.
Our result on approximate APSP has to be compared with the following prior work.
For weighted directed graphs Bernstein~\cite{Bernstein16} gave a randomized decremental $ (1 + \epsilon) $-approximate APSP algorithm with constant query time and total update time $ \tilde O (m n \log{W}) $.
For unweighted undirected graphs there are two previous results that improve upon this update time at the cost of larger approximation error.
First, for any integer $ k \geq 2 $, Bernstein and Roditty~\cite{BernsteinR11} gave a randomized decremental $ (2 k - 1 + \epsilon) $-approximate APSP algorithm with constant query time and total update time $ m n^{1/k + O(1/\sqrt{\log n}))} $.
Second, for any integer \mbox{$ k \geq 2 $}, Abraham et al.~\cite{AbrahamCT14} gave a randomized decremental $ 2^{O(\rho k)} $-approximate APSP algorithm for unweighted undirected graphs with query time $ O (k \rho) $ and total update time $ \tilde O (m n^{1/k}) $, where $ \rho = (1 + \lceil (\log{n^{1-1/k}}) / \log{(m / n^{1-1/k})} \rceil) $.

\paragraph{Outline}
We give preliminaries on decremental approximate shortest path algorithms in \Cref{sec:prelim} and provide a technical overview of our approach in \Cref{sec:technical overview}.
Our algorithm, presented in \Cref{sec:from_sssp_to_balls,sec:from_balls_to_sssp,sec:putting_together}, uses the following hierarchical approach:
Given a decremental approximate SSSP algorithm for distances up to $ D_i $ with total update time $ m^{1 + o(1)} $, we can maintain so-called approximate balls for distances up to $ D_i $ with total time $ m^{1 + o(1)} $ as well.
And given a decremental algorithm for maintaining approximate balls for distances up to $ D_i $ with total update time $ m^{1 + o(1)} $ we can use the approximate balls to define a hop set which allows us to maintain approximate shortest paths for distances up to $ D_{i+1} = n^{o(1)} D_i $ with total update time $ m^{1 + o(1)} $.
This scheme is repeated until $ D_i $ is large enough to cover the full distance range.
We have formulated the two parts of this scheme as reductions.
In \Cref{sec:from_sssp_to_balls} we give a decremental algorithm for maintaining approximate balls that internally uses a decremental approximate SSSP algorithm.
In \Cref{sec:from_balls_to_sssp} we give a decremental approximate SSSP algorithm that internally uses a decremental algorithm for maintaining approximate balls.
In \Cref{sec:putting_together} we explain the hierarchical approach for putting these two parts together and obtain the decremental $ (1 + \epsilon) $-approximate SSSP algorithm with a total update time of $ m^{1 + o(1)} $ for the full distance range.
In addition to this result, the algorithm for maintaining approximate balls, together with a suitable query algorithm, gives us a decremental approximate APSP algorithm.
This algorithm is also given in \Cref{sec:putting_together}.
Finally, we conclude the paper in \Cref{sec:conclusion}.

\section{Preliminaries}\label{sec:prelim}

In this paper we want to maintain approximate shortest paths in an undirected graph $ G = (V, E) $ with positive integer edge weights in the range from $ 1 $ to $ W $, for some parameter $ W $.
The graph undergoes a sequence of \emph{updates}, which might be edge deletions or edge weight increases.
This is called the \emph{decremental setting}.
We denote by $ V $ the set of nodes of $ G $ and by $ E $ the set of edges of $ G $.
We denote by $ n $ the number of nodes of $ G $ and by $ m $ the number of edges of $ G $ before the first edge deletion.

For every weighted undirected graph $ G $, we denote the weight of an edge $ (u, v) $ in $ G $ by $ w_G (u, v) $.
The \emph{distance} $ \dist_G (u, v) $ between a node $ u $ and a node $ v $ in $ G $ is the weight of the shortest path, i.e., the minimum-weight path, between $ u $ and $ v $ in $ G $.
If there is no path between $ u $ and $ v $ in $ G $, we set $ \dist_G (x, y) = \infty $.
For every set of nodes $ U \subseteq V $ we denote by $ E [U] $ the set of edges incident to the nodes of $ U $, i.e., $ E [U] = \{ (u, v) \in E \mid u \in U \} $.\footnote{Since $ G $ is an undirected graph, this definition is equivalent to $ E [U] = \{ (u, v) \in E \mid u \in U \text{ or } v \in U \} $.}
Furthermore, for every set of nodes $ U \subseteq V $, we denote by $ G|U $ the subgraph of $ G $ induced by the nodes in $ U $, i.e., $ G|U $ contains all edges $ (u, v) $ such that $ (u, v) $ is contained in $ E $ and $ u $ and $ v $ are both contained in $ U $, or in short: $ G|U = (U, E \cap U^2) $.
Similarly, for every set of edges $ F \subseteq V^2 $ and every set of nodes $ U \subseteq V $ we denote by $ F|U $ the subset of $ F $ induced by $ U $.

We say that a distance estimate $ \delta (u, v) $ is an $ (\alpha, \beta) $-approximation of the true distance $ \dist_G (u, v) $ if $ \dist_G(u, v) \leq \delta (u, v) \leq \alpha \dist_G (u, v) + \beta $, i.e., $ \delta (u, v) $ never underestimates the true distance and overestimates it with a multiplicative error of at most $ \alpha $ and an additive error of at most $ \beta $.
If there is no additive error, we simply say $ \alpha $-approximation instead of $ (\alpha, 0) $-approximation.

In our algorithms we will use graphs that do not only undergo edge deletions and edge weight increases, but also edge insertions.
For such a graph $ H $, we denote by $ \mathcal{E} (H) $ the number of edges ever contained in~$ H $, i.e., the number of edges contained in $ H $ before any deletion or insertion plus the number of inserted edges.
We denote by $ \mathcal{W} (H) $ the number of updates to the edges in $ H $.
Similarly, for a set of edges $ F $, we denote by $ \mathcal{E} (F) $ the number of edges ever contained in $ F $ and by $ \mathcal{W} (F) $ the number of updates to the edges in $ F $.

The central data structure in decremental algorithms for exact and approximate shortest paths is the Even-Shiloach tree (short: ES-tree).
This data structure maintains a shortest paths tree from a root node up to a given depth $ D $.
\begin{lemma}[\cite{EvenS81,HenzingerK95,King99}]\label{lem:ES-tree}
There is a data structure called ES-tree that, given a weighted directed graph $ G $ undergoing deletions and edge weight increases, a root node $ s $, and a depth parameter $ D $, maintains, for every node $ v $ a value $ \delta (v, s) $ such that $ \delta (v, s) = \dist_G (v, s) $ if $ \dist_G (v, s) \leq D $ and $ \delta (v, s) = \infty $ if $ \dist_G (v, s) > D $.
It has constant query time and a total update time of $ O (mD + n) $.
\end{lemma}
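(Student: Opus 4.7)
The plan is to describe the classical ES-tree data structure and analyze its complexity, essentially following Even--Shiloach with the weighted adaptation of King. For each node $ v $ we maintain a \emph{level} $ \delta(v, s) $ (initialized by running Dijkstra from $ s $ and capping at $ D $, setting $ \delta(v, s) = \infty $ for nodes beyond depth $ D $). For each node $ v $ we also maintain its incident edge list together with a pointer to a parent $ p(v) \in \arg\min_{(u,v) \in E}(\delta(u, s) + w_G(u,v)) $; query time is then $ O(1) $ by directly reading $ \delta(v,s) $.

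The invariant we preserve after every update is that for every $ v $ with $ \delta(v, s) \leq D $, we have $ \delta(v, s) = \min_{(u,v) \in E}(\delta(u, s) + w_G(u, v)) $, and otherwise $ \delta(v, s) = \infty $; combined with the fact that this equation characterizes $ \dist_G(v, s) $ whenever the minimum does not exceed $ D $, this implies the desired correctness. When an edge $ (u, v) $ is deleted or has its weight increased, we check whether the equation still holds at $ v $ (it can only fail if the update affected the edge realizing $ v $'s parent pointer). If it fails, we rescan $ v $'s incident edges to recompute the minimum. If $ v $'s level strictly increases as a result, we enqueue every neighbor $ x $ that had $ v $ as parent and recursively repeat: propagation continues as long as levels strictly increase, which, since distances in the decremental setting are monotone non-decreasing, terminates with a correct configuration.

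For the total update time analysis, the key quantity is the total number of times a node's level increases. Since $ \delta(v, s) $ is non-decreasing throughout the sequence of updates and never exceeds $ D $ before it becomes $ \infty $, each node undergoes at most $ D + 1 $ distinct level values over the entire sequence of updates. Each level increase at $ v $ costs $ O(\deg(v)) $ work (to rescan the incident edges for the new minimum plus notify children). Summing yields
\[
\sum_{v \in V} O(D \cdot \deg(v)) = O(m D).
\]
Handling of edge deletions and weight increases at $ v $ that do not change $ v $'s level is $ O(1) $ amortized using the parent pointer and a heap indexed by candidate labels; using a bucket-based priority queue (integer weights) or the standard monotone heap trick avoids the logarithmic factor.

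The main subtle point is that an edge-weight increase may, in a single step, raise $ v $'s level by more than one unit, so the ``rescan on every unit level bump'' style of argument used for unweighted graphs does not apply verbatim. We must charge work not per integer level but per \emph{call to rescan}, and argue that rescans happen only when the level strictly increases; since level values come from the bounded set $ \{0, 1, \dots, D, \infty\} $, the number of rescans per node is still $ O(D) $. A similar care is needed when propagating to children: we enqueue a child $ x $ only when its stored parent distance is invalidated, so the amortization stays clean and the $ O(mD) $ bound follows.
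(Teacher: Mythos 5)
The paper does not prove this lemma; it is imported verbatim from the cited works of Even--Shiloach, Henzinger--King, and King, so there is no in-paper argument to compare against. Your reconstruction is the standard one and is correct: the one delicate point is the one you identify, namely that work must be charged to strict level increases (of which there are at most $D+1$ per node, each paying $O(\deg(v))$ to notify neighbors) rather than to unit-level bumps, and that a parent-edge invalidation which does \emph{not} raise a node's level must be absorbed by the $O(1)$ heap-key update already charged to the endpoint whose level increased (or to the single edge update), not by a full rescan of that node's adjacency list --- a literal ``rescan on every parent invalidation'' would not be covered by the charging argument. With that bookkeeping, which your final paragraphs supply via the heap of candidate labels, the $O(mD)$ total update time and $O(1)$ query time follow.
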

Note that Dinitz, as part of his max-flow algorithm~\cite{Dinic70}, earlier developed an algorithm with similar guarantees for the decremental single-source single-sink shortest path problem~\cite{Dinitz06}.

Recent approaches for solving approximate decremental SSSP and APSP use special graphs called \emph{emulators}.
An $ (\alpha, \beta) $-emulator $ H $ of a graph $ G $ is a graph containing the nodes of $ G $ such that $ \dist_G (u, v) \leq \dist_H (u, v) \leq \alpha \dist_G (u, v) + \beta $ for all nodes $ u $ and~$ v $.\footnote{For the related notion of a spanner we additionally have to require that $ H $ is a subgraph of $ G $.}
Maintaining exact distances on~$ H $ provides an $ (\alpha, \beta) $-approximation of distances in~$ G $.
As good emulators are sparser than the original graph this is usually more efficient than maintaining exact distances on~$ G $.
However, the edges of $ H $ also have to be maintained while $ G $ undergoes updates.
For unweighted, undirected graphs undergoing edge deletions, the emulator of Thorup and Zwick (based on the second spanner construction in~\cite{ThorupZ06}), which provides a relatively good approximation, can be maintained quite efficiently~\cite{BernsteinR11}.
However the definition of this emulator requires the occasional insertion of edges into the emulator.
Thus, it is not possible to run a purely decremental algorithm on top of it.

There have been approaches to design algorithms that mimic the behavior of the classic ES-tree when run on an emulator that undergoes insertions.
The first approach by Bernstein and Roditty~\cite{BernsteinR11} extends the ES-tree to a fully dynamic algorithm and analyzes the additional work incurred by the insertions.
The second approach was introduced by us in~\cite{HenzingerKNSICOMP16} and is called \emph{monotone ES-tree}.
It basically ignores insertions of edges into $ H $ and never decreases the distance estimate it maintains.
However, this algorithm does not provide an $ (\alpha, \beta) $-approximation on \emph{any} arbitrary $ (\alpha, \beta) $-approximate emulator as it needs to exploit structural properties of the emulator to guarantee the approximation.
In~\cite{HenzingerKNSICOMP16} we gave an analysis of the monotone ES-tree when run on a specific $ (1 + \epsilon, 2) $-emulator and in the current paper we use a different analysis for our new algorithms.
If we want to use the monotone ES-tree to maintain $ (\alpha, \beta) $-approximate distances up to depth $ D $ we will set the maximum level in the monotone ES-tree to $ L = \alpha D + \beta $.
The running time of the monotone ES-tree as analyzed in~\cite{HenzingerKNSICOMP16} is as follows.
\begin{lemma}\label{lem:running_time_monotone_ES_tree}
For every $ L \geq 1 $, the total update time of a monotone ES-tree up to maximum level $ L $ on a graph $ H $ undergoing edge deletions, edge insertions, and edge weight increases is $ O ({\mathcal{E} (H) \cdot L + \mathcal{W} (H)} + n) $.
\end{lemma}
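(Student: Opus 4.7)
The plan is an amortized charging argument that distributes the total work across (i) level-increase events at nodes, (ii) individual weight-increase events, and (iii) $O(1)$ bookkeeping per edge insertion or deletion. Monotonicity forces each level $\ell(v)$ to be an integer in $\{0,\dots,L\}\cup\{\infty\}$ that only grows, so every node undergoes at most $L+1$ level changes over the whole update sequence. The main obstacle is converting the per-event scanning cost, which depends on the \emph{time-varying} degree of a node, into the static quantity $\mathcal{E}(H)$.

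First I would fix the data structures. Each node $v$ holds its current level $\ell(v)$ together with an adjacency list of the edges presently incident to $v$ in $H$; in addition, every incident edge $(u,v)$ is stored, on $v$'s side, in a bucket indexed by the integer key $\ell(u) + w_H(u,v)$, so that the parent edge is read off as the minimum-indexed non-empty bucket. The update handlers are then: an insertion of $(u,v)$ simply appends the edge to both adjacency lists and places it into the appropriate bucket on each side in $O(1)$ time, attempting no level decrease; a deletion is $O(1)$ bookkeeping unless the edge was a parent edge, in which case the level-increase routine is triggered at the affected endpoint; and a weight increase of $(u,v)$ moves the edge to a higher bucket on both sides in $O(1)$ time and triggers the level-increase routine if the edge was a parent edge. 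The level-increase routine at a node raises $\ell$ one unit at a time, at each step adjusting the bucket position of the node on its neighbors' sides and notifying any neighbor whose parent pointer was pointing back.

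The core of the analysis is charging the scanning work edge by edge. When $\ell(v)$ strictly increases by one unit, the routine touches each edge presently incident to $v$ to move $v$ between buckets at the neighbor side and to propagate notifications; this costs $O(\deg_H(v))$ at that moment. Rather than summing column by column, I charge each \emph{edge occurrence} separately: an edge that is present in $H$ throughout some time interval absorbs $O(1)$ charge for every level increase of either endpoint occurring inside that interval. Since each endpoint has at most $O(L)$ level increases over the whole sequence, a single edge occurrence absorbs at most $O(L)$ charges. Summing over all edge occurrences, whose count is exactly $\mathcal{E}(H)$, yields $O(\mathcal{E}(H) \cdot L)$ total scanning work. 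The $O(1)$ bookkeeping per insertion or deletion contributes a further $O(\mathcal{E}(H))$, and the direct cost of each weight-increase event is $O(\mathcal{W}(H))$.

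The most delicate point to verify is that every single level-raising step really can be executed in $O(\deg_H(v))$ work without hidden logarithmic factors. This relies on the bucket-based priority queue, which is feasible because the keys are integers bounded by $L$ and the per-node minimum-pointer moves only forward, and on the fact that the cascade of notifications to the former children of $v$ can be carried out simply by traversing $v$'s adjacency list rather than through any global data structure. Once this is confirmed, summing all three contributions produces the claimed total update time of $O(\mathcal{E}(H)\cdot L + \mathcal{W}(H))$.
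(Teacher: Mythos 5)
Your proposal is correct and follows essentially the same argument the paper relies on (deferred to the cited FOCS 2013 analysis): each node's level is monotone and bounded by $L$, each level increase costs $O(\deg)$ charged to the currently incident edge occurrences, each edge occurrence absorbs $O(L)$ charges for a total of $O(\mathcal{E}(H)\cdot L)$, and insertions, deletions, and weight increases contribute $O(1)$ bookkeeping each. Your observation about using bucket queues rather than the comparison heaps in the pseudocode is the right way to realize the bound with no hidden logarithmic factors, though such factors would in any case be absorbed by the $\tilde O$ notation wherever the lemma is applied.
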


\section{Technical Overview}\label{sec:technical overview}

In the following we explain the main ideas of this paper, which lead to an algorithm for maintaining a hop set of a graph undergoing edge deletions.

\paragraph{General Idea}
With the well-known algorithm of Even and Shiloach we can maintain a shortest paths tree from a source node up to a given depth $ D $ under edge deletions in time $ O (m D) $.
In unweighted graphs, all simple paths have at most~$ n - 1 $ edges and therefore we can set $ D = n $ to maintain a full shortest paths tree.
In weighted graphs with positive integer edge weights from $ 1 $ to~$ W $, all simple paths have weight at most~$ (n - 1) W $ and therefore we can set $ D = nW $ to maintain a full shortest paths tree.
Using an established rounding technique~\cite{KleinS97,Cohen98,Zwick02,Bernstein09,Madry10,Bernstein16,Nanongkai14}, one can use this algorithm to maintain $ (1 +\epsilon) $-approximate single-source shortest paths up to $ h $ edges in time $ O (m h \log{(n W)} / \epsilon) $.
By setting $ h = n $, we can use this algorithm to maintain a full approximate shortest paths tree, even in weighted graphs.
This algorithm would be very efficient if the graph had a small hop diameter, i.e., if for any pair of nodes there were a shortest path with a small number of edges.
Our idea is to artificially construct such a graph.

To this end we will use a so-called \emph{hop set}.
An \emph{$ (h, \epsilon) $- hop set} $ F $ of a graph $ G = (V, E) $ is a set of weighted edges $ F \subseteq V^2 $, where the weight of each edge $ (u, v) \in F $ is at least $ \dist_G (u, v) $, such that in the graph $ H = (V, E \cup F) $ there exists, for every pair of nodes $ u $ and $ v $, a path from $ u $ to $ v $ of weight at most $ (1 + \epsilon) \dist_G (u, v) $ and with at most $ h $ hops.
In this terminology, the number of hops of a path is its number of edges.
If we run the approximate SSSP algorithm on $ H $, we obtain a running time of $ O ((m + |F|) h \log{(nW)} / \epsilon) $.
In our algorithm we will obtain an $ (n^{o(1)}, \epsilon) $-hop set of size $ m^{1 + o(1)} $ and thus, given the hop set, the running time will be $ m^{1 + o(1)} \log{(nW)} / \epsilon $.
It is however not enough to simply construct the hop set at the beginning.
We also need a dynamic algorithm for maintaining the hop set under edge deletions in $ G $.
We will present an algorithm that performs this task also in almost linear time over all deletions.

Roughly speaking, we achieve the following.
Given a graph $ G = (V, E) $ undergoing edge deletions, we can maintain a restricted hop set $ F $ such that, for all pairs of nodes $ u $ and $ v $, if the shortest path~$ \pi $ from $ u $ to~$ v $ in~$ G $ has $ h \geq n^{1/q} $ hops, then in the shortcut graph $ H = (V, E \cup F) $ there is a path from $ u $ to $ v $ of weight at most $ (1 + \epsilon) \dist_G (u, v) $ and with at most $ \lceil h / n^{1/q} \rceil \log{n} $ hops.
Our high-level idea for maintaining an (unrestricted) $ (n^{o(1)}, \epsilon) $ hop set is the following hierarchical approach.
We start with $ H_0 = G $ to maintain a hop set $ F_1 $ of $ G $, which reduces the number of hops by a factor of $ \log{n} / n^{1/q} $ at the cost of a multiplicative error of $ 1 + \epsilon $.
Given $ F_1 $, we use the shortcut graph $ H_1 = (V, E \cup F_1) $ to maintain a hop set $ F_2 $ of $ G $ that reduces the number of hops by another factor of $ \log{n} / n^{1/q} $ introducing another error of $ 1 + \epsilon $.
By repeating this process $ q $ times we arrive at a hop set that guarantees, for all pairs of nodes $ u $ and $ v $, a path of weight at most $ (1 + \epsilon)^q \dist_G (u, v) $ and with at most $ (\log{n})^{q} $ hops.
\Cref{fig:hierarchical_approach} visualizes this hierarchical approach.

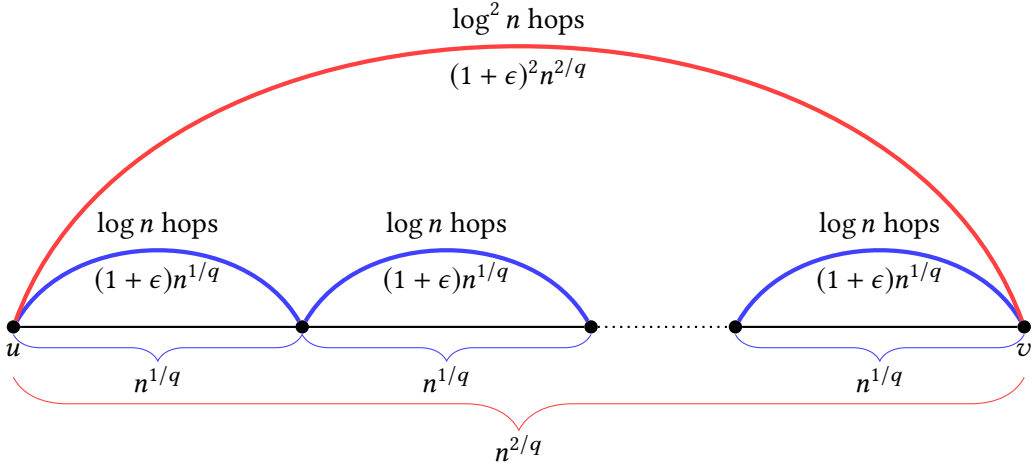
\begin{figure}[htbp!]
\centering
\begin{tikzpicture}[scale=0.95]
\tikzstyle{vertex}=[circle,fill=black,minimum size=5pt,inner sep=0pt,outer sep=0pt]
\tikzstyle{shortest-path} = [draw,thick]
\tikzstyle{shortest-path-omitted} = [draw,thick,dotted]
\tikzstyle{hop-path-small} = [draw,ultra thick,-,color=blue!75]
\tikzstyle{hop-path-big} = [draw,ultra thick,-,color=red!75]

\node[vertex,label=below:$u$] (v1) at (0, 0) {};
\node[vertex] (v2) at (4, 0) {};
\node[vertex] (v3) at (8, 0) {};
\node[vertex] (v4) at (10, 0) {};
\node[vertex,label=below:$v$] (v5) at (14, 0) {};

\path[shortest-path] (v1) -- (v3);
\path[shortest-path-omitted] (v3) -- (v4);
\path[shortest-path] (v4) -- (v5);

\draw[hop-path-small] (v1) to [bend left=60] node[black,below] {$(1+\epsilon) n^{1/q}$} node[black,above] {$ \log{n} $ hops} (v2);
\draw[hop-path-small] (v2) to [bend left=60] node[black,below] {$(1+\epsilon) n^{1/q}$} node[black,above] {$ \log{n} $ hops} (v3);
\draw[hop-path-small] (v4) to [bend left=60] node[black,below] {$(1+\epsilon) n^{1/q}$} node[black,above] {$ \log{n} $ hops} (v5);

\draw[hop-path-big] (v1) to [bend left=70] node[black,below] {$(1+\epsilon)^2 n^{2/q}$} node[black,above] {$ \log^2{n} $ hops} (v5);

\draw [blue!75,decorate,decoration={brace,amplitude=10pt}] ([yshift=-3pt]v2.center) -- node[black,below,yshift=-8pt] {$n^{1/q}$} ([yshift=-3pt]v1.center);
\draw [blue!75,decorate,decoration={brace,amplitude=10pt}] ([yshift=-3pt]v3.center) -- node[black,below,yshift=-8pt] {$n^{1/q}$} ([yshift=-3pt]v2.center);
\draw [blue!75,decorate,decoration={brace,amplitude=10pt}] ([yshift=-3pt]v5.center) -- node[black,below,yshift=-8pt] {$n^{1/q}$} ([yshift=-3pt]v4.center);

\draw [red!75,decorate,decoration={brace,amplitude=20pt}] ([yshift=-20pt]v5.center) -- node[black,below,yshift=-18pt] {$n^{2/q}$} ([yshift=-20pt]v1.center);

\end{tikzpicture}
\caption{Illustration of the hierarchical approach for maintaining the hop set reduction.
Here $ q = \Theta(\sqrt{\log{n}}) $ and $ u $ and $ v $ are nodes that are at distance $ n^{2/q} $ from each other.
First, we find a hop set that shortcuts all subpaths of weight $ n^{1/q} $ by paths of weight at most $ (1+\epsilon) n^{1/q} $ and with at most $ \log{n} $ hops.
Second, we use the shortcuts of the first hop set to find a hop set that shortcuts the path from $ u $ to $ v $ of weight $ n^{2/q} $ by a path of weight at most $ (1+\epsilon)^2 n^{2/q} $ and with at most $ \log^2{n} $ hops.
}\label{fig:hierarchical_approach}
\end{figure}

The notion of hop set was first introduced by Cohen~\cite{Cohen00} in the PRAM literature and is conceptually related to the notion of emulator.
It is also related to the notion of {\em shortest-path diameter} used in distributed computing (e.g., \cite{KhanKMPT12,Nanongkai14}).
To the best of our knowledge, the only place that this hop set concept was used before in the dynamic algorithms literature (without the name being mentioned) is Bernstein's fully dynamic $(2 + \epsilon)$-approximate APSP algorithm~\cite{Bernstein09}.
There, Bernstein shows that the clustering of Thorup and Zwick~\cite{ThorupZ05} yields an $ (n^{o(1)}, \epsilon) $-hop set by connecting each node with all nodes in its cluster.
In his fully dynamic algorithm, this clustering is recomputed \emph{from scratch} after every edge update.
%There, an $ (n^{o(1)}, \epsilon) $-hop set is essentially recomputed \emph{from scratch} after every edge update, and a shortest-paths data structure is maintained on top of this hop set.
Conceptually, our hop set is a decremental variant of Bernstein's hop set based however on a slightly simpler clustering.
After the preliminary version of our work appeared, Elkin and Neiman~\cite{ElkinN17} and Huang and Pettie~\cite{HuangP17a} proved that the hop set based on the Thorup-Zwick clustering provides a close-to optimal trade-off between hop parameter and size~\cite{AbboudBP17} for $ (1 + \epsilon) $-approximate distances.

\paragraph{Static Hop Set}
We first assume that $ G = (V, E) $ is an unweighted undirected graph and for simplicity we also assume that $ \epsilon $ is a constant.
We explain how to obtain a hop set of $ G $ using a randomized construction of Thorup and Zwick~\cite{ThorupZ06} based on the notion of balls of nodes.
We describe this construction and the hop-set analysis in the following.

Let $ 2 \leq p \leq \log{n} $ be a parameter and consider a sequence of sets of nodes $ A_0, A_1, \ldots, A_p $ obtained as follows.
We set $ A_0 = V $ and $ A_p = \emptyset $ and for $ 1 \leq i \leq p-1 $ we obtain the set $ A_i $ by picking each node of $ V $ independently with probability $ 1 / n^{i/p} $.
The expected size of $ A_i $ is $ n^{1 - i/p} $.
For every node~$ u $ we define the priority of~$ u $ as the maximum~$ i $ such that $ u \in A_i $.
For a node $ u $ of priority $ i $ we define
\begin{equation}
\ball (u) = \{ v \in V \mid \dist_G (u, v) < \dist_G (u, A_{i+1}) \} \label{eq:definition of ball}
\end{equation}
where $ \dist_G (u, A_{i+1}) = \min_{v \in A_{i+1}} \dist_G (u, v) $.
Note that $ \dist_G (u, A_p) = \infty $ and thus if $ u \in A_{p-1} $, then $ \ball (u) = V $.
For each node $ u $ of priority $ i $ the size of $ \ball (u) $ is $ n^{(i+1)/p} $ in expectation by the following argument:
Order the nodes in non-decreasing distance from $ u $.
Each of these nodes belongs to $ A_{i+1} $ with probability $ 1 / n^{(i+1)/p} $ and therefore, in expectation, we need to see $ n^{(i+1)/p} $ nodes until one of them is contained in $ A_{i+1} $.
It follows that the expected size of all balls of priority~$ i $ is at most $ n^{1 + 1/p} $ (the expected size of $ A_i $ times the expected size of $ \ball (u) $ for each node $ u $ of priority $ i $) and the expected size of all balls, i.e., $ \sum_{u \in V} | \ball (u) | $, is at most $ p n^{1 + 1/p} $.

Let $ F $ be the set of edges $ F = \{ (u, v) \in V^2 \mid v \in \ball (u) \} $ and give each edge $ (u, v) \in F $ the weight $ w_F (u, v) = \dist_G (u, v) $.
By the argument above, the expected size of $ F $ is at most $ p n^{1 + 1/p} $.
An argument of Thorup and Zwick~\cite{ThorupZ06} shows that the weighted graph $ H = (V, F) $ has the following property for every pair of nodes $ u $ and $ v $ and any $ 0 < \epsilon \leq 1 $ such that $ 1 / \epsilon $ is integer:\footnote{The requirement that $ 1 / \epsilon $ must be integer is not needed in the paper of Thorup and Zwick; we have added it here to simplify the exposition.}
\begin{equation*}
\dist_G (u, v) \leq \dist_H (u, v) \leq (1 + \epsilon) \dist_G (u, v) + 2 \left( 2 + \frac{2}{\epsilon} \right)^{p-2} \, .
\end{equation*}
Note that the choice of $ \epsilon $ gives a trade-off in the error between the multiplicative part $ (1 + \epsilon) $ and the additive part $ 2 (2 + 2/\epsilon)^{p-2} $.
In the literature, such a graph $ H $ is known as an \emph{emulator} of $ G $ with multiplicative error $ (1 + \epsilon) $ and additive error $ 2 ( 2 + 2 / \epsilon )^{p-2} $.\footnote{In their paper, Thorup and Zwick~\cite{ThorupZ06} actually define a graph $ H' $ whose set of edges is the union of the shortest paths trees from every node $ u $ to all nodes in its ball. This graph has the same approximation error and the same size as $ H $; since $ H' $ is a subgraph of $ G $ it is called a \emph{spanner} of $ G $.}
Roughly speaking, the strategy in their proof is as follows.
Let $ u' $ be the node following~$ u $ on the shortest path~$ \pi $ from $ u $ to $ v $ in~$ G $.
If the edge $ (u, u') $ is also contained in $ H $, then we can shorten the distance to $ v $ by $ 1 $ without introducing any approximation error (recall that we assume that $ G $ is unweighted).
Otherwise, one can show that there is a path~$ \pi' $ with at most $ p $ edges in $ H $ from $ u $ to a node $ v' $ closer to $ v $ than $ u $ such that the ratio between the weight of $ \pi' $ and the distance from $ u $ to $ v' $ is at most $ (1 + \epsilon) $, and, if $ v' = v $, then the weight of $ \pi' $ is at most $ 2 (2 + 2/\epsilon)^{p-2} $.
The proof needs the following property of the balls: for every node $ x $ of priority $ i $ and every node $ y $, either $ y \in \ball (x) $ or there is some node $ z $ of priority $ j > i $ such that $ \dist_G (x, z) = \dist_G (x, A_{i+1}) \leq \dist_G (x, y) $.
We illustrate the proof strategy in \Cref{fig:hop_set_strategy}.

\begin{figure}[htbp!]
\centering
\begin{tikzpicture}
\tikzstyle{vertex}=[circle,fill=black,minimum size=5pt,inner sep=0pt,outer sep=0pt]
\tikzstyle{shortest-path} = [draw,thick,dotted]
\tikzstyle{hop-edge} = [draw,ultra thick,-,color=blue!75]
\tikzstyle{hop-edge-missing} = [draw,ultra thick,dashed,color=blue!75]

\node[vertex,label=below:$u_0$] (u0) at (0, 0) {};
\node[vertex,label=below:$u_1~~$] (u1) at (-0.5, 1) {};
\node[vertex,label=below:$u_2~~$] (u2) at (-1.5, 3) {};
\node[vertex,label=below:$v_0$] (x0) at (1, 0) {};
\node[vertex,label=below:$v_1$] (x1) at (3, 0) {};
\node[vertex,label=below:$v_2$] (x2) at (7, 0) {};
\node[vertex,label=below:$v$] (v) at (10, 0) {};

\path[shortest-path] (u0) -- (v);

\path[hop-edge] (u0) -- (u1);
\path[hop-edge] (u1) -- (u2);
\path[hop-edge] (u2) -- (x2);
\path[hop-edge-missing] (u0) -- (x0);
\path[hop-edge-missing] (u1) -- (x1);

\path[draw,ultra thick,color=black!50,->] (0,-1) -- node[below,color=black]{decreasing distance to $ v $} (10,-1) ;
\path[draw,ultra thick,color=black!50,->] (-2,0) -- node[above,color=black,rotate=90]{increasing priority} (-2,3) ;
\end{tikzpicture}
\caption{Illustration of the approximation argument for $ p = 3 $ priorities. The dotted line is the shortest path $ \pi $ from $ u_0 $ to $ v $ in $ G $. The thick, blue edges are the edges of $ F $ used to shorten the distance to~$ v $. The dashed, blue edges are not contained in $ F $ and imply the existence of edges to nearby nodes of increasing priority. Starting from $ u_0 $, a node of priority $ 0 $, we let $ v_0 $ be the node on~$ \pi $ such that $ \dist_G (u_0, v_0) = r_0 := 1 $, i.e., the neighbor of $ u_0 $ on~$ \pi $. If the edge $ (u_0, v_0) $ is not contained in $ F $, then $ F $ contains an edge $ (u_0, u_1) $ to a node $ u_1 $ of priority at least~$ 1 $ such that $ \dist_G (u_0, u_1) \leq r_0 $. Let $ v_1 $ be the node on~$ \pi $ such that $ \dist_G (u_1, v_1) = r_1 := 1 + 2/\epsilon $. If the edge $ (u_1, v_1) $ is not contained in $ F $, then $ F $ contains an edge $ (u_1, u_2) $ to a node $ u_2 $ of priority at least $ 2 $ such that $ \dist_G (u_1, u_2) \leq r_1 $. Let $ v_2 $ be the node on~$ \pi $ such that $ \dist_G (u_2, v_2) = r_2 := (1 + 2/\epsilon) (2 + 2/\epsilon) $. Since $ 2 $ is the highest priority, $ u_2 $ contains the edge $ (u_2, v_2) $. Note that the weight of these three edges from $ F $ is at most $ r_0 + r_1 + r_2 $ and $ \dist_G (u_0, v_2) \geq r_2 - (r_0 + r_1) $. Since $ r_2 = (1 + 2/\epsilon) (r_0 + r_1) $, the ratio between these two quantities is $ (1 + \epsilon) $.}\label{fig:hop_set_strategy}
\end{figure}
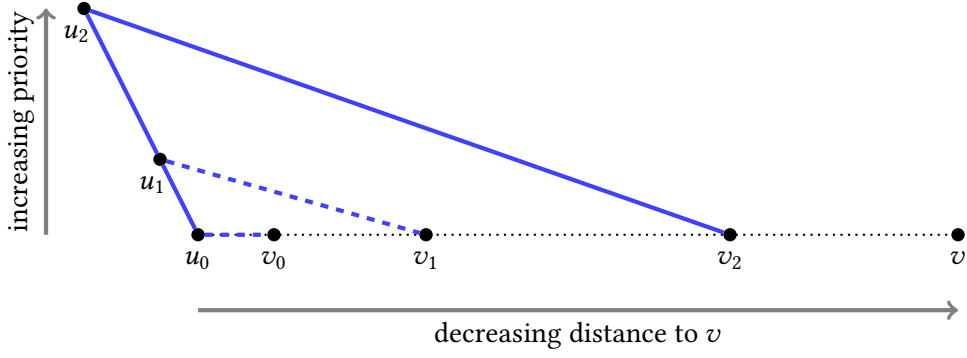

Observe that the same strategy can be used for the following hop-reduction argument:
Given any integer $ \Delta \leq n $, let $ u' $ be the node that is at distance $ \Delta $ from $ u $ on the shortest path from $ u $ to~$ v $ in~$ G $.
If the edge $ (u, u') $ is contained in $ H $, then we can shorten the distance to~$ v $ by $ \Delta $ without introducing any approximation error.
Otherwise, one can show that there is a path $ \pi' $ with at most $ p $ edges in $ H $ from $ u $ to a node $ v' $ closer to $ v $ than $ u $ such that the ratio between the weight of $ \pi' $ and the distance from $ u $ to $ v' $ is at most $ (1 + \epsilon) $, and, if $ v' = v $, then the weight of $ \pi' $ is at most $ 2 (2 + 2/\epsilon)^{p-2} \cdot \Delta $.
Every time we repeat this argument the distance to $ v $ is shortened by at least $ \Delta $.
Therefore there is a path from $ u $ to $ v $ in $ H $ with at most $ p \lceil \dist_G (u, v) / \Delta \rceil $ edges that has weight at most $ (1 + \epsilon) \dist_G (u, v) + 2 ( 2 + 2 / \epsilon )^{p-2} \cdot \Delta $.
Bernstein~\cite{Bernstein09} observed that in this type of argument the latter statement would also be true if we had removed all edges from $ F $ of weight more than $ (1 + 2/\epsilon) (2 + 2/\epsilon)^{p-2} $, which is the maximum weight of the edge to $ v' $ in the proof strategy above outlined in~\Cref{fig:hop_set_strategy}.
We will need this fact in the dynamic algorithm as it allows us to limit the depth of the balls.

By a suitable choice of $ p = \Theta (\sqrt{\log{n}}) $ (as a function of $ n $ and $ \epsilon $) we can guarantee that $ { 2 ( 2 + 2 / \epsilon )^{p-2} } \leq \epsilon n^{1/p} $ and $ n^{1/p} = n^{o(1)} $.
Now define $ q = p $ and $ \Delta_k = n^{k/q} $ for each $ 0 \leq k \leq q-2 $.
Then we have, for every $ 0 \leq k \leq q-2 $ and all pairs of nodes $ u $ and $ v $
\begin{align*}
\dist_G (u, v) \leq \dist_H (u, v) &\leq (1 + \epsilon) \dist_G (u, v) + 2 \left( 2 + \frac{2}{\epsilon} \right)^{p-2} \cdot \Delta_k \\
 &\leq (1 + \epsilon) \dist_G (u, v) + \epsilon n^{1/p} \cdot \Delta_k \\
 &= (1 + \epsilon) \dist_G (u, v) + \epsilon \Delta_{k+1} \, .
\end{align*}
Thus, if $ \Delta_{k+1} \leq \dist_G (u, v) \leq \Delta_{k+2} $, then there is a path from $ u $ to $ v $ in $ H $ of weight at most 
\begin{equation*}
(1 + \epsilon) \dist_G (u, v) + \epsilon \Delta_{k+1} \leq (1 + \epsilon) \dist_G (u, v) + \epsilon \dist_G (u, v) = (1 + 2 \epsilon) \dist_G (u, v)
\end{equation*}
and with at most $ p \lceil \dist_G (u, v) / \Delta_k \rceil \leq (p + 1) \Delta_{k+2} / \Delta_k = (p + 1) n^{2/q} = n^{o(1)} $ edges.
On the other hand, if $ \dist_G (u, v) \leq \Delta_1 = n^{1/q} $, then, as each edge of $ G $ has weight at least~$ 1 $, there is a path from $ u $ to $ v $ of weight $ \dist_G (u, v) $ with at most $ n^{1/p} = n^{o(1)} $ edges.
It follows that $ F $ is an $ (n^{o(1)}, 2 \epsilon) $-hop set of size $ O (p n^{1+1/p}) = n^{1 + o(1)} $.
Using the parameter $ \epsilon' = \epsilon/2 $ instead of $ \epsilon $, we obtain an $ (n^{o(1)}, \epsilon) $-hop set of size $ n^{1 + o(1)} $.

\paragraph{Efficient Construction}
So far we have ignored the running time for computing the balls and thus constructing $ F $, even in the static setting.
Thorup and Zwick~\cite{ThorupZ06} have remarked that a naive algorithm for computing the balls takes time $ O (m n) $.
We can reduce this running time by sampling edges instead of nodes.

We modify the process for obtaining the sequence of sets $ A_0, A_1, \ldots, A_p $ as follows.
We set $ A_0 = V $ and $ A_p = \emptyset $ and for $ 1 \leq i \leq p-1 $ we obtain the set $ A_i $ by picking each edge of $ E $ independently with probability $ 1 / m^{i/p} $ and adding both endpoints of each sampled edge to $ A_i $.
The priority of a node~$ u $ is the maximum $ i $ such that $ u \in A_i $.
We define, for every node~$ u $ of priority $ i $, $ \ball (u) $ just like in \Cref{eq:definition of ball}, but using the new definition of $ A_i $.
Note that the expected size of $ A_i $ is $ O (m^{1 - i/p}) $ for every $ 1 \leq i \leq p-1 $.

The balls can now be computed as follows.
First, following Thorup and Zwick~\cite{ThorupZ05}, we compute, for each $ 1 \leq i \leq p-1 $, $ \dist_G (u, A_i) = \min_{v \in A_i} \dist_G (u, v) $ for every node $ u $ by adding an artificial source node $ s_i $ that is connected to every node in $ A_i $ by an edge of weight $ 0 $.
Using Dijkstra's algorithm, this takes time $ O (p (m + n \log{n})) $.
Second, we compute for every node $ u $ of priority $ i $ a shortest paths tree \emph{up to depth} $ \dist_G (u, A_{i+1}) - 1 $ to obtain all nodes contained in $ \ball (u) $.
Using an implementation of Dijkstra's algorithm that only puts nodes into its queue upon their first visit this takes time $ O ( | E[ \ball (u)] | \log{n} ) $ where $ E[ \ball (u)] = \{ (x, y) \in E \mid x \in \ball (u) \text{ or } y \in \ball (u) \} $ is the set of edges incident to $ \ball (u) $.
Using the same ordering argument as before, our random sampling process for the edges guarantees that the expected size of $ E[ \ball (u)] $ is $ m^{(i+1)/p} $.
For $ 0 \leq i \leq p-1 $ the expected size of $ A_i $ is $ O (m^{1-i/p}) $ and thus these Dijkstra computations take time $ O (m^{1 + 1/p} \log{n}) $ for all nodes of priority $ i $.
By choosing $ p = \Theta(\sqrt{\log{n}})$ as described above we have $ m^{1/p} = m^{o(1)} $ and thus the balls can be computed in time $ m^{1+o(1)} $.

We define $ F $ as the set of edges $ F = \{ (u, v) \in V^2 \mid v \in \ball (u) \} $ and give each edge $ (u, v) \in F $ the weight $ w_F (u, v) = \dist_G (u, v) $.
The distance-preserving and hop-reducing properties of $ F $ still hold as stated above and its expected size is $ O (p m^{1 + 1/p}) $.
Note that $ F $ is not necessarily a sparsification of $ G $ anymore (as the bound on its size is even more than~$ m $).
For our purposes the sparsification aspect is not relevant, we only need the hop reduction.
Thus in the static setting, we can compute an $ (m^{o(1)}, \epsilon) $-hop set (which is also an $ (\epsilon, n^{o(1)}) $-hop set) of expected size $ m^{1+o(1)} $ in expected time $ m^{1+o(1)} $.

\paragraph{Maintaining Balls Under Edge Deletions}
As the graph $ G $ undergoes deletions the hop set has to be updated as well.
Unfortunately, we do not know how to maintain the balls efficiently.
However we can maintain for all nodes $ u $ the approximate ball
\begin{equation*}
\ball (u, D) = \{ v \in V \mid \log{\dist_G (u, v)} < \lfloor \log{\dist_G (u, A_{i+1})} \rfloor \text{ and } \dist_G (u, v) \leq D \}
\end{equation*}
(where $ i $ is the priority of $ u $) in time $ O (p m^{1 + 1/p} D \log{D}) $.
Note that $ \ball (u, D) $ differs from the definition of $ \ball (u) $ in the following ways.
First, we use the inequality $ \log{\dist_G (u, v)} < \lfloor \log{\dist_G (u, A_{i+1})} \rfloor $ instead of the inequality $ \dist_G (u, v) < \dist_G (u, A_{i+1}) $.
This relaxed inequality alone increases the additive error in the hop-reduction argument from $ 2 (2 + 2/\epsilon)^{p-2} \Delta $ to $ 4 (3 + 4/\epsilon)^{p-2} \Delta $ since before, a node $ v $ of higher priority than a given node $ u $ could be found directly at the boundary of the ball of $ u $, whereas now $ v $ could be twice as far away.
The increase in the additive error can easily be compensated by reducing the number of priorities $ p $ by a constant factor.
Second, we limit the balls to a certain depth~$ D $.
By using a small value of $ D $ we will only obtain a restricted hop set that provides sufficient hop reduction for nodes that are relatively close to each other.
We will show later that this is enough for our purposes.
Despite these modifications we clearly have $ \ball (u, D) \subseteq \ball (u) $ and therefore all size bounds still apply.

In the first part of the algorithm for maintaining the balls, we maintain $ \dist_G (u, A_i) $ up to threshold~$ D $ for every $ 1 \leq i \leq p-1 $ and every node $ u $.
We do this by adding an artificial source node $ s_i $ that has an edge of weight $ 0 $ to every node in $ A_i $ and maintain an ES-tree up to depth~$ D $ from $ s_i $.
This step takes time $ O (p m D) $.

Now, for every node $ u $ of priority $ i $ we maintain $ \ball (u, D) $ as follows.
We maintain an ES-tree up to depth
\begin{equation*}
\min \left( 2^{\lfloor \log{\dist_G (u, A_{i+1})} \rfloor} - 1, D \right)
\end{equation*}
and every time $ 2^{\lfloor \log{\dist_G (u, A_{i+1})} \rfloor} $ increases, we restart the ES-tree.
Naively, we incur a cost of $ O (m D) $ for each instance of the ES-tree.
However we can easily implement the ES-tree in such a way that it never processes edges that are not contained in $ E [\ball (u, D)] $.\footnote{If we prefer to use the ES-tree as a ``black box'' we can, in a preprocessing step, find the initial set $ \ball (u, D) $ and only build an ES-tree for this ball. All other nodes will never be contained in $ \ball (u, D) $ anymore as long as the value of $ 2^{\lfloor \log{\dist_G (u, A_{i+1})} \rfloor} $ remains unchanged and therefore we can remove them. This can be done in time $ O (|E [\ball (u, D)] | \log{n}) $ by using an implementation of Dijkstra's algorithm that only puts nodes into its queue upon their first visit.}
Thus, the cost of each instance of the ES-tree is $ O (| E [\ball (u, D)] | D) $.
Remember that $ \ball (u, D) \subseteq \ball (u) $ and that $ E[ \ball(u) ] $ is at most $ m^{(i+1)/p} $ in expectation.
As $ 2^{\lfloor \log{\dist_G (u, A_{i+1})} \rfloor} $ can increase at most $ \log{D} $ times until it exceeds $ D $, we initialize at most $ \log{D} $ ES-trees for the node $ u $.
Therefore the total time needed for maintaining $ \ball (u, D) $ is $ O (m^{(i+1)/p} D \log{D}) $ in expectation.
As there are at most $ O (m^{1-i/p}) $ nodes of priority $ i $ in expectation, the total time needed for maintaining all approximate balls is $ O (p m^{1 + 1/p} D \log{D}) $ in expectation.

\paragraph{Decremental Approximate SSSP}
Let us first sketch an algorithm for maintaining shortest paths from a source node $ s $ with a running time of $ m^{1 + 1/2 + o(1)} $ for which we use $ p = \Theta (\sqrt{\log{n}}) $ priorities.
We set $ \Delta = \lfloor \sqrt{n} \rfloor $, $ p $ such that $ (2 + 4/\epsilon) (3 + 4/\epsilon)^{p-2} \leq \epsilon n^{1/p} $ and $ n^{1/p} = n^{o(1)} $, and $ D = \lceil \epsilon n^{1/p} \rceil $.
We maintain single-source shortest paths up to depth $ D $ from $ s $ using the ES-tree, which takes time $ O (m D) = m n^{1/2 + o(1)} $.
To maintain approximate shortest paths to nodes that are at distance more than~$ D $ from~$ s $ we use the following approach.
We maintain $ \ball (u, D) $ for every node $ u $, as sketched above, which takes time $ O (p m^{1 + 1/p} D \log{D}) = m^{1 + 1/2 + o(1)} $ in expectation.
At any time, we set the hop set to be the set of edges $ F = \{ (u, v) \in V^2 \mid v \in \ball (u, D) \} $ and give each edge $ (u, v) \in F $ the weight $ w_F (u, v) = \dist_G (u, v) $.
By our arguments above, the weighted graph $ H = (V, F) $ has the following property: for every pair of nodes $ u $ and $ v $ such that $ \dist_G (u, v) \geq D $ (where $ D \geq n^{1/p} \Delta $) there is a path $ \pi' $ in $ H $ of weight at most $ (1 + \epsilon) \dist_G (u, v) + \epsilon n^{1/p} \Delta \leq (1 + 2 \epsilon) \dist_G (u, v) $ and with at most $ p \lceil \dist_G (u, v) / \Delta \rceil $ edges.

To maintain approximate shortest paths for nodes at distance more than $ D $ from $ s $ we will now use the hop reduction in combination with the following rounding technique.
We set $ \varphi = \epsilon \Delta / (p + 1) $ and let $ H' $ be the graph resulting from rounding up every edge weight in $ H $ to the next multiple of~$ \varphi $.
By using $ H' $ instead of $ H $ we incur an error of $ \varphi $ for every edge on the approximate shortest path~$ \pi' $.
Thus in $ H' $, $ \pi' $ has weight at most
\begin{align*}
(1 + 2 \epsilon) \dist_G (u, v) +  \lceil p \dist_G (u, v) / \Delta \rceil \cdot \varphi &= (1 + 2 \epsilon) \dist_G (u, v) + \epsilon \dist_G (u, v) \\
 &\leq (1 + 3 \epsilon) \dist_G (u, v) \, .
\end{align*}
The efficiency now comes from the observation that we can run the algorithm on the graph $ H'' $ in which every edge weight in $ H' $ is scaled down by a factor of $ 1 / \varphi $.
The graph $ H'' $ has integer weights and the weights of all paths in $ H' $ and $ H'' $ differ exactly by the factor $ 1 / \varphi $.
Thus, instead of maintaining a shortest paths tree up to depth $ n $ in $ H $ we only need to maintain a shortest paths tree in $ H'' $ up to depth $ n / \varphi = p \sqrt{n} / \epsilon $.
In this way we obtain a $ (1 + 3 \epsilon) $-approximation for all nodes such that $ \dist_G (u, v) \geq D $.

However, we cannot simply use the ES-tree on $ H'' $ because as edges are deleted from~$ G $, nodes might join the approximate balls and therefore edges might be inserted into $ F $ and thus into $ H'' $.
This means that a dynamic shortest paths algorithm running on $ H'' $ would not be situated in a purely decremental setting.
However the insertions have a ``nice'' structure.
We can deal with them by using a previously developed technique, called \emph{monotone ES-tree}~\cite{HenzingerKNSICOMP16}.
The main idea of the monotone ES-tree is to ignore the level decreases made possible by inserting edges.
The hop-set proof still goes through, even though we are not arguing about the current distance in $ H'' $ anymore, but the level of a node $ u $ in the monotone ES-tree.
Maintaining the monotone ES-tree for distances up to $ D $ in $ H'' $ takes time $ O (\mathcal{E} (H'') D) $ where $ \mathcal{E} (H'') $ is the number of edges ever contained in~$ H'' $ (including edges that are inserted over time) and $ D = O (n^{1/2 + 1/p}) $ as explained above.
Each insertion of an edge into $ F $ corresponds to a node joining $ \ball (u, D) $ for some node $ u $.
For a fixed node $ u $ of priority $ i $ there are at most $ \log{D} $ possibilities for nodes to join $ \ball (u, D) $ (namely each time $ \lfloor \log{\dist_G (u, A_{i+1})} \rfloor $ increases) and every time at most $ m^{(i+1)/p} $ nodes will join in expectation.
It follows that $ \mathcal{E} (H) $ is $ m^{1 + o(1)} $ in expectation and the running time of this step is $ m^{1 + 1/2 + o(1)} $ in expectation.

The almost linear-time algorithm is just slightly more complicated.
Here we use $ p = \Theta (\sqrt{\log{n}}) $ priorities and a hierarchy of $ q = \sqrt{p} $ hop reductions.
We further set $ \Delta_k = n^{k/q} $ for each $ 0 \leq k \leq q-2 $.
In the algorithm we will maintain, for each $ 0 \leq k \leq q-2 $ a hop set $ F_k $ such that for every pair of nodes $ u $ and $ v $ with $ \Delta_{k+1} \leq \dist_G (u, v) \leq \Delta_{k+2} $ there is a path from $ u $ to $ v $ in $ H_k = (V, F_k) $ of weight at most $ (1 + 2 \epsilon) \dist_G (u, v) $ and with at most $ p \dist_G (u, v) / \Delta_k \leq p n^{2/q} $ hops.
To achieve this we use the following hierarchical approach.
Given the hop set $ F_k $ we can maintain approximate shortest paths up to depth $ \Delta_{k+2} $ in time $ m^{1 + o(1)} $ and given a data structure for maintaining approximate shortest paths up to depth $ \Delta_k $ we can maintain approximate balls and thus the hop set $ F_{k+1} $ in time $ m^{1 + o(1)} $.
The hierarchy ``starts'' with using the ES-tree as an algorithm for maintaining an (exact) shortest paths tree up to depth $ n^{2/q} $.
Thus, running efficient monotone ES-trees on top of the hop sets and maintaining the hop sets (using efficient monotone ES-trees) go hand in hand.

There are two obstacles in implementing this hierarchical approach when we want to maintain the approximate balls in each of the $ q $ layers of the hierarchy.
First, in our algorithm for maintaining the approximate balls sketched above we have used the ES-tree as an exact decremental SSSP algorithm.
In the hierarchical approach we have to replace the ES-tree with the monotone ES-tree which only provides approximate distance estimates.
This will lead to approximation errors that increase with the number of layers.
Second, by the arguments above the number of edges in $ F_k $ is $ O (m^{1 + 1/p}) $ for each $ 0 \leq k \leq q-2 $.
In the algorithm for maintaining the approximate balls for the next layer, this bound however is not good enough because we run a separate instance of the monotone ES-tree for each node $ u $.
We deal with this issue by running the monotone ES-tree in the subgraph of $ G $ induced by the nodes initially contained in $ \ball (u) $.
For a node $ u $ of priority~$ i $ this subgraph contains $ m_i = m^{(i+1)/p} $ edges in expectation and we can recursively run our algorithm on this smaller graph.
By this process we incur a factor of $ m^{1/p} $ in the running time each time we increase the depth of the recursion.
This results in a total update time of $ m^{1 + O(q/p)} $ which is $ m^{1 + O(1/q)} = m^{1 + o(1)} $ since $ q = \sqrt{p} $.

\paragraph{Extension to Weighted Graphs}
The hop set construction described above only works for unweighted graphs.
However, the main property that we needed was $ \dist_G (u, v) \leq n $ for any pair of nodes $ u $ and $ v $.
Using the rounding technique mentioned above, we can construct for each $ 0 \leq i \leq \lfloor \log{nW} \rfloor $ a graph $ G_i $ such that for all pairs of nodes $ u $ and~$ v $ with $ 2^i \leq \dist_G (u, v) \leq 2^{i+1} $ we have $ \dist_{G_i} (u, v) \leq 4 n / \epsilon $ and the shortest path in $ G_i $ can be turned into a $ (1+\epsilon) $-approximate shortest path in $ G $ by scaling up the edge weights.
We now run $ O (\log{(nW)}) $ instances of our algorithm, one for each graph $ G_i $, and maintain the hop set and approximate SSSP for each of them.

We only need to refine the analysis of the hop-set property in the following way.
Remember that in the analysis we considered the shortest path $ \pi $ from $ u $ to $ v $ and defined the node $ u' $ that is at distance $ \Delta $ from $ u $ on $ \pi $.
If the hop set contained the edge $ (u, u') $ we could reduce the distance to~$ v $ by $ \Delta $.
In weighted graphs (even after the scaling), we cannot guarantee there is a node at distance exactly $ \Delta $ from $ u $ on $ \pi $.
Therefore we define $ u' $ as the furthest node that is at distance \emph{at most} $ \Delta $ from $ u $ on $ \pi $.
Furthermore we define $ u'' $ as the neighbor of $ u' $ on $ \pi $, i.e., $ u'' $ is at distance \emph{at least} $ \Delta $ from $ u $.
Now if the hop set contains the edge $ (u, u') $ we first use the edge $ (u, u') $ from the hop set, and then the edge $ (u', u'') $ from the original graph to reduce the distance to~$ v $ by at least~$ \Delta $ with only $ 2 $~hops.
Note that for unweighted graphs it was sufficient to only use the edges of the hop set.
For weighted graphs we really have to add the edges of the hop set to the original graph in our algorithm.

\section{From Approximate SSSP to Approximate Balls}\label{sec:maintaining_emulator}\label{sec:from_sssp_to_balls}

In the following we show how to maintain the approximate balls of every node if we already have an algorithm for maintaining approximate shortest paths.
In our reduction we will use the algorithm for maintaining approximate shortest paths as a ``black box'', requiring only very few properties.

We can view the balls as a distance oracle with exponentially increasing stretch.
Similar to other distance oracles, we assign integer values called priorities to the nodes and ensure that the balls have the following structural property: for every pair of nodes $ u $ and $ v $, either $ v $ is in the ball of $ u $, or there is some node $ v' $ close to $ u $ that has higher priority than $ u $.
In the first case, we have found an estimate of the distance between $ u $ and $ v $ as the approximate shortest path algorithm is used to maintain an estimate of the distance between $ u $ and all nodes in its ball.
In the second case, we repeat the process for the nodes $ v' $ and $ u $, incurring the `detour' of going from~$ u $ to~$ v' $ first.
As the number of priorities is limited, this strategy succeeds eventually.
In our analysis, we explicitly bound the distance between $ u $ and $ v' $ (and thus the weight of the `detour') by a function $ s (x, l) $, where $ x $ is the distance between $ u $ and $ v $ and $ l $ is the difference in priorities between $ u $ and $ v' $.
In \Cref{sec:putting_together} it will become clear why our bound on $ s (x, l) $ is good enough for our purposes of using the balls as a hop set and as a distance oracle, respectively.
In addition to this structural property, we need to bound the total size of the balls to obtain useful applications.
This bound can be ensured by an appropriate randomized assignment of priorities, with some complications arising from the fact that the black box decremental SSSP algorithm does not provide exact distances.
This also helps for bounding the total update time for dynamically maintaining the balls.
Formally, we prove the following statement in this section.

\begin{proposition}\label{pro:from_sssp_to_balls}
Assume there is a decremental approximate SSSP algorithm \textsc{ApproxSSSP} with the following properties, using fixed values $ \alpha \geq 1 $, $ \beta \geq 0 $, and $ D \geq 1 $:
Given a weighted graph $ G = (V, E) $ undergoing edge deletions and edge weight increases with $ n $ nodes and initially $ m $ edges, and a fixed source node $ s \in V $, \textsc{ApproxSSSP} maintains, in total update time $ T (m, n) $, for every node $ v \in V $ a distance estimate $ \distest (s, v) $ such that:
\begin{enumerate}[label=\textbf{A\arabic*}]
\item $ \distest (s, v) \geq \dist_G (s, v) $ \label{prop:no_underestimation_of_distance}
\item If $ \dist_G (s, v) \leq D $, then $ \distest (s, v) \leq \alpha \dist_G (s, v) + \beta $. \label{prop:approximation_guarantee}
\item After every update in $ G $, \textsc{ApproxSSSP} returns, for every node $ v $ such that $ \distest (s, v) $ has changed, $ v $ together with the new value of $ \distest (s, v) $. \label{prop:returning_updated_nodes}
\end{enumerate}

Then there is a decremental algorithm \textsc{ApproxBalls} for maintaining approximate balls with the following properties: Given a weighted graph $ G = (V, E) $ undergoing edge deletions and edge weight increases with $ n $ nodes and initially $ m $ edges, and parameters $ \p \geq 2 $ and $ 0 < \epsilon \leq 1 $, it assigns to every node $ u \in V $ a number from $ 0 $ to $ \p - 1 $, called the \emph{priority} of $ u $, and maintains for every node $ u \in V $ a set of nodes $ B (u) $ and a distance estimate $ \distest (u, v) $ for every node $ v \in B (u) $ such that:
\begin{enumerate}[label=\textbf{B\arabic*}]
\item For every node $ u $ and every node $ v \in B (u) $ we have $ \dist_G (u, v) \leq \distest (u, v) \leq \alpha \dist_G (u, v) + \beta $. \label{prop:approximation_in_balls}
\item For all $ x \geq 0 $, set $ s (x, 0) = x $, and for all $ x \geq 0 $ and $ l \geq 1 $, set \label{prop:finding_node_of_higher_priority}
\begin{equation*}
s (x, l) = a (a + 1)^{l-1} x + ((a + 1)^l - 1) b / a \, ,
\end{equation*}
where $ a = (1 + \epsilon) \alpha $ and $ b = (1 + \epsilon) \beta $.
Then for every $ 0 \leq i \leq \p - 1 $, every node~$ u $ of priority~$ i $, and every node~$ v $ such that $ s (\dist_G (u, v), \p - 1 - i) \leq D $, either (1) $ v \in B (u) $ or (2) there is some node $ v' $ of priority $ j > i $ such that $ u \in B (v') $ and $ \dist_G (u, v') \leq s (\dist_G (u, v), j - i) $.
\item In expectation, $ \sum_{u \in V} \mathcal{B} (u) = O (\p m^{1+1/\p} \log{D} / \epsilon) $, where $ \mathcal{B} (u) $ denotes the number of nodes ever contained in $ B (u) $ over the sequence of updates to $ G $. \label{prop:size_of_balls}
\item The total update time of \textsc{ApproxBalls} is \label{prop:time_for_maintaining_balls}
\begin{equation*}
t (m, n, \p, \epsilon) = O \left( \left(\p m^{1 + 1/\p} + \sum_{0 \leq i \leq \p - 1} \frac{m}{m^{i/\p}} \cdot T (m_i, n_i) \right) \cdot \log{n} \frac{\log{D}}{\epsilon} + \p \cdot T (m, n) \right)
\end{equation*}
in expectation, where, for each $ 0 \leq i \leq \p - 1 $, $ m_i = O (m^{(i+1)/\p}) $ and $ n_i = O (m^{(i+1)/\p}) $.
\item After every update in $ G $, \textsc{ApproxBalls} returns all pairs of nodes $ u $ and $ v $ such that $ v $ joins $ B (u) $, $ v $ leaves $ B (u) $, or $ \hat{\distest} (u, v) $ changes. \label{prop:returning_updated_nodes_of_balls}
\end{enumerate}
\end{proposition}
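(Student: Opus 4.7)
The plan is to adapt the Thorup--Zwick randomized ball construction to the decremental setting, delegating all distance computations to \textsc{ApproxSSSP} instances invoked as a black box on progressively smaller subgraphs of $G$. Once and for all we fix sets $A_0 = V \supseteq A_1 \supseteq \cdots \supseteq A_{\p-1} \supseteq A_\p = \emptyset$ by sampling, for each $1 \leq i \leq \p-1$, every edge of the initial edge set $E$ independently with probability $\Theta(\ln n / m^{i/\p})$ and placing both of its endpoints into $A_i$; the priority of a node $u$ is the maximal $i$ with $u \in A_i$. A standard hitting-set argument (\Cref{lem:hitting_set_argument}) combined with the fact that only polynomially many graphs arise during the whole update sequence then gives, w.h.p.\ and uniformly over time, $|A_i| = \tilde O(m^{1-i/\p})$ and $|E[\ball(u)]| \leq m^{(i+1)/\p}$ for every node $u$ of priority $i$, where $\ball(u) = \{v : \dist_G(u,v) < \dist_G(u, A_{i+1})\}$ is the exact Thorup--Zwick ball in the current graph.

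For each $1 \leq i \leq \p-1$ we attach a virtual source $z_i$ to every node of $A_i$ by weight-$0$ edges and run one global instance of \textsc{ApproxSSSP} on the resulting graph to produce estimates $\distest(u, A_i)$ for all $u$. For each node $u$ of priority $i$ we then maintain
\begin{equation*}
B(u) = \{v \in V : \distest_u(v) \leq \min(\tau_u, D)\},
\end{equation*}
where $\distest_u$ comes from a separate \textsc{ApproxSSSP} instance rooted at $u$ and run on the subgraph of $G$ induced by the nodes initially in $B(u)$, and where $\tau_u$ is a $(1+\epsilon)$-rounded version of $\distest(u, A_{i+1})$ chosen so that $v \notin B(u)$ forces $\distest(u, A_{i+1}) \leq (1+\epsilon)\distest_u(v) + 1$. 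By the size bound above, this subgraph has at most $m_i = m^{(i+1)/\p}$ edges and $n_i = \min(m_i, n)$ nodes w.h.p. Whenever the rounded value $\tau_u$ strictly increases---which happens at most $O(\log D / \epsilon)$ times before the threshold exceeds $D$---we tear down and rebuild this inner instance on the current initial ball. Properties \ref{prop:approximation_in_balls}, \ref{prop:size_of_balls}, \ref{prop:time_for_maintaining_balls}, and \ref{prop:returning_updated_nodes_of_balls} then follow by routine bookkeeping: \ref{prop:approximation_in_balls} is inherited directly from \ref{prop:no_underestimation_of_distance}--\ref{prop:approximation_guarantee} applied to the inner instances; \ref{prop:size_of_balls} and \ref{prop:time_for_maintaining_balls} come from summing over the $\tilde O(m^{1-i/\p})$ nodes of priority $i$ and the $O(\log D/\epsilon)$ rebuilds per node, each contributing ball size $m_i$ and update time $T(m_i, n_i)$; \ref{prop:returning_updated_nodes_of_balls} uses \ref{prop:returning_updated_nodes} to detect each membership change on the fly.

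The main obstacle is verifying the hop-reducing property \ref{prop:finding_node_of_higher_priority}, which is a careful reformulation of the Thorup--Zwick spanner analysis with two twists: distances are only $(\alpha,\beta)$-approximate, and the threshold $\tau_u$ loses an additional $(1+\epsilon)$-factor from rounding. I will induct on the priority gap $\ell = \p-1-i$. If $v \notin B(u)$, then the rounding guarantee plus \ref{prop:no_underestimation_of_distance}--\ref{prop:approximation_guarantee} produce a node $v_1 \in A_{i+1}$ of priority $j_1 \geq i+1$ with $\dist_G(u, v_1) \leq a \dist_G(u,v) + b$ for $a = (1+\epsilon)\alpha$ and $b = (1+\epsilon)\beta + 1$. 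If $u \in B(v_1)$ we are done with $v' = v_1$. Otherwise we reapply the argument to the pair $(v_1, u)$, obtaining $v_2$ of priority $\geq j_1 + 1 \geq i+2$ with $\dist_G(v_1, v_2) \leq a \dist_G(u, v_1) + b$ and hence $\dist_G(u, v_2) \leq (a+1) \dist_G(u, v_1) + b$ by the triangle inequality. Iterating yields the linear recurrence $r_{\ell+1} \leq (a+1) r_\ell + b$ with $r_1 \leq a x + b$ where $x = \dist_G(u, v)$, whose closed form is exactly the stated lower bound on $s(x, \ell)$. The induction is forced to halt within $\p-1-i$ rounds: a witness $v_\ell$ of priority $\p-1$ has $\dist_G(v_\ell, A_\p) = \infty$, so its threshold is effectively $D$, and the hypothesis $s(x, \p-1-i) \leq D$ keeps $u$ inside the depth-$D$ ball around $v_\ell$, forcing $u \in B(v_\ell)$.
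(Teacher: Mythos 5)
Your proposal is correct and follows essentially the same route as the paper's proof: the same edge-sampling definition of the priority sets $A_i$, the same global \textsc{ApproxSSSP} instances from virtual sources for $\distest(u, A_i)$, the same per-node inner instances on induced subgraphs rebuilt each time the $(1+\epsilon)$-rounded threshold increases, and the same chained application of the ``nearby higher-priority node'' lemma yielding the recurrence $f(l+1) = (a+1)f(l) + b$ whose closed form matches the bound in Property~\ref{prop:finding_node_of_higher_priority}. The only cosmetic slip is the circular phrasing ``nodes initially in $B(u)$'' for the domain of the inner instance, which should read ``nodes initially within distance $\tau_u$ of $u$'' as in the paper's set $R(u)$.
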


Note that by our definition of $ s (x, l) $ we have $ s (x, 1) = a x + b $ for all $ x \geq 0 $ and $ s (x, l + 1) = (a + 1) s (x, l) + b $ for all $ x \geq 0 $ and $ l \geq 1 $.

Our algorithm for maintaining the approximate balls $ B (u) $ for every node $ u \in V $ is as follows:
\begin{enumerate}
\item At the initialization we set $ F_0 = E $ and $ F_\p = \emptyset $ and for $ 1 \leq i \leq \p - 1 $, a set of edges $ F_i $ is obtained from sampling each edge of $ E $ independently with probability $ 1 / m^{i/\p} $.
For every $ 0 \leq i \leq \p - 1 $ we set $ A_i = \{ v \in V \mid \exists (v, w) \in F_i \} $ and for every node $ v \in V $, we set the \emph{priority of $ u $} to be the maximum $ i $ such that $ v \in A_i $.

\item For each $ 1 \leq i \leq \p - 1 $ we run an instance of \textsc{ApproxSSSP} from an artificial source node $ s_i $ that has an edge of weight $ 0 $ to every node in $ A_i $.
We denote by $ \distest (u, A_i) $ the distance estimate provided by \textsc{ApproxSSSP} and set $ \distest (u, A_\p) = \infty $ for every node $ u \in V $.

\item For every $ 0 \leq i \leq \p - 1 $ and every node $ u \in V $ of priority~$ i $, we maintain the value
\begin{equation*}
r (u) = \min \left (\frac{(1 + \epsilon)^{\lfloor \log_{1 + \epsilon}{\distest (u, A_{i+1})} \rfloor} - \beta}{\alpha}, D + 1 \right)
\end{equation*}
and at the initialization and each time $ r (u) $ increases we do the following:
\begin{enumerate}
\item Compute the set of nodes $ R (u) = \{ v \in V \mid \dist_G(u, v) < r (u) \} $.
\item Run an instance of \textsc{ApproxSSSP} from $ u $ in $ G | R (u) $, the subgraph of $ G $ induced by $ R (u) $.
Let $ \distest (u, v) $ denote the estimate of the distance between $ u $ and $ v $ in $ G | R (u) $ maintained by \textsc{ApproxSSSP}.
\item Maintain $ B (u) = \{ v \in V \mid \distest (u, v) \leq \alpha D + \beta \} $: every time $ \distest (u, v) $ changes for some node~$ v $ we check whether the inequality $ \distest (u, v) \leq \alpha D + \beta $ still holds, and if not we remove $ v $ from $ B (u) $.
\end{enumerate}
\end{enumerate}

Note that \textsc{ApproxBalls} has Property~\ref{prop:returning_updated_nodes_of_balls}, i.e., it returns changes in the approximate balls and the distance estimates, which is possible because \textsc{ApproxSSSP} has Property~\ref{prop:returning_updated_nodes}.

\subsection{Relation to Exact Balls}

In the following we compare the approximate balls maintained by our algorithm to the exact balls, as used by Thorup and Zwick~\cite{ThorupZ06}.
We show how the main properties of exact balls translate to approximate balls.
We use the following definition of the (exact) ball of a node $ u $ of priority~$ i $:
\begin{equation*}
\ball (u) = \{ v \in V \mid \dist_G (u, v) < \dist_G (u, A_{i+1}) \} \, .
\end{equation*}
The balls have the following simple property: If $ v \notin \ball (u) $, then there is a node $ v' $ of priority $ j > i $ such that $ \dist_G (u, v') \leq \dist_G (u, v) $.
We show that a relaxed version of this statement also holds for the approximate balls.

\begin{lemma}\label{lem:next_node_in_chain}
Let $ 0 \leq i \leq \p - 1 $, let $ u $ be a node of priority~$ i $, and let $ v $ be a node such that $ \dist_G (u, v) \leq D $.
If $ v \notin B (u) $, then there is a node $ v' $ of priority $ j > i $ such that $ \dist_G (u, v') \leq a \dist_G (u, v) + b $, where $ a = (1 + \epsilon) \alpha $ and $ b = (1 + \epsilon) \beta $.
\end{lemma}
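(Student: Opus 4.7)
The plan is to argue by contrapositive: I will show that whenever $ \dist_G(u, v) \leq r(u) $ in the current graph we must have $ v \in B(u) $, and then unwind the definition of $ r(u) $ to extract a higher-priority node witnessed by the estimate $ \distest(u, A_{i+1}) $.

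The first and most delicate step is to reconcile the fact that $ R(u) $ was frozen at the last moment $ r(u) $ increased with the current state of $ G $. Since edges are only deleted, distances in $ G $ are monotonically non-decreasing over time. Hence if a path $ u = w_0, w_1, \dots, w_k = v $ of current weight at most $ r(u) $ exists in $ G $, then at the time $ R(u) $ was last recomputed we already had $ \dist_G (u, w_j) \leq r(u) $ for every $ j $ (past distances being bounded by the current ones), so every $ w_j $ lies in $ R(u) $. Thus the path survives in $ G|R(u) $, giving $ \dist_{G|R(u)}(u, v) \leq \dist_G(u, v) \leq r(u) \leq D $. Applying property~\ref{prop:approximation_guarantee} to the instance of \textsc{ApproxSSSP} running on $ G|R(u) $ yields $ \distest(u, v) \leq \alpha r(u) + \beta \leq \alpha D + \beta $, so $ v \in B(u) $. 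Contrapositively, $ v \notin B(u) $ forces $ \dist_G(u, v) > r(u) $.

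Combined with the hypothesis $ \dist_G(u, v) \leq D $, this rules out $ r(u) = D $, so the defining minimum is attained by the first term and $ r(u) = ((1 + \epsilon)^L - \beta)/\alpha $ with $ L = \lfloor \log_{1+\epsilon}(\distest(u, A_{i+1}) - 1) \rfloor $; in particular $ \distest(u, A_{i+1}) $ is finite, so $ A_{i+1} \neq \emptyset $. Unpacking the floor gives $ \distest(u, A_{i+1}) - 1 < (1+\epsilon)^{L+1} = (1+\epsilon)(\alpha r(u) + \beta) $, i.e.\ $ \distest(u, A_{i+1}) < (1+\epsilon)\alpha\, r(u) + (1+\epsilon)\beta + 1 = a \cdot r(u) + b $. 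Now pick $ v' \in A_{i+1} $ minimizing $ \dist_G(u, \cdot) $; its priority $ j $ satisfies $ j \geq i + 1 > i $, and property~\ref{prop:no_underestimation_of_distance} applied to the source $ s_{i+1} $ (which is at distance $ 0 $ from every node of $ A_{i+1} $) gives $ \dist_G(u, v') = \dist_G(u, A_{i+1}) \leq \distest(u, A_{i+1}) < a \cdot r(u) + b < a \cdot \dist_G(u, v) + b $, as required.

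The main obstacle I anticipate is exactly the subtle interaction in the second paragraph: $ R(u) $ is a snapshot of an older ball, so one must carefully exploit the monotonicity of distances under edge deletions to argue that every shortcutting path of current weight at most $ r(u) $ is already contained in $ R(u) $. Once this is settled, the rest of the argument is an essentially algebraic manipulation of the definition of $ r(u) $ combined with the two black-box guarantees~\ref{prop:no_underestimation_of_distance} and~\ref{prop:approximation_guarantee}.
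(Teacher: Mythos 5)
Your proposal is correct and takes essentially the same route as the paper: both arguments are contrapositives of ``if $v$ is within (roughly) $r(u)$ of $u$ then $v\in B(u)$,'' established by combining Property~\ref{prop:no_underestimation_of_distance} for the source $s_{i+1}$ with Property~\ref{prop:approximation_guarantee} on $G|R(u)$ and the definition of $r(u)$. Your explicit monotonicity argument that the current shortest path survives in the frozen snapshot $G|R(u)$ is a point the paper glosses over here (it appears in the proof of Property~\ref{prop:approximation_in_balls} instead), so that extra care is welcome but not a deviation.
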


\begin{proof}
We show the following:
If $ \dist_G(u, A_{i+1}) > a \dist_G(u, v) + b $, then $ v \in B (u) $.
The claim then follows from contraposition: If $ v \notin B (u) $, then $ \dist_G (u, A_{i+1}) \leq a \dist_G(u, v) + b $ and thus there exists some node $ v' \in A_{i+1} $ (of priority $ j \geq i + 1 $) such that $ \dist_G (u, v') \leq a \dist_G (u, v) + b $.

Assume that $ \dist_G (u, A_{i+1}) \geq a \dist_G (u, v) + b $.
Since $ \distest (u, A_{i+1}) \geq \dist_G (u, A_{i+1}) $, by Property~\ref{prop:no_underestimation_of_distance} we have
\begin{equation*}
\distest (u, A_{i+1}) \geq \dist_G (u, A_{i+1}) > a \dist_G (u, v) + b = (1 + \epsilon) (\alpha \dist_G (u, v) + \beta)
\end{equation*}
which is equivalent to
\begin{equation*}
\dist_G (u, v) < \frac{\frac{\distest (u, A_{i+1})}{1 + \epsilon} - \beta}{\alpha} \, .
\end{equation*}
Since
\begin{equation*}
(1 + \epsilon)^{\lfloor \log_{1 + \epsilon}{\distest (u, A_{i+1}}) \rfloor} \geq (1 + \epsilon)^{\log_{1 + \epsilon}{\distest (u, A_{i+1})} - 1} = \frac{\distest (u, A_{i+1})}{1 + \epsilon}
\end{equation*}
it follows that
\begin{equation*}
\dist_G (u, v) < \frac{(1 + \epsilon)^{\lfloor \log_{1 + \epsilon}{\distest (u, A_{i+1})} \rfloor} - \beta}{\alpha}
\end{equation*}
Since we have assumed that $ \dist_G (u, v) \leq D $, we get $ \dist_G (u, v) < r (u) $ by the definition of $ r (u) $.
The latter inequality also holds for all nodes on a shortest path $ \pi $ from $ u $ to $ v $ in $ G $, and, as distances in~$ G $ are non-decreasing, this in particular is true at the last point in time where $ r (u) $ has changed.
Therefore, all nodes of $ \pi $ are contained in $ R (u) $, which implies that $ \dist_{G|R(u)} (u, v) = \dist_G (u, v) \leq D $.
Thus, by Property~\ref{prop:approximation_guarantee}, it follows that $ \distest (u, v) \leq \alpha \dist_{G|R(u)} (u, v) + \beta \leq \alpha D + \beta $, i.e., $ v \in B (u) $, as desired.
\end{proof}

We now show that the approximate balls are contained in the exact balls.
The exact balls are useful in our analysis because we can easily bound their size.

\begin{lemma}\label{lem:approximate_ball_included_in_ball}
At any time $ B (u) \subseteq \ball (u) $ for every node $ u $.
\end{lemma}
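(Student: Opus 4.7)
The plan is to unfold the two defining inequalities (the one for $B(u)$ via \textsc{ApproxSSSP} on $G|R(u)$, and the one for $\ball(u)$ via $\dist_G(u, A_{i+1})$) and bridge them through the ``radius'' $r(u)$ that governs the set $R(u)$.

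First I would argue that $B(u) \subseteq R(u)$. A node $v$ lies in $B(u)$ only if the estimate $\distest(u,v)$ produced by the \textsc{ApproxSSSP} instance rooted at $u$ on the induced subgraph $G|R(u)$ is finite (namely, at most $\alpha D + \beta$). Since \textsc{ApproxSSSP} is run on $G|R(u)$, a node outside $R(u)$ cannot be reached at all by this instance, so $v \in B(u)$ forces $v \in R(u)$, and therefore $\dist_G(u,v) \leq r(u)$ by the definition of $R(u)$.

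Next I would establish the key inequality $r(u) < \dist_G(u, A_{i+1})$ where $i$ is the priority of $u$, splitting into two cases. If $\dist_G(u, A_{i+1}) > D$, the minimum in the definition of $r(u)$ forces $r(u) \leq D < \dist_G(u, A_{i+1})$ immediately. Otherwise $\dist_G(u, A_{i+1}) \leq D$, and then Property~\ref{prop:approximation_guarantee} applied to the \textsc{ApproxSSSP} instance rooted at $s_{i+1}$ yields $\distest(u, A_{i+1}) \leq \alpha \dist_G(u, A_{i+1}) + \beta$. Using the elementary bound $(1+\epsilon)^{\lfloor \log_{1+\epsilon}(x) \rfloor} \leq x$, the defining formula for $r(u)$ gives
\begin{equation*}
r(u) \;\leq\; \frac{\distest(u, A_{i+1}) - 1 - \beta}{\alpha} \;\leq\; \frac{\alpha \dist_G(u, A_{i+1}) - 1}{\alpha} \;=\; \dist_G(u, A_{i+1}) - \tfrac{1}{\alpha} \;<\; \dist_G(u, A_{i+1}) \, .
\end{equation*}
Chaining with Step 1 gives $\dist_G(u,v) \leq r(u) < \dist_G(u, A_{i+1})$, i.e.\ $v \in \ball(u)$, as required.

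The only mildly delicate point is keeping the formula for $r(u)$ well-defined when $\distest(u, A_{i+1})$ is too small for the logarithm (e.g.\ $\leq 1$) or is $\infty$; these degenerate cases are routine and do not interact with the main argument. I do not expect any genuine obstacle — the content of the lemma is essentially that the radius $r(u)$, which caps the induced-subgraph search, was chosen precisely so that balls remain strictly inside the exact Thorup–Zwick balls even after absorbing the multiplicative/additive slack of the inner \textsc{ApproxSSSP}.
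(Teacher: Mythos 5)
Your proposal is correct and follows essentially the same route as the paper's proof: both reduce the claim to $B(u) \subseteq R(u) \subseteq \ball(u)$ and establish $r(u) < \dist_G(u, A_{i+1})$ via Property~\ref{prop:approximation_guarantee} together with the bound $(1+\epsilon)^{\lfloor \log_{1+\epsilon} x \rfloor} \leq x$ and the integrality of distances. The only cosmetic difference is the case split (you split on $\dist_G(u,A_{i+1})$ versus $D$, the paper on $\dist_G(u,A_{i+1})$ versus $r(u)$), which changes nothing of substance.
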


\begin{proof}
Let $ R (u) = \{ v \in V \mid \dist_G (u, v) < r (u) \} $ denote the set of nodes at distance of at most $ r (u) $ from $ u $ at the last time $ r (u) $ has increased.
Note that $ B (u) $ is a set of nodes of the graph $ G|R(u) $ and therefore $ B (u) \subseteq R (u) $.
It remains to show that $ R (u) \subseteq \ball (u) $.

Let $ v \in R (u) $ and let $ i $ be the priority of $ u $.
If $ i = \p - 1 $, then the claim is trivially true because $ \ball (u) $ contains all nodes that are connected to $ u $ in $ G $.
Consider thus the case $ 0 \leq i < \p - 1 $.
In the case $ 0 \leq i < \p - 1 $ remember that
If $ \dist_G (u, A_{i+1}) \geq r (u) $, we trivially have $ \dist_G (u, v) < r (u) \leq \dist_G (u, A_{i+1}) $.
If on the other hand $ \dist_G (u, A_{i+1}) < r (u) $, then in particular $ \dist_G (u, A_{i+1}) \leq D $ by the definition of $ r (u) $ and by Property~\ref{prop:approximation_guarantee} we have $ \distest (u, A_{i+1}) \leq \alpha \dist_G (u, A_{i+1}) + \beta $.
It follows that
\begin{align*}
\dist_G (u, v) < r (u) &\leq \frac{(1 + \epsilon)^{\lfloor \log_{1 + \epsilon}{\distest (u, A_{i+1})} \rfloor} - \beta}{\alpha} \\
&\leq \frac{(1 + \epsilon)^{\log_{1 + \epsilon}{\distest (u, A_{i+1})}} - \beta}{\alpha}
= \frac{\distest (u, A_{i+1}) - \beta}{\alpha} \leq \dist_G (u, A_{i+1}) \, .
\end{align*}
In both cases we get $ \dist_G (u, v) < \dist_G (u, A_{i+1}) $ and as this is the defining property of $ \ball (u) $ we have $ v \in \ball (u) $.
\end{proof}

\begin{lemma}\label{lem:size_of_ball}
At any time, for every $ 0 \leq i \leq \p - 1 $ and every node $ u $ of priority~$ i $, we have $ | \ball (u) | = O (m^{(i+1)/\p}) $ and $ | E [\ball (u)] | = O (m^{(i+1)/\p}) $ in expectation over all random choices in sampling the set $ F_{i+1} $.
\end{lemma}

\begin{proof}
The claim is trivially true if $ \ball (u) = \{ u \} $ and we thus assume that $ \ball (u) \supset \{ u \} $ in the following.
It further suffices to prove that $ | E [\ball (u)] | = O (m^{(i+1)/\p}) $ as $ | \ball (u) | \leq 2 | E [\ball (u)] | $ and, as the claim immediately holds for $ i = \p - 1 $, we only need to consider the case $ i < \p - 1  $.
For every edge $ e = (v, w) \in E $ we define $ \dist_G (u, e) = \min (\dist_G (u, v), \dist_G (u, w)) $.
Order the edges of the graph according distance from $ u $ under to this definition of $ \dist_G (u, e) $ for each edge $ e $, where ties are broken in an arbitrary but fixed order.
Let $ e = (u, v) $ be the first edge of $ F_{i+1} $ in this order and let $ E' \subseteq F_i $ be the set of edges that are strictly smaller than $ e $ in this order.
As each edge of the graph is contained in $ F_{i+1} $ with probability $ 1 / m^{(i+1)/\p} $ independently, we have $ |E'| \leq m^{(i+1)/\p} $ in expectation.

Assume without loss of generality that $ \dist_G (u, v) \leq \dist_G (u, w) $, i.e., $ \dist_G (u, e) = \dist_G (u, v) $.
Let $ v' \in \ball (u) $ and let $ e' = (v', w') $ be some edge incident to $ v' $; such an edge must exist because $ \ball (u) \supset \{ u \} $.
Since the node $ v $ is contained in $ A_{i+1} $ we have
\begin{equation*}
\dist_G (u, e') \leq \dist_G (u, v') < \dist_G (u, A_{i+1}) \leq \dist_G (u, v) = \dist_G (u, e)
\end{equation*}
which implies $ e' \in E' $.
It follows that $ E [\ball (u)] \subseteq E' $ and thus $ | E [\ball (u)] | \leq | E' | \leq m^{(i+1)/\p} $ in expectation, as desired.
\end{proof}

\subsection{Properties of Approximate Balls}

We now show that the approximate balls and the corresponding distance estimates have Properties \ref{prop:approximation_in_balls}--\ref{prop:time_for_maintaining_balls}.
We first show that the distance estimates for nodes in the approximate balls have the desired approximation guarantee, although they have been computed in subgraphs of $ G $.

\begin{lemma}[Property~\ref{prop:approximation_in_balls}]\label{lem:approximation_guarantee_in_ball}
For every pair of nodes $ u $ and $ v $ such that $ v \in B (u) $ we have $ \dist_G (u, v) \leq \distest (u, v) \leq \alpha \dist_G (u, v) + \beta $.
\end{lemma}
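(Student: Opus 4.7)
My plan is to establish the two inequalities $\dist_G(u,v) \le \distest(u,v)$ and $\distest(u,v) \le \alpha \dist_G(u,v) + \beta$ separately, in both cases exploiting that $\distest(u,v)$ is maintained by the internal \textsc{ApproxSSSP} instance that the algorithm runs on the induced subgraph $G|R(u)$ rooted at $u$ with depth parameter $D$. The lower bound is essentially immediate: Property~\ref{prop:no_underestimation_of_distance} applied to this inner instance gives $\distest(u,v) \ge \dist_{G|R(u)}(u,v)$, and since removing vertices can only make shortest paths longer, $\dist_{G|R(u)}(u,v) \ge \dist_G(u,v)$.

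For the upper bound, the strategy is to reduce to Property~\ref{prop:approximation_guarantee} applied to the same inner instance, which yields $\distest(u,v) \le \alpha \dist_{G|R(u)}(u,v) + \beta$ whenever $\dist_{G|R(u)}(u,v) \le D$. Combined with the equality $\dist_{G|R(u)}(u,v) = \dist_G(u,v)$, this gives exactly the desired bound. Hence the work reduces to showing that (a) any shortest $u$-$v$ path in the current graph $G$ already lies inside $R(u)$, making distances in $G$ and $G|R(u)$ agree, and (b) this common distance is at most $D$.

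For (a) I would use a monotonicity argument. Let $t_0$ denote the most recent time at which $r(u)$ increased; then $R(u)$ was set to $\{w : \dist_{G_{t_0}}(u,w) \le r(u)\}$ at that moment, and the inner \textsc{ApproxSSSP} was (re)started on $G|R(u)$. Because the graph is decremental, $\dist_G(u,w) \ge \dist_{G_{t_0}}(u,w)$ for every $w$, so every $w$ that is currently within distance $r(u)$ from $u$ was also within distance $r(u)$ at time $t_0$ and therefore still belongs to $R(u)$. Consequently, provided $\dist_G(u,v) \le r(u)$, every internal node on a shortest $u$-$v$ path in $G$ satisfies $\dist_G(u,w) \le \dist_G(u,v) \le r(u)$ and lies in $R(u)$, so the whole path is contained in $G|R(u)$ and $\dist_{G|R(u)}(u,v) = \dist_G(u,v)$. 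For (b), since $r(u) \le D$ by construction, this equality also yields $\dist_{G|R(u)}(u,v) \le D$.

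It therefore remains to establish $\dist_G(u,v) \le r(u)$ for every $v \in B(u)$, which I expect to be the technical core of the proof. Here I would mirror the algebraic manipulation from the proof of Lemma~\ref{lem:approximate_ball_included_in_ball}: the defining formula for $r(u)$ together with Properties~\ref{prop:no_underestimation_of_distance} and~\ref{prop:approximation_guarantee} applied to the outer \textsc{ApproxSSSP} instance tracking $\distest(u, A_{i+1})$ (where $i$ is the priority of $u$) forces $r(u)$ to essentially match $\dist_G(u, A_{i+1}) - 1/\alpha$ from below; combining this with the containment $B(u) \subseteq R(u) \subseteq \ball(u)$ of Lemma~\ref{lem:approximate_ball_included_in_ball} (so that $\dist_G(u,v) < \dist_G(u, A_{i+1})$) and the integrality of edge weights should produce the inequality $\dist_G(u,v) \le r(u)$. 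The delicate point I expect to contend with is the interaction between the three dynamically changing quantities $B(u)$, $R(u)$, and $r(u)$, whose definitions nest multiple rounding and approximation steps.
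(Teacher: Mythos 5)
Your lower-bound argument and your overall plan for the upper bound (reduce to Property~\ref{prop:approximation_guarantee} applied to the inner \textsc{ApproxSSSP} instance on $G|R(u)$, after showing that the relevant shortest path survives in the induced subgraph) match the paper. The genuine gap is exactly in the step you defer as the ``technical core'': the inequality $\dist_G(u,v) \le r(u)$ for every $v \in B(u)$ is not true, and the tools you cite cannot produce it. Membership in $B(u)$ is defined by the condition $\distest(u,v) \le \alpha D + \beta$ --- a condition your proof never invokes --- which only yields $\dist_G(u,v) \le \alpha D + \beta$, a quantity that exceeds $r(u)$ in general. The containment $B(u) \subseteq R(u) \subseteq \ball(u)$ of \Cref{lem:approximate_ball_included_in_ball} only gives $\dist_G(u,v) < \dist_G(u, A_{i+1})$, whereas $r(u)$ is built from $\distest(u,A_{i+1})$ after dividing out a factor of roughly $(1+\epsilon)$ and a factor of $\alpha$, so $r(u)$ can be smaller than $\dist_G(u,A_{i+1})$ by about $(1+\epsilon)\alpha$; the algebra of \Cref{lem:approximate_ball_included_in_ball} bounds $r(u)$ \emph{from above} by $\dist_G(u,A_{i+1}) - 1/\alpha$, which is the wrong direction for your purposes. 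Moreover, since distances only grow, a node admitted to $R(u)$ at distance at most $r(u)$ when $R(u)$ was last computed can later sit at distance well above $r(u)$ while its estimate still satisfies $\distest(u,v) \le \alpha D + \beta$. So your argument covers only those $v$ with $\dist_G(u,v) \le r(u)$ and leaves the range $\dist_G(u,v) > r(u)$ unhandled.

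The paper closes the argument with a different case split. If $\dist_G(u,v) \ge D$, the bound is immediate from the definition of $B(u)$: $\distest(u,v) \le \alpha D + \beta \le \alpha\dist_G(u,v) + \beta$, with no appeal to Property~\ref{prop:approximation_guarantee} at all. If $\dist_G(u,v) \le D$, it argues that the current shortest $u$--$v$ path already had weight at most $D$ at the time $R(u)$ was last computed (weights are non-decreasing) and is therefore contained in $G|R(u)$, giving $\dist_{G|R(u)}(u,v) = \dist_G(u,v) \le D$; i.e., the threshold governing containment in $R(u)$ is taken to be $D$, not $r(u)$. (Be aware that the paper's own wording here describes $R(u)$ as the set of nodes within distance $D$, which does not literally match the algorithm's definition of $R(u)$ with radius $r(u) \le D$; but in any case the inequality you set out to prove is not the right target, and the trivial case $\dist_G(u,v) \ge D$ must be handled separately since Property~\ref{prop:approximation_guarantee} is unavailable there.)
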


\begin{proof}
By Property~\ref{prop:no_underestimation_of_distance} we have $ \distest (u, v) \geq \dist_{G|R(u)} (u, v) $ and since $ G|R(u) $ is a subgraph of $ G $ we have $ \dist_{G|R(u)} (u, v) \geq \dist_G (u, v) $.
Therefore the inequality $ \distest (u, v) \geq \dist_G (u, v) $ follows.

Since $ v \in B (u) $ we have $ \distest (u, v) \leq \alpha D + \beta $.
If $ \dist_G (u, v) \geq D $, then trivially $ \distest (u, v) \leq \alpha D + \beta \leq \alpha \dist_G (u, v) + \beta $.
If $ \dist_G (u, v) < D $, then there is a path $ \pi $ from $ u $ to $ v $ in $ G $ of weight at most $ D $.
This path was also contained in previous versions of $ G $, possibly with smaller weight.
In particular, $ \pi $ was also contained in the version of $ G $ at the last point in time for which the set $ R (u) $ was recomputed.
Since $ v \in B (u) \subseteq R (u) $ we therefore also have $ v' \in R (u) $ for every node $ v' $ on $ \pi $.
It follows that $ \pi $ is contained in $ G|R(u) $ and thus $ \dist_{G|R(u)} (u, v) = \dist_G (u, v) \leq D $.
By Property~\ref{prop:approximation_guarantee} we then have $ \distest (u, v) \leq \alpha \dist_{G|R(u)} (u, v) + \beta = \alpha \dist_G (u, v) + \beta $.
\end{proof}

We show now that the approximate balls have a certain structural property that either allows us to shortcut the path between two nodes or helps us in finding a nearby node of higher priority.
\begin{lemma}[Property~\ref{prop:finding_node_of_higher_priority}]\label{lem:approximation_property}
For all $ x \geq 0 $, set $ s (x, 0) = x $, and for all $ x \geq 0 $ and $ l \geq 1 $, set
\begin{equation*}
s (x, l) = a (a + 1)^{l-1} x + ((a + 1)^l - 1) b / a \, ,
\end{equation*}
where $ a = (1 + \epsilon) \alpha $ and $ b = (1 + \epsilon) \beta $.
Then for every $ 0 \leq i \leq \p - 1 $, every node~$ u $ of priority~$ i $, and every node~$ v $ such that $ s (\dist_G (u, v), \p - 1 - i) \leq D $, either (1) $ v \in B (u) $ or (2) there is some node~$ v' $ of priority $ j > i $ such that $ u \in B (v') $ and $ \dist_G (u, v') \leq s (\dist_G (u, v), j-i) $.
\end{lemma}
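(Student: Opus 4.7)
My plan is to iteratively invoke \Cref{lem:next_node_in_chain} to build a chain of nodes $u_0 = u, u_1, u_2, \ldots$ whose priorities $i = j_0 < j_1 < j_2 < \cdots$ strictly increase, halting as soon as either $v \in B(u_0)$ (which is case (1) of the lemma) or $u \in B(u_k)$ for some $k \geq 1$ (which gives case (2) with $v' := u_k$). Concretely, if $v \notin B(u_0)$, then \Cref{lem:next_node_in_chain} applied to $u_0$ and $v$ produces $u_1$ of priority $j_1 > j_0$ with $\dist_G(u_0, u_1) \leq a \dist_G(u, v) + b$. For $k \geq 1$, provided $u \notin B(u_k)$ and $\dist_G(u_k, u) \leq D$, a further invocation of \Cref{lem:next_node_in_chain} applied now to $u_k$ and $u$ yields $u_{k+1}$ of priority $j_{k+1} > j_k$ with $\dist_G(u_k, u_{k+1}) \leq a \dist_G(u_k, u) + b$.

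Next I would establish by induction on $k \geq 1$ the distance bound
\[
\dist_G(u, u_k) \;\leq\; a (a+1)^{k-1} \dist_G(u, v) + \bigl((a+1)^k - 1\bigr) b / a \, .
\]
The base case $k = 1$ is immediate. For the inductive step, the triangle inequality gives $\dist_G(u, u_{k+1}) \leq \dist_G(u, u_k) + \dist_G(u_k, u_{k+1}) \leq (a+1) \dist_G(u, u_k) + b$, and inserting the inductive hypothesis recovers the bound at $k+1$ via the routine algebraic identity $(a+1) \cdot \bigl((a+1)^k - 1\bigr)/a + 1 = \bigl((a+1)^{k+1} - 1\bigr)/a$. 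Combined with the hypothesis on $s$ and its monotonicity, this yields $\dist_G(u, u_k) \leq s(\dist_G(u, v), k) \leq s(\dist_G(u, v), p - 1 - i) \leq D$, which both certifies that \Cref{lem:next_node_in_chain} is applicable at every step (it needs $\dist_G(u_k, u) \leq D$) and, since $j_k - i \geq k$, delivers the distance bound $\dist_G(u, u_k) \leq s(\dist_G(u, v), j_k - i)$ required by case (2).

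Finally I would argue termination. Since the priorities strictly increase and are capped at $p - 1$, the iteration takes at most $p - 1 - i$ steps. If it has not already halted by step $k = p - 1 - i$, then $u_k$ has priority $p - 1$; for such a maximum-priority node, $A_p = \emptyset$ forces $r(u_k) = D$, and replaying the argument from the second half of the proof of \Cref{lem:approximation_guarantee_in_ball} then shows that $\dist_G(u_k, u) \leq D$ implies $u \in B(u_k)$, so the iteration must terminate in case (2) by this step at the latest.

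The main subtlety will be verifying that the triangle-inequality recursion lines up exactly with the recurrence satisfied by the lower bound on $s$, so that the inductive bound comes out clean with the same constants, and handling the priority-$(p - 1)$ termination case, where the balls behave slightly differently because $A_p$ is empty and $r(u_k) = D$ must be argued from the construction rather than from $\distest(u_k, A_{i+1})$.
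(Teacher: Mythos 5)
Your proposal is correct and follows essentially the same route as the paper's proof: the same iterated application of \Cref{lem:next_node_in_chain} to build a chain of nodes of strictly increasing priority, the same triangle-inequality recurrence $f(l+1) = (a+1)f(l) + b$ with the identical closed form bounded by $s(\dist_G(u,v), l) \leq D$, and the same termination via $A_p = \emptyset$. Your explicit handling of the priority-$(p-1)$ endpoint is a slightly more detailed rendering of what the paper leaves implicit, but it is not a different argument.
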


\begin{proof}
% WOLFRAM ALPHA:
% f(1) = a x + b, f(l+1) = f(l) + a f(l) + b
As noted above, by our definition of $ s (x, l) $ we have $ s (x, 1) = a x + b $ for all $ x \geq 0 $ and $ s (x, l + 1) = (a + 1) s (x, l) + b $ for all $ x \geq 0 $ and $ l \geq 1 $.
Note that since $ s (\cdot, \cdot) $ is non-decreasing in its second argument we have, for all $ 0 \leq l \leq \p - 1 - i $, $ s (\dist_G (u, v), l) \leq s (\dist_G (u, v), \p - 1 - i) \leq D $.

If $ v \in B (u) $, then we are done.
Otherwise, by \Cref{lem:next_node_in_chain}, there is some node $ v_1 $ of priority $ j_1 \geq i + 1 $ such that
\begin{equation*}
\dist_G(v_1, u) \leq a \dist_G(u, v) + b = s (\dist_G (u, v), 1) \leq D \, .
\end{equation*}
Thus, if $ u \in B (v_1) $, then we are done.
Otherwise, by \Cref{lem:next_node_in_chain}, there is some node $ v_2 $ of priority $ j_2 \geq j_1+1 \geq i + 2 $ such that
\begin{equation*}
\dist_G(v_2, v_1) \leq a \dist_G(v_1, u) + b \, .
\end{equation*}
By the triangle inequality we have
\begin{align*}
\dist_G(v_2, u) &\leq \dist_G(v_2, v_1) + \dist_G(v_1, u) \\
 &\leq a \dist_G(v_1, u) + b + \dist_G(v_1, u) \\
 &= (a + 1) \dist_G(v_1, u) + b \\
 &\leq (a + 1) s (\dist_G (u, v), 1) + b \\
 &= s (\dist_G (u, v), 2) \leq D \, .
\end{align*}
We now repeat this argument to obtain nodes $ v_1 $, $ v_2 $, \ldots $ v_l $ of priorities $ j_1 $, $ j_2 $, \ldots, $ j_l $ such that $ j_l \geq i + l $ and
\begin{equation*}
\dist_G (v_l, u) \leq s (\dist_G (u, v), l) \leq D
\end{equation*}
until $ u \in B (v_l) $.
This happens eventually since $ A_\p = \emptyset $ and thus for any node $ v_l $ of priority $ \p - 1 $ such that $ \dist_G (v_l, u) \leq s (\dist_G (u, v), l) \leq D $ the following reasoning applies:
Since $ \distest (v_l, A_\p) = \infty $, we always have $ r (v_l) = D + 1 $ and thus $ \dist_G (v_l, u) < r (v_l) $.
The latter inequality also holds for all nodes on a shortest path $ \pi $ from $ u $ to $ v $ in $ G $, and, as distances in $ G $ are non-decreasing, this in particular is true at the last point in time
where $ r (v_l) $ has changed.
Therefore, all nodes of $ \pi $ are contained in $ R (v_l) $, which implies that $ \dist_{G | R (v_l)} (v_l, u) = \dist_G (v_l, u) $.
Thus, by Property~\ref{prop:approximation_guarantee}, it follows that $ \distest (v_l, u) \leq \alpha \dist_{G | R (v_l)} (v_l, u) + \beta \leq \alpha D + \beta $, i.e., $ u \in B (v_l) $, as desired.
\end{proof}

Next, we bound the size of the system of approximate balls we maintain.
Here we use the fact that we can easily bound the size of the exact ball $ \ball (u) $ for every node~$ u $ and that by our definitions we ensure that the approximate balls are subsets of the exact balls.

\begin{lemma}[Size of Approximate Balls (Property~\ref{prop:size_of_balls})]\label{lem:size_of_approximate_balls}
In expectation, we have $ \sum_{u \in V} \mathcal{B} (u) = O (\p m^{1+1/\p} \log{D} / \epsilon) $, where $ \mathcal{B} (u) $ denotes the number of nodes ever contained in $ B (u) $ over the sequence of updates to $ G $.
\end{lemma}

\begin{proof}
We first bound $ \mathcal{B} (u) $, the number of nodes ever contained in the approximate ball $ B (u) $, of some node $ u $.
Let $ i $ denote the priority of $ u $.
Remember that nodes are joining $ B (u) $ only when $ r (u) $ increases and that
\begin{equation*}
r (u) = \min \left (\frac{(1 + \epsilon)^{\lfloor \log_{1 + \epsilon}{\distest (u, A_{i+1})} \rfloor} - \beta}{\alpha}, D + 1 \right) \, .
\end{equation*}
Thus, $ r (u) $ can only increase if $ \lfloor \log_{1 + \epsilon}{\distest (u, A_{i+1})} \rfloor $ increases and the left term in the minimum is at most $ D + 1 $.
Since $ 1 + \epsilon \geq 1 $ it follows that $ r (u) $ increases only $ O (\log_{1 + \epsilon}{D}) = O (\log{D} / \epsilon) $ times, once it has non-negative value.
As $ B (u) \subseteq \ball (u) $ by \Cref{lem:approximate_ball_included_in_ball}, after every increase of $ r (u) $ only nodes contained in $ \ball (u) $ can join~$ B (u) $.
By \Cref{lem:size_of_ball} the size of $ \ball (u) $ is $ O (m^{(i+1)/\p}) $ in expectation over all random choices in sampling the set $ F_{i+1} $.
Thus, the number of nodes ever contained in $ B (u) $ is $ \mathcal{B} (u) = O (m^{(i+1)/\p} \log{D} / \epsilon) $ in expectation.

As the number of nodes of priority~$ i $ is $ O (m / m^{i/\p}) $ in expectation over all random choices in sampling the set $ F_i $, the number of nodes ever contained in the approximate balls is $ \sum_{u \in V} \mathcal{B} (u) = O (\p m^{1+1/\p} \log{D} / \epsilon) $ in expectation.
Here we use that $ F_i $ and $ F_{i+1} $ are sampled independently.
\end{proof}

Finally, we analyze the running time of our algorithm for maintaining the approximate balls.
Since we use the data structure \textsc{ApproxSSSP} as a black box, the running time of our algorithm depends on the running time of \textsc{ApproxSSSP}.

\begin{lemma}[Running Time (Property~\ref{prop:time_for_maintaining_balls})]
The total time needed for maintaining the sets $ B (u) $ for all nodes $ u \in V $ is
\begin{equation*}
O \left( \left(\p m^{1 + 1/\p} + \sum_{0 \leq i \leq \p - 1} \frac{m}{m^{i/\p}} \cdot T (m_i, n_i) \right) \cdot \log{n} \frac{\log{D}}{\epsilon} + \p \cdot T (m, n) \right)
\end{equation*}
in expectation, where, for each $ 0 \leq i \leq \p - 1 $, $ m_i = O (m^{(i+1)/\p}) $ and $ n_i = O (m^{(i+1)/\p}) $.
\end{lemma}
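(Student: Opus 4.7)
The plan is to charge the total running time to four disjoint sources of work and then apply the bounds already established in \Cref{lem:approximate_ball_included_in_ball,lem:size_of_ball,lem:size_of_approximate_balls}. Let me outline the accounting, working bottom-up from the per-node per-epoch cost.

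\textbf{Initialization and global \textsc{ApproxSSSP}.} First I would account for the sampling of $F_0,\dots,F_\p$, the assignment of priorities, and the setup of the artificial sources $s_1,\dots,s_{\p-1}$; this one-time work is $\tilde O(m)$. Second, the $\p-1$ instances of \textsc{ApproxSSSP} rooted at the $s_i$ run on the full graph $G$ (augmented by $|A_i|\le n$ zero-weight edges, which is absorbed into the black-box), so their combined total update time is $(\p-1)\cdot T(m,n)=\tilde O(T(m,n))$, yielding the final term in the claim.

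\textbf{Number of restarts per node.} For each node $u$ of priority $i$, I need to bound how often the local \textsc{ApproxSSSP} inside $G|R(u)$ is rebuilt. The radius $r(u)$ depends on $\lfloor\log_{1+\epsilon}(\distest(u,A_{i+1})-1)\rfloor$ and is capped at $D$; by Property~\ref{prop:no_underestimation_of_distance} the estimate $\distest(u,A_{i+1})$ is monotonically non-decreasing under deletions and weight increases, so this floor quantity takes at most $O(\log_{1+\epsilon}D)=O(\log D/\epsilon)$ distinct values. Hence the algorithm restarts at most $O(\log D/\epsilon)$ times per node $u$.

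\textbf{Cost per restart.} At each restart I must (i) recompute $R(u)=\{v:\dist_G(u,v)\le r(u)\}$ and (ii) launch a fresh \textsc{ApproxSSSP} on $G|R(u)$. By \Cref{lem:approximate_ball_included_in_ball}, $R(u)\subseteq \ball(u)$, and by \Cref{lem:size_of_ball} we have $|E[\ball(u)]|=O(m^{(i+1)/\p})=O(m_i)$ w.h.p. Step (i) is then a truncated Dijkstra on $G|\ball(u)$ taking $\tilde O(m_i)$ time, and step (ii) costs $T(m_i,n_i)$ since $G|R(u)$ has at most $m_i$ edges and at most $n_i=\min(m_i,n)$ nodes and undergoes only deletions and weight increases within the epoch (this is the key point that makes the black-box reduction valid: while $r(u)$ is unchanged, $R(u)$ is fixed and $G|R(u)$ is purely decremental). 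Multiplying by the $O(\log D/\epsilon)$ restarts gives $\tilde O\bigl((m_i+T(m_i,n_i))\log D/\epsilon\bigr)$ per priority-$i$ node.

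\textbf{Summation and main obstacle.} Summing over nodes of priority $i$, of which there are $\tilde O(m^{1-i/\p})$ in expectation by \Cref{lem:hitting_set_argument}, yields $\tilde O\bigl(m^{1+1/\p}\log D/\epsilon + m^{1-i/\p}\,T(m_i,n_i)\log D/\epsilon\bigr)$. Summing over $i\in\{0,\dots,\p-1\}$ and absorbing the factor $\p=O(\log n)$ into $\tilde O$ produces the first two terms of the claimed bound, matching the formula in Property~\ref{prop:time_for_maintaining_balls}. The remaining bookkeeping for maintaining the set $B(u)=\{v:\distest(u,v)\le\alpha D+\beta\}$ is $O(1)$ per change to $\distest(u,v)$ reported by Property~\ref{prop:returning_updated_nodes}, and the total number of such changes over all $u$ is charged to the \textsc{ApproxSSSP} instances themselves (bounded through \Cref{lem:size_of_approximate_balls}). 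The main obstacle is interleaving the whp size bound from \Cref{lem:size_of_ball} with the expectation over priorities: I would handle this by union-bounding over the polynomially many distinct graphs that ever appear in the deletion sequence so that the $O(m^{(i+1)/\p})$ bound holds at every epoch, and then taking the expectation only over the random sampling that determines $A_i$.
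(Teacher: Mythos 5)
Your proposal is correct and follows essentially the same route as the paper's proof: the same decomposition into initialization, the $\p-1$ global \textsc{ApproxSSSP} instances costing $\p\, T(m,n)$, the $O(\log D/\epsilon)$ restarts per node driven by the monotone increase of $\lfloor\log_{1+\epsilon}(\distest(u,A_{i+1})-1)\rfloor$, the per-restart cost of a truncated Dijkstra plus a fresh $T(m_i,n_i)$ instance on $G|R(u)$ bounded via $R(u)\subseteq\ball(u)$ and \Cref{lem:size_of_ball}, and the final summation over the $\tilde O(m^{1-i/\p})$ expected nodes of each priority. Your derived bound (with the $m^{1-i/\p}$ factor in front of $T(m_i,n_i)$) matches Property~\ref{prop:time_for_maintaining_balls} and the paper's own derivation, so nothing further is needed.
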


\begin{proof}
The initialization in Step~1 of the algorithm, where we determine the sets $ A_0, \ldots, A_\p $ takes time $ O(\p m + n) $.
In Step~2, we run for each $ 1 \leq i \leq \p - 1 $ an instance of \textsc{ApproxSSSP} with depth~$ D $.
This takes time $ \p T (m, n) $.
Step~3, where we maintain for every $ 0 \leq i \leq \p - 1 $ and every node~$ u $ of priority~$ i $ the approximate ball and corresponding distance estimates, can be analyzed as follows.
Remember that every time $ r (u) $ increases we first compute $ R (u) $, the set of nodes that are at distance of at most $ r (u) $ from $ u $.
Using an implementation of Dijkstra's algorithm that only puts nodes into its queue upon their first visit, this takes time $ O (| E[R(u)] | \log{n}) $, where $ E[R(u)] $ is the set of edges incident to $ R(u) $.
By \Cref{lem:approximate_ball_included_in_ball} we have $ R (u) \subseteq \ball (u) $ and by \Cref{lem:size_of_ball} we have $ | \ball (u) | = O (m^{(i+1)/\p}) $ and $ | E [\ball (u)] | = O (m^{(i+1)/\p}) $ in expectation over all random choices in sampling the set $ F_{i+1} $.
Thus, computing $ R (u) $ takes time $ O (m^{(i+1)/\p} \log{n}) $ in expectation.
We then maintain an instance of \textsc{ApproxSSSP} up to depth $ D $ on $ G|R(u) $, the subgraph of~$ G $ induced by~$ R(u) $.
Note that $ G|R(u) $ has at most $ m_i = O (m^{(i+1)/\p}) $ edges and $ n_i = O (m^{(i+1)/\p}) $ nodes in expectation and therefore this takes time $ T (m_i, n_i) $ in expectation.
As $ r (u) $ increases $ O (\log{D} / \epsilon) $ times and the number of nodes of priority~$ i $ is at most $ O (m / m^{i/\p}) $ in expectation over all random choices in sampling the set $ F_i $, Step~3 takes time
\begin{equation*}
O \left( \sum_{0 \leq i \leq \p - 1} \left( T (m_i, n_i) + m^{(i+1)/\p} \log{n} \right) \cdot \frac{m}{m^{i/\p}} \frac{\log{D}}{\epsilon} \right) 
\end{equation*}
in expectation.
Now the claimed total running time claimed for all three steps follows.
\end{proof}

\section{From Approximate Balls to Approximate SSSP}\label{sec:approximation_guarantee_ES_trees_rounded}\label{sec:approximation_guarantee_ES_trees}\label{sec:from_balls_to_sssp}

In the following we show how to maintain an approximate shortest paths tree if we already have an algorithm for maintaining approximate balls.
Our main tool in this reduction is a hop set that we define from the approximate balls.
We will add the ``shortcut'' edges of the hop set to the graph and scale down the edge weights, maintaining the approximate shortest paths with a monotone ES-tree.

The main challenge in this approach is bounding the approximation guarantee of the hop set.
While this encompasses a proof of the approximation guarantee of the static variant of the hop set, such a ``static'' proof is not sufficient in the decremental setting; the monotone ES-tree needs to exploit the additional structure of the approximate balls that define the hop set.
In particular, the following structural property is useful here: for every pair of nodes $ u $ and $ v $, either $ v $ is in the ball of $ u $, or there is some node $ v' $ close to $ u $ that has higher priority than $ u $.
If the distance between $ u $ and $ v' $ is measured by some function $ s (x, l) $, where $ x $ is the distance between $ u $ and $ v' $ and $ l $~is the difference in priorities between $ v' $ and $ u $, then we can give an upper bound on the value of $ s (x, l) $ such that our approximation guarantee proof still goes through.
Concerning the running time, there mainly are two factors that affect the running time of the monotone ES-tree.
The first factor is the size of the hop set, more precisely the total number of edges ever contained in the hop set over all updates to the input graph.
This number is bounded by the by the total size of the (approximate) balls, more precisely by the total number of nodes ever contained in the balls.
The second factor is the the maximum depth considered in the tree.
The maximum depth can be kept relatively small by rounding and scaling down all edge weights.
This give some additional approximation error to account for, but by the right choice of parameters we can balance both costs at a moderate level.
Formally, we prove the following statement in this section.

\begin{proposition}\label{pro:from_balls_to_sssp}
Assume there is a decremental algorithm \textsc{ApproxBalls} for maintaining approximate balls with the following properties, using fixed values $ a \geq \alpha \geq 1 $, $ b \geq \beta \geq 0 $, and $ \hat{D} \geq 1 $.
Given a weighted graph $ G = (V, E) $ undergoing edge deletions and edge weight increases with $ n $ nodes and initially $ m $ edges, and a parameter $ k \geq 2 $, it assigns to every node $ u \in V $ a number from $ 0 $ to $ k - 1 $, called the \emph{priority} of $ u $, and maintains, in total update time $ t (m, n, k) $, for every node $ u \in V $ a set of nodes $ B (u) $ and, for every node $ v \in B (u) $, a distance estimate $ \hat{\distest} (u, v) $ such that:
\begin{enumerate}[label=\textbf{B\arabic*}]
\item For every node $ u $ and every node $ v \in B (u) $ we have $ \dist_G (u, v) \leq \hat{\distest} (u, v) \leq \alpha \dist_G (u, v) + \beta $. \label{prop:approximation_in_balls2}
\item There is a non-decreasing function $ s (\cdot, \cdot) $ such that, for all $ x \geq 0 $, $ s (x, 0) \leq x $ and $ s (x, 1) \leq a x + b $ for some $ a \geq \alpha $ and $ b \geq \beta $ and, for all $ l \geq 1 $, \label{prop:finding_node_of_higher_priority2}
\begin{equation*}
s (x, l + 1) \leq (\alpha + 1 + \epsilon) (\alpha a s (x, l) + \alpha b + \beta) + \beta \, .
\end{equation*}
guaranteeing the following:
For every $ 0 \leq i \leq \p - 1 $, every node $ u $ of priority~$ i $ and every node $ v $ such that $ s (\dist_G (u, v), \p - 1 - i) \leq \hat{D} $, either (1) $ v \in B (u) $ or (2) there exists some node $ v' \in V $ of priority $ j > i $ such that $ u \in B (v') $ and $ \dist_G (u, v') \leq s (\dist_G (u, v'), j - i) $.
\item After every update in $ G $, \textsc{ApproxBalls} returns all pairs of nodes $ u $ and $ v $ such that $ v $ joins $ B (u) $, $ v $ leaves $ B (u) $, or $ \hat{\distest} (u, v) $ changes. \label{prop:returning_updated_nodes_of_balls2}
\end{enumerate}

Then there is an approximate SSSP data structure \textsc{ApproxSSSP} with the following properties:
Given a weighted graph $ G $ undergoing edge deletions and edge weight increases with $ n $ nodes and initially $ m $ edges, a fixed source node $ s $, and parameters $ p $, $ \Delta $, $ D $, and $ \epsilon $ such that
\begin{equation*}
2 \leq p \leq \frac{\sqrt{\log{n}}}{\sqrt{\log{\left( \frac{4 a^3}{\epsilon} \right)}}} \, ,
\end{equation*}
$ \Delta \geq b $, $ n^{1/p} \Delta \leq \hat{D} $, $ D \geq \Delta $ and $ 0 < \epsilon \leq 1 $,
it maintains a distance estimate $ \distest (s, v) $ for every node $ v \in V $ such that:
\begin{enumerate}[label=\textbf{A\arabic*}]
\item $ \distest (s, v) \geq \dist_G (s, v) $ \label{prop:no_underestimation_of_distance2}
\item If $ \dist_G (s, v) \leq D $, then $ \distest (s, v) \leq (\alpha + 2 \epsilon) \dist_G (s, v) + \epsilon n^{1/p} \Delta $ \label{prop:approximation_guarantee2}
\item The total update time of \textsc{ApproxSSSP} is \label{prop:total_update_time_approx_SSSP}
\begin{equation*}
T (m, n, \Delta, D, \epsilon) = t (m, n, p) + O \left(p \left( \alpha D / \Delta + n^{1/p} \right) \left(m + \sum_{u \in V} \mathcal{B} (u) \right) / \epsilon + n \right)
\end{equation*}
where $ \mathcal{B} (u) $ denotes the number of nodes ever contained in $ B (u) $ over the sequence of updates to $ G $.
\item After every update in $ G $, \textsc{ApproxSSSP} returns each node $ v $ such that $ \distest (s, v) $ has changed together with the new value of $ \distest (s, v) $. \label{prop:returning_updated_nodes2}
\end{enumerate}
\end{proposition}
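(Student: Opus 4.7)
The strategy is to derive a hop set $F$ from the maintained balls, take the shortcut graph $H = (V, E \cup F)$, scale and round its weights, and maintain a monotone ES-tree on the resulting integer-weighted graph. Concretely, whenever Property~B3 reports $v \in B(u)$, insert $(u,v)$ into $F$ with weight $\hat{\distest}(u,v)$; keep $H = (V, E \cup F)$, taking the minimum of the two weights whenever both representations of an edge are present. Let $\varphi = \epsilon\Delta/p$, obtain $H''$ from $H$ by rounding each edge weight up to the next multiple of $\varphi$ and dividing by $\varphi$, and maintain a monotone ES-tree on $H''$ rooted at $s$ up to maximum level $L = \lceil(\alpha D + \epsilon n^{1/p}\Delta)/\varphi\rceil = \tilde O(\alpha Dp/(\epsilon\Delta) + p n^{1/p})$. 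Define $\delta(s,v) = \varphi\cdot\ell(v)$ if $\ell(v)\leq L$ and $\infty$ otherwise; Property~A4 is then inherited from the monotone ES-tree, which naturally reports level changes.

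\textbf{Hop-set lemma.} The core analytic step is to show that at any time, for every pair $u,v$ with $\dist_G(u,v) \leq \Delta$ the graph $H$ contains a path of at most $p$ hops from $u$ to $v$ of weight at most $\alpha\dist_G(u,v) + \epsilon n^{1/p}\Delta$. Chaining $\lceil\dist_G(s,v)/\Delta\rceil$ such segments along the $G$-shortest path then yields, for every $v$ with $\dist_G(s,v) \leq D$, a path in $H$ of at most $(p+1)\lceil\dist_G(s,v)/\Delta\rceil$ hops and weight at most $\alpha\dist_G(s,v) + O(\epsilon n^{1/p}\Delta)$. The hop-reduction argument is the one sketched in Figure~\ref{fig:hop_set_strategy}: starting at $u$ of priority $i$, examine the furthest vertex $w$ on the shortest path still within a carefully chosen radius and apply Property~B2. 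Either $w \in B(u)$ (use the $F$-edge $(u,w)$, advancing by $\Delta$ in one hop), or there is a node $u'$ of strictly higher priority with $u \in B(u')$, so $(u',u)\in F$, and we recurse from $u'$. Since priorities strictly increase, the second case fires at most $p-1$ times, and solving the recurrence in B2 bounds the accumulated climbing weight by $\alpha s(\Delta,p-1)+\beta$, which is at most $\epsilon n^{1/p}\Delta$ precisely when $p \leq \sqrt{\log n}/\sqrt{\log(4a^3/\epsilon)}$; the hypothesis $n^{1/p}\Delta \leq \hat D$ ensures that B2 is applicable throughout the climb.

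\textbf{Approximation and running time.} Rounding edge weights up adds at most $\varphi\cdot(\text{hops}) \leq \epsilon\dist_G(s,v)+O(\epsilon\Delta)$ to this path, so its scaled length in $H''$ is at most $L$. The monotone-ES-tree analysis then bounds $\ell(v)$ by the scaled length of any such path that is witnessed throughout the update sequence, yielding $\delta(s,v)\leq(\alpha+\epsilon)\dist_G(s,v)+\epsilon n^{1/p}\Delta$ (Property~A2); Property~A1 holds because every $F$-edge weight is at least $\dist_G$ (Property~B1) and $E$-edges keep their original weights, so the monotone level, which can only grow, stays an upper bound on the true $H$-distance scaled by $\varphi$. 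For the running time, by Lemma~\ref{lem:running_time_monotone_ES_tree} the monotone ES-tree runs in time $O(\mathcal E(H'')\cdot L + \mathcal W(H''))$, and $\mathcal E(H'')\leq m + \sum_u|\mathcal B(u)|$ since each $F$-edge insertion corresponds to a node entering a ball; combined with $t(m,n,p)$ for \textsc{ApproxBalls}, this gives Property~A3.

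\textbf{Main obstacle.} The hardest part is the approximation bound for the monotone ES-tree: unlike a standard ES-tree, it ignores the distance decreases caused by insertions into $F$, so its level $\ell(v)$ cannot be read off a static shortest-paths calculation on the current $H$. One must instead weave the inductive priority-climb of the hop-set lemma together with the monotone update rule, showing that $\ell(v)$ can never be pushed past the bound of the short path that the balls are guaranteed to sustain at every later time. Calibrating $p$, $\Delta$, and $L$ against the recurrence in Property~B2 so that the hop-count bound, the rounding loss, and the level cap all close simultaneously is the delicate piece of the proof.
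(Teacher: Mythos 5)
Your architecture coincides with the paper's: define $F=\{(u,v): v\in B(u)\}$ with weight $\hat{\distest}(u,v)$, form the shortcut graph $H$ on $E\cup F$, round weights up to multiples of $\varphi=\Theta(\epsilon\Delta/p)$ and rescale, and run a monotone ES-tree to level $L=\Theta(\alpha D/\varphi+pn^{1/p})$; the running-time accounting via $\mathcal{E}(H'')\le m+\sum_u|\mathcal{B}(u)|$ and the calibration $(4a^3/\epsilon)^p\le n^{1/p}$ are also exactly the paper's. You have also correctly identified that the approximation bound cannot be read off a static hop-set lemma and must be established by an induction interleaving the priority climb with the monotone update rule, which is what the paper's double induction (on updates and on a hop-counting function $h(u,i)$) does.

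There is, however, one step in your sketch that fails as written: the chaining. Your per-segment lemma gives, for each $\Delta$-length piece of the shortest path, a replacement of weight at most $\alpha\cdot(\text{segment})+\epsilon n^{1/p}\Delta$, and you concatenate $\lceil \dist_G(s,v)/\Delta\rceil$ of them. The additive terms then sum to $\epsilon n^{1/p}\Delta\cdot\lceil\dist_G(s,v)/\Delta\rceil$, i.e.\ up to $\epsilon n^{1/p}D$, not $O(\epsilon n^{1/p}\Delta)$ as you claim; and since $D/\Delta=n^{2/q}$ in the intended application, this loss would destroy the hierarchical composition in \Cref{lem:restricted_approximate_shortest_paths}, where the additive error must be at most $\epsilon n^{1/p}\Delta_k\le\epsilon D_{k-1}$ to be absorbed multiplicatively. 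The missing idea is that after climbing from priority $i$ up to priority $j$ one must advance along the shortest path not by $\Delta$ but by the priority-dependent radius $r_j$, chosen so that the accumulated climbing cost of that segment is at most $\epsilon r_j$, i.e.\ at most $\epsilon$ times the distance actually covered. This converts the per-segment overhead into a multiplicative $(\alpha+\epsilon)$ factor, and the one-time additive term $\epsilon n^{1/p}\Delta$ is paid only on the final segment reaching $s$. In the paper this bookkeeping is carried by the quantities $r_i,w_i,\gamma_i$ and the identity $\epsilon r_i=\gamma_0-\gamma_i+\beta$ (\Cref{lem:alternative_formulation_of_radius}), which is applied inside the induction precisely to recharge the additive budget on each advance. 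Without this device (or an equivalent one), your claimed bound $\alpha\dist_G(s,v)+O(\epsilon n^{1/p}\Delta)$ for the chained path is not obtainable.
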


We assume without loss of generality that the distance estimate maintained by \textsc{ApproxBalls} is non-decreasing.
If \textsc{ApproxBalls} ever reports a decrease we can ignore it because then Property~\ref{prop:approximation_in_balls2} will still hold as distances in $ G $ are non-decreasing under edge deletions and edge weight increases.

\subsection{Algorithm Description}

The algorithm \textsc{ApproxSSSP} maintains the set of edges $ F = \{ (u, v) \in V^2 \mid v \in B (u) \} $ such that, for every node $ u $ and every node $ v \in B (u) $, the edge $ (u, v) $ has weight $ w_F (u, v) = \min (\hat{\distest} (u, v), \hat{\distest} (v, u)) $ if also $ u \in B (v) $ and $ w_F (u, v) = \hat{\distest} (u, v) $ otherwise.
We update $ F $ every time in the algorithm \textsc{ApproxBalls} a node joins or leaves an approximate ball or if the distance estimate $ \hat{\distest} (u, v) $ increases for some pair of nodes $ u $ and $ v $.
By Property~\ref{prop:returning_updated_nodes_of_balls2} this information is returned by \textsc{ApproxBalls} after every update in $ G $.
Thus the set of edges $ F $ undergoes insertions, deletions, and weight increases.

In the following we will define a shortcut graph $ H'' $ with scaled-down edge weights and our algorithm \textsc{ApproxSSSP} will simply run a monotone ES-tree~\cite{HenzingerKNSICOMP16} from $ s $ in $ H'' $.
The monotone ES-tree has property~\ref{prop:no_underestimation_of_distance2}, which is apparent from the pseudocode provided in Algorithm~\ref{alg:monotone_ES_tree}.
We denote the weight of an edge $ (u, v) $ in $ G $ by $ w_G (u, v) $ and define $ H $ as a graph that has the same nodes as $ G $ and contains all edges of $ G $ and $ F $ that have weight at most $ D + n^{1/p} \Delta $.
We set the weight of every edge $ (u, v) $ in $ H $ to $ w_H (u, v) = \min (w_G (u, v), w_F (u, v)) $.
We set
\begin{equation*}
\varphi = \frac{\epsilon \Delta}{p + 1}
\end{equation*}
and define $ H' $ as the graph that has the same nodes and edges as $ H $ and in which every edge $ (u, v) $ has weight
\begin{equation*}
w_{H'} (u, v) = \left\lceil \frac{w_H (u, v)}{\varphi} \right\rceil \cdot  \varphi \, ,
\end{equation*}
i.e., we round every edge weight to the next multiple of $ \varphi $.
Furthermore, we define $ H'' $ as the graph that has the same nodes and edges as $ H' $ and in which every edge $ (u, v) $ has weight
\begin{equation*}
w_{H''} (u, v) = \frac{w_{H'} (u, v)}{\varphi} = \left\lceil \frac{w_H (u, v)}{\varphi} \right\rceil \, ,
\end{equation*}
i.e., we scale down every edge weight by a factor of $ 1 / \varphi $.
We maintain a monotone ES-tree with maximum level
\begin{equation*}
L = (\alpha + 2 \epsilon) D / \varphi + (p + 1) n^{1/p}
\end{equation*}
from $ s $ and denote the level of a node $ v $ in this tree by $ \lev (v) $.
For every node $ v $ our algorithm returns the distance estimate $ \distest (s, v) = \lev (v) \cdot \varphi $.
Note that the graph $ H'' $ has integer edge weights and, as $ F $ might undergo insertions, deletions, and edge weight increases, the same type of updates might occur in $ H'' $.
Furthermore, observe that the rounding guarantees that
\begin{equation*}
w_H (u, v) \leq w_{H'} (u, v) \leq w_H (u, v) + \varphi
\end{equation*}
for every edge $ (u, v) $ of $ H' $.

\subsection{Running Time Analysis}

We first provide the running time analysis.
We run the algorithm in a graph in which we scale down the edge weights by a factor of $ \varphi $ and round up to the next integer.
This makes the algorithm efficient.

\begin{lemma}[Running Time (Property~\ref{prop:total_update_time_approx_SSSP})]
The total update time of a monotone ES-tree with maximum level $ L = (\alpha + 2 \epsilon) D / \varphi + (p + 1) n^{1/p} $ on $ H'' $ is
\begin{equation*}
O \left( p \left( \alpha D / \Delta + n^{1/p} \right) \left(m + \sum_{u \in V} \mathcal{B} (u) \right) / \epsilon + n \right)
\end{equation*}
where $ \mathcal{B} (u) $ denotes the number of nodes ever contained in $ B (u) $ over the sequence of updates to $ G $.
\end{lemma}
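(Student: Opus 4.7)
The plan is to invoke \Cref{lem:running_time_monotone_ES_tree}, which bounds the total update time of the monotone ES-tree on $H''$ up to level $L$ by $O(\mathcal{E}(H'') \cdot L + \mathcal{W}(H''))$. Thus the task reduces to bounding the number $\mathcal{E}(H'')$ of edges ever present in $H''$ and the number $\mathcal{W}(H'')$ of weight-increase events, and then substituting the explicit values of $L$ and $\varphi$.

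For $\mathcal{E}(H'')$, I would view $H''$ as a multigraph whose edges come from two sources: the $m$ edges inherited from $G$ and the edges inherited from $F = \{(u,v) \in V^2 : v \in B(u)\}$. An edge $(u,v) \in F$ can be (re-)inserted only when $v$ enters $B(u)$; hence the number of distinct pairs $(u,v)$ that ever contribute an $F$-edge is at most $\sum_{u \in V} |\mathcal{B}(u)|$. Combining, $\mathcal{E}(H'') \leq m + \sum_{u \in V} |\mathcal{B}(u)|$. Treating $G$-edges and $F$-edges as separate entities in the multigraph also sidesteps the subtlety that when $v$ first enters $B(u)$ the new weight $w_F(u,v) = \hat{\distest}(u,v)$ may be smaller than the current $w_G(u,v)$: within each individual ``incarnation'' the weight is monotone non-decreasing (because $w_G$ only grows and we assumed $\hat{\distest}$ is non-decreasing), so insertions and weight increases can be counted independently.

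For $\mathcal{W}(H'')$, the key point is that after the rounding-and-scaling $w_{H''}(u,v) = \lceil w_H(u,v)/\varphi\rceil$ every edge has a non-negative integer weight, and once that weight exceeds $L$ the edge is irrelevant to the monotone ES-tree (it cannot settle any node at level $\leq L$). Therefore each edge experiences at most $L$ weight-increase events before it becomes irrelevant, and hence $\mathcal{W}(H'') \leq L \cdot \mathcal{E}(H'')$. Plugging this back into the monotone ES-tree bound gives a total update time of $O(L \cdot \mathcal{E}(H''))$.

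All that remains is arithmetic: substitute $\varphi = \epsilon\Delta/(p+1)$ into $L = (\alpha + 2\epsilon)D/\varphi + (p+1)n^{1/p}$. Since $p \leq \log n$, the factor $p+1$ is $\tilde O(1)$, and since $\alpha \geq 1$ we have $\alpha + 2\epsilon = O(\alpha)$, yielding $L = \tilde O\!\left(\alpha D/(\epsilon\Delta) + n^{1/p}\right)$. Using $\epsilon \leq 1$ this is further bounded by $\tilde O\!\left((\alpha D/\Delta + n^{1/p})/\epsilon\right)$. Multiplying by $\mathcal{E}(H'') \leq m + \sum_{u\in V}|\mathcal{B}(u)|$ yields the claimed bound; the ``expected'' qualifier is inherited from whatever randomized upper bound is eventually plugged in for $\sum_u |\mathcal{B}(u)|$. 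The only non-routine step is the separation-into-multigraph argument used to justify counting edge creations and weight increases cleanly despite the interaction of $G$-edges and $F$-edges; everything else is a direct application of \Cref{lem:running_time_monotone_ES_tree} followed by elementary substitution.
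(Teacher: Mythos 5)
Your proposal is correct and follows essentially the same route as the paper: invoke \Cref{lem:running_time_monotone_ES_tree}, bound $\mathcal{E}(H'')$ by $m$ plus the number of nodes ever in the approximate balls, bound $\mathcal{W}(H'')$ by a per-edge cap on weight increases times $\mathcal{E}(H'')$, and substitute $\varphi = \epsilon\Delta/(p+1)$. The only cosmetic difference is that the paper caps the number of weight increases per edge via the explicit weight bound $D + n^{1/p}\Delta$ imposed on edges of $H$ (giving $(D+n^{1/p}\Delta)/\varphi$ increases per edge), whereas you cap it at $L$ via an ``irrelevance'' argument; both yield the same asymptotic bound.
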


\begin{proof}
By \Cref{lem:running_time_monotone_ES_tree}, the total time needed for maintaining the monotone ES-tree with maximum level $ L $ on $ H'' $ is
\begin{equation*}
O (\mathcal{E} (H'') \cdot L + \mathcal{W} (H'') + n)
\end{equation*}
where $ \mathcal{E} (H'') $ is the number of edges ever contained in $ H'' $ and $ \mathcal{W} (H'') $ is the number of updates (i.e., edge deletions, edge weight increases, and edge insertions) to $ H'' $.

Remember that $ \varphi = \epsilon \Delta / (p + 1) $ and thus $ L = O (p (\alpha D / (\epsilon \Delta) + n^{1/p})) $.
We now bound $ \mathcal{E} (H'') $ and $ \mathcal{W} (H'') $.
Note that at any time $ H'' $~has the same edges as~$ H $ and each edge of $ H $ either is an edge from~$ G $, which contains $ m $ edges, or is an edge from~$ F $.
As $ F $ is defined via the approximate balls (i.e., $ (u, v) \in F $ if and only $ v \in B (u) $), $ \mathcal{E} (F) $, the number of edges ever contained in $ F $, is at most $ \sum_{u \in V} \mathcal{B} (u) $, the total number of nodes ever contained in the approximate balls.
It follows that $ \mathcal{E} (H'') \leq m + \mathcal{E} (F) \leq m + \sum_{u \in V} \mathcal{B} (u) $.
For every edge counted by $ \mathcal{E} (H'') $ we need to consider at most one insertion and at most one deletion as well as at most $ (D + n^{1/p} \Delta) / \varphi $ edge weight increases since we have limited the maximum edge weight in $ H $ to $ D  + n^{1/p} \Delta $.
Note that
\begin{equation*}
(D + n^{1/p} \Delta) / \varphi = (D + n^{1/p} \Delta) (p + 1) / (\epsilon \Delta) = O (p (D/\Delta + n^{1/p}) / \epsilon) \, .
\end{equation*}
Therefore we have
\begin{equation*}
\mathcal{W} (H'') \leq 2 \mathcal{E} (H'') + \mathcal{E} (H'') \cdot (D + n^{1/p} \Delta) / \varphi = O \left( \left( m + \sum_{u \in V} \mathcal{B} (u) \right) \cdot p (D/\Delta + n^{1/p}) / \epsilon \right) \, .
\end{equation*}
We conclude that
\begin{equation*}
\mathcal{E} (H'') \cdot L + \mathcal{W} (H'') = O \left( p (\alpha D / \Delta + n^{1/p}) \left( m + \sum_{u \in V} \mathcal{B} (u) \right) / \epsilon \right)
\end{equation*}
and thus the claimed running time follows.
\end{proof}

\subsection{Definitions of Values for Approximation Guarantee}

Before we analyze the approximation guarantee we define the most important values used in the analysis and provide bounds on their growth.
We set
\begin{equation*}
r_0 = \Delta
\end{equation*}
and for every $ 0 \leq i \leq p-1 $ we set
\begin{align*}
s_i &= a r_i + b \, , \\
w_i &= \alpha s_i + \beta \, \text{, and} \\
r_i &= \frac{(\alpha + 1 + \epsilon) \sum_{0 \leq j \leq i-1} w_j + \beta}{\epsilon} \text{~~(if $ i \geq 1 $)} \, .
\end{align*}
Intuitively, $ r_i $ is the distance by which we would like to shortcut the shortest path to the source node using a single hop-set edge for nodes of priority~$i$.
If this shortcut attempt fails, $ s_i $ is the distance at which we would like to find a nearby node of priority $ i + 1 $ and $ w_i $ is the weight of the hop-set edge to such a node.

We additionally set
\begin{equation*}
\gamma_i = (\alpha + 1 + \epsilon) \sum_{i \leq j \leq p-2} w_j + \beta
\end{equation*}
for every $ 0 \leq i \leq p-1 $, which equivalently can be obtained by setting $ \gamma_{p-1} = \beta $ and $ \gamma_i = \gamma_{i+1} + (\alpha + 1 + \epsilon) w_i $ for every $ 0 \leq i \leq p-2 $.
Finally, we set
\begin{equation*}
\gamma = \gamma_0 + 2 \epsilon \Delta \, .
\end{equation*}
Here $ \gamma_i $ is, intuitively speaking, the amount of additive error we will make on a hop-set path for a node of priority~$ i $ and $ \gamma $ captures some additional rounding error for nodes of priority $ 0 $.

\begin{lemma}\label{lem:alternative_formulation_of_radius}
For all $ 0 \leq i \leq p - 1 $, $ \epsilon r_i = \gamma_0 - \gamma_i + \beta $
\end{lemma}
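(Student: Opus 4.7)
The plan is to verify the identity by direct algebraic unfolding of the definitions of $r_i$ and $\gamma_i$. The key observation is that the $\gamma_i$'s telescope in a way that matches exactly the sums appearing in the definition of $r_i$.

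First, I would compute $\gamma_0 - \gamma_i$ from the defining formulas. For $0 \leq i \leq p-2$ we have
\begin{equation*}
\gamma_0 - \gamma_i = (\alpha + 1 + \epsilon)\sum_{0 \leq j \leq p-2} w_j + \beta - \Bigl((\alpha + 1 + \epsilon)\sum_{i \leq j \leq p-2} w_j + \beta\Bigr) = (\alpha + 1 + \epsilon)\sum_{0 \leq j \leq i-1} w_j,
\end{equation*}
and for $i = p-1$ the definition $\gamma_{p-1} = \beta$ gives
\begin{equation*}
\gamma_0 - \gamma_{p-1} = (\alpha + 1 + \epsilon)\sum_{0 \leq j \leq p-2} w_j + \beta - \beta = (\alpha + 1 + \epsilon)\sum_{0 \leq j \leq p-2} w_j,
\end{equation*}
which matches the same formula. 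So in all cases $1 \leq i \leq p-1$, adding $\beta$ to both sides and invoking the definition of $r_i$ yields
\begin{equation*}
\gamma_0 - \gamma_i + \beta = (\alpha + 1 + \epsilon)\sum_{0 \leq j \leq i-1} w_j + \beta = \epsilon r_i,
\end{equation*}
as desired.

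The only case requiring separate attention is $i = 0$, where the right-hand side collapses to $\beta$ while the left-hand side is $\epsilon r_0 = \epsilon \Delta$. I would handle this by treating it as a boundary case: here one needs to check the identity holds by the convention under which $r_0$ is fixed (in particular, I would verify whether the intended reading identifies $\epsilon \Delta$ with $\beta$ through the choice of parameters used in the later application, or equivalently, whether the statement is to be read only for $i \geq 1$, with the $i = 0$ instance being the definitional anchor for the recursion on the $r_i$'s). Since the real content of the lemma is the telescoping identification between the $r_i$ recursion and the $\gamma_i$ recursion, the main obstacle is not the calculation itself but reconciling the boundary convention, and the body of the proof will be the two-line telescoping computation above.
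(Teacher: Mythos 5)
Your telescoping computation for $ 1 \leq i \leq p-1 $ is exactly the paper's proof: the paper likewise rewrites $ \gamma_i = (\alpha + 1 + \epsilon) \sum_{i \leq j \leq p-2} w_j + \beta $ for all $ i $ (including $ i = p-1 $), subtracts, and invokes the definition of $ r_i $. The boundary issue you flag at $ i = 0 $ is real, but it is a defect of the lemma as stated rather than something you need to reconcile: with $ r_0 = \Delta $ the claimed identity at $ i = 0 $ reads $ \epsilon \Delta = \beta $, which is not implied by the hypotheses of \Cref{pro:from_balls_to_sssp} (these only give $ \Delta \geq b \geq \beta $). The paper's own proof silently applies the recursive formula for $ r_i $, which is declared only for $ i \geq 1 $, also at $ i = 0 $; there is no hidden convention identifying $ \epsilon \Delta $ with $ \beta $. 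The correct reading is that the equality holds for $ 1 \leq i \leq p-1 $, and what the downstream application actually needs is only the inequality $ \gamma_0 - \gamma_i + \beta \leq \epsilon r_i $: the lemma is invoked exactly once, inside the chain of upper bounds in the proof of \Cref{lem:approximation_guarantee_priority}, where replacing an equality by ``$\leq$'' is harmless. At $ i = 0 $ that inequality amounts to $ \beta \leq \epsilon \Delta $, which holds in the paper's actual instantiation (where $ \beta = 0 $). So your computation is complete and correct for $ i \geq 1 $ and coincides with the paper's; the unresolved $ i = 0 $ case in your write-up mirrors a genuine (minor, inconsequential) gap in the paper itself.
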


\begin{proof}
Using the definition of $ \gamma_i $, for all $ 0 \leq i \leq p-1 $, we get 
\begin{equation*}
\gamma_0 - \gamma_i + \beta = (\alpha + 1 + \epsilon) \sum_{0 \leq j \leq p-2} w_j - (\alpha + 1 + \epsilon) \sum_{i \leq j \leq p-2} w_j + \beta = (\alpha + 1 + \epsilon) \sum_{0 \leq j \leq i-1} w_j + \beta = \epsilon r_i \, . \qedhere
\end{equation*}
\end{proof}

\begin{lemma}\label{lem:p_is_balancing}
$ (4 a^3 / \epsilon)^p \leq n^{1/p} $
\end{lemma}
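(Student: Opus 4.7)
The plan is to verify this as a direct algebraic identity that simply expresses the ``balancing'' choice of $p$. Proposition~\ref{pro:from_balls_to_sssp} constrains $p$ by the upper bound
\begin{equation*}
p \leq \sqrt{\frac{\log n}{\log(4a^3/\epsilon)}},
\end{equation*}
and the lemma is the assertion that we set $p$ to be exactly this value (as opposed to any smaller feasible choice), so that the two sources of error/running-time in the hierarchy are balanced.

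First, I would take logarithms (base $2$, say) of both sides of the claimed equality $(4a^3/\epsilon)^p = n^{1/p}$. This turns the statement into
\begin{equation*}
p \log(4a^3/\epsilon) = \frac{\log n}{p},
\end{equation*}
which, upon multiplying through by $p$, becomes
\begin{equation*}
p^2 = \frac{\log n}{\log(4a^3/\epsilon)}.
\end{equation*}
Thus the claim is equivalent to choosing $p = \sqrt{\log n / \log(4a^3/\epsilon)}$, which is precisely the upper bound from Proposition~\ref{pro:from_balls_to_sssp}.

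Therefore the proof reduces to the single sentence: set $p$ equal to its maximal admissible value $\sqrt{\log n / \log(4a^3/\epsilon)}$, square, and exponentiate. There is essentially no obstacle. The only minor subtlety is the tacit convention that $p$ need not be an integer here (or, if integrality is required elsewhere, that the identity holds for the real-valued choice and a routine $\lfloor\cdot\rfloor/\lceil\cdot\rceil$ adjustment only worsens constants, not the $n^{1/p} = n^{o(1)}$ conclusion used in the overall analysis). So the body of the proof is a one-line computation confirming the definitional balancing of the two sides.
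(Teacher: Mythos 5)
Your computation is the same one the paper does (take base-$2$ logarithms and compare $p\log(4a^3/\epsilon)$ with $(\log n)/p$ using the constraint on $p$), so the approach matches. One point to sharpen, though: the hypothesis of \Cref{pro:from_balls_to_sssp} is only the \emph{inequality} $p \leq \sqrt{\log n}/\sqrt{\log(4a^3/\epsilon)}$, and in the application (\Cref{lem:restricted_approximate_shortest_paths}) $p$ is an integer obtained by taking floors, so it is generally strictly below the balancing value. Consequently the stated equality cannot literally be derived, and the paper's own proof in fact only establishes the one-sided bound
\begin{equation*}
\left(\frac{4a^3}{\epsilon}\right)^{p} \leq 2^{\sqrt{\log n}\cdot\sqrt{\log(4a^3/\epsilon)}} \leq n^{1/p},
\end{equation*}
which is exactly what is used downstream (the last step of \Cref{lem:bound_on_additive_error}). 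Your closing remark that a floor adjustment ``only worsens constants'' is therefore slightly off target: what actually happens when $p$ is taken smaller than the balancing value is that the left-hand side $(4a^3/\epsilon)^p$ decreases while the right-hand side $n^{1/p}$ increases, so the identity degrades into precisely the inequality $(4a^3/\epsilon)^p \leq n^{1/p}$ — fortunately the direction that is needed. If you make that one observation explicit, your argument is complete and coincides with the paper's.
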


\begin{proof}
Remember that we have
\begin{equation*}
p \leq \frac{\sqrt{\log{n}}}{\sqrt{\log{\left( \frac{4 a^3}{\epsilon} \right)}}} \, .
\end{equation*}
We only need to rewrite both expressions as follows:
\begin{gather*}
n^{1/p} = 2^{1/p \cdot \log{n}} \geq 2^{\frac{\sqrt{\log{ \left( \frac{4 a^3}{\epsilon} \right)}}}{\sqrt{\log{n}}} \cdot \log{n}} = 2^{\sqrt{\log{ \left( \frac{4 a^3}{\epsilon} \right)}} \cdot \sqrt{\log{n}}} \\
\left( \frac{4 a^3}{\epsilon} \right)^p = 2^{p \cdot \log{ \left( \frac{4 a^3}{\epsilon} \right)}} \leq 2^{\frac{\sqrt{\log{n}}}{\sqrt{\log{ \left( \frac{4 a^3}{\epsilon} \right)}}} \cdot \log{ \left( \frac{4 a^3}{\epsilon} \right)}} = 2^{\sqrt{\log{n}} \cdot \sqrt{\log{ \left( \frac{4 a^3}{\epsilon} \right)}}} \, . \qedhere
\end{gather*}
\end{proof}

\begin{lemma}\label{lem:bounds_on_radius_and_weight_sum}
For all $ 0 \leq i \leq p-1 $ we have
\begin{equation*}
\sum_{0 \leq j \leq i} w_j \leq \frac{(4^{i+1} - 1) a^{3i + 2} \Delta}{\epsilon^i} \, .
\end{equation*}
\end{lemma}

\begin{proof}
Remember that $ \epsilon \leq 1 \leq \alpha \leq a $ and $ \beta \leq b \leq \Delta $.
Now observe that for all $ 1 \leq i \leq p-1 $ we have
\begin{equation*}
r_i = \frac{(\alpha + 1 + \epsilon) \sum_{0 \leq j \leq i-1} w_j + \beta}{\epsilon} \leq \frac{3 a \sum_{0 \leq j \leq i-1} w_j + \Delta}{\epsilon}
\end{equation*}
and for all $ 0 \leq i \leq p-1 $ we have
\begin{equation*}
w_i = \alpha s_i + \beta \leq a s_i + b = a (a r_i + b) + b = a^2 r_i + a b + b \leq a^2 r_i + 2 a \Delta \, .
\end{equation*}

We now prove the inequality by induction on $ i $.
We begin with the base case $ i = 0 $ where $ r_0 = \Delta $ and
\begin{equation*}
\sum_{0 \leq j \leq 0} w_j = w_0 \leq a^2 r_0 + 2 a \Delta = a^2 \Delta + 2 a \Delta \leq 3 a^2 \Delta = \frac{(4 - 1) a^2 \Delta}{\epsilon^0} \, .
\end{equation*}
In the induction step we assume that $ i \geq 1 $:
\begin{align*}
\sum_{0 \leq j \leq i} w_j &= \sum_{0 \leq j \leq i-1} w_j + w_i \\
 &\leq \sum_{0 \leq j \leq i-1} w_j + a^2 r_i + 2 a b \\
 &\leq \sum_{0 \leq j \leq i-1} w_j + a^2 \cdot \frac{3 a \sum_{0 \leq j \leq i-1} w_j + b}{\epsilon} + 2 a b \\
 &\leq \frac{(3 a^3 + 1) \sum_{0 \leq j \leq i-1} w_j + a^2 b + 2 a b}{\epsilon} \\
 &\leq \frac{(3 a^3 + 1) (4^i - 1) a^{3(i-1) + 2} \Delta + a^2 b + 2 a b}{\epsilon^i} \\
 &\leq \frac{(3 a^3 + 1) (4^i - 1) a^{3(i-1) + 2} \Delta + a^2 \Delta + 2 a \Delta}{\epsilon^i} \\
 &\leq \frac{((3 + 1) (4^i - 1) + 3) a^{3i + 2} \Delta}{\epsilon^i} \\
 &\leq \frac{(4^{i+1} - 1) a^{3i + 2} \Delta}{\epsilon^i} \, . \qedhere
\end{align*}
\end{proof}

\begin{lemma}\label{lem:bound_on_additive_error}
$ a \gamma + b \leq \epsilon n^{1/p} \Delta $.
\end{lemma}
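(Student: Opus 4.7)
The plan is a straightforward algebraic reduction that chains the machinery already developed in this section. First I would unfold $a\gamma + b$ using the definitions $\gamma = \gamma_0 + 2\epsilon\Delta$ and $\gamma_0 = (\alpha+1+\epsilon)\sum_{j=0}^{p-2} w_j + \beta$. Second, I would bound the weight sum via Lemma~\ref{lem:bounds_on_radius_and_weight_sum} at index $i=p-2$. Third, I would apply the crude relaxations $\alpha \leq a$, $\epsilon \leq 1 \leq a$, and $\beta \leq b \leq \Delta$ (using the hypothesis $\Delta \geq b$ of Proposition~\ref{pro:from_balls_to_sssp}) to absorb every term of the form $a^i\epsilon^j b$ into an $a^i \epsilon^j \Delta$ term. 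Finally, I would invoke Lemma~\ref{lem:p_is_balancing} to convert the resulting $(4a^3/\epsilon)^p$ factor into $n^{1/p}$.

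Concretely, after the first two steps the bound takes the shape
\begin{equation*}
a\gamma + b \;\leq\; \frac{3 \cdot 4^{p-1} a^{3p-2}\Delta}{\epsilon^{p-2}} + (3a+1)\Delta ,
\end{equation*}
where the exponent $a^{3p-2}$ is produced by one factor of $a^2$ coming from the $(\alpha+1+\epsilon) \leq 3a$ relaxation applied to $a\gamma_0$ times the $a^{3p-4}$ in Lemma~\ref{lem:bounds_on_radius_and_weight_sum}, and the secondary term $(3a+1)\Delta$ collects the pieces $a\beta + 2a\epsilon\Delta + b$ after using $\beta \leq b \leq \Delta$ and $\epsilon \leq 1$. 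On the target side, Lemma~\ref{lem:p_is_balancing} gives $(4a^3/\epsilon)^p \leq n^{1/p}$, hence $\epsilon n^{1/p}\Delta \geq 4^p a^{3p}\Delta/\epsilon^{p-1}$.

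The remaining verification is a split into two halves. The dominant term satisfies
\begin{equation*}
\frac{3 \cdot 4^{p-1} a^{3p-2}\Delta}{\epsilon^{p-2}} \;=\; \frac{3\epsilon}{4a^2} \cdot \frac{4^p a^{3p}\Delta}{\epsilon^{p-1}} \;\leq\; \tfrac{3}{4} \cdot \frac{4^p a^{3p}\Delta}{\epsilon^{p-1}} ,
\end{equation*}
while the residual $(3a+1)\Delta$ is swallowed by the remaining quarter, since $p \geq 2$ and $a \geq 1$ force $\tfrac{1}{4} \cdot 4^p a^{3p}/\epsilon^{p-1} \geq 4 a^6 \geq 3a+1$. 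Summing the two halves recovers $4^p a^{3p}\Delta/\epsilon^{p-1} \leq \epsilon n^{1/p}\Delta$, as required.

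I do not foresee a real obstacle: the calculation is routine bookkeeping. The only delicate point is ensuring that the exponents of $a$, $4$, and $\epsilon$ line up exactly with the form $(4a^3/\epsilon)^p$ demanded by Lemma~\ref{lem:p_is_balancing}; this in turn explains why the bounds in Lemma~\ref{lem:bounds_on_radius_and_weight_sum} are stated with base~$4$ and with the cubic growth $a^{3i}$. Critically, the hypothesis $\Delta \geq b$ is what lets us normalize the whole estimate to a pure multiple of $\Delta$; without it, surviving $b$-terms would not match the purely $\Delta$-scaled right-hand side $\epsilon n^{1/p}\Delta$.
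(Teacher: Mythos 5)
Your proposal is correct and follows essentially the same route as the paper's proof: unfold $\gamma$ and $\gamma_0$, bound $\sum_{0\le j\le p-2} w_j$ via \Cref{lem:bounds_on_radius_and_weight_sum}, absorb all $\beta$- and $b$-terms into $\Delta$-terms using $\beta \le b \le \Delta$ and $\epsilon \le 1 \le a$, and finish with \Cref{lem:p_is_balancing}. The only difference is cosmetic bookkeeping — you split the target into a $3/4$ and a $1/4$ share, whereas the paper rounds every summand up to $4^{p-2}a^{3p}\Delta/\epsilon^p$ and sums the constants $3+9+1\le 4^2$ — and both verifications are sound.
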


\begin{proof}
Remember that we have $ \epsilon \leq 1 \leq \alpha \leq a $ and $ \beta \leq b \leq \Delta $.
By \Cref{lem:bounds_on_radius_and_weight_sum} we have
\begin{equation*}
\sum_{0 \leq j \leq p-2} w_j \leq \frac{(4^{p-1} - 1) a^{3p - 4} \Delta}{\epsilon^{p-2}}
\end{equation*}
We now get:
\begin{align*}
\frac{a \gamma + b}{\epsilon} &= \frac{a \gamma_0 + 2 \epsilon a \Delta + b}{\epsilon} \\
 &= \frac{a (\alpha + 1 + \epsilon) \sum_{0 \leq j \leq p-2} w_j + a \beta + 2 \epsilon a \Delta + b}{\epsilon} \\
 &\leq \frac{a (a + 1 + \epsilon) \sum_{0 \leq j \leq p-2} w_j + a \Delta + 2 a \Delta + \Delta}{\epsilon} \\
 &\leq \frac{3 a^2 \sum_{0 \leq j \leq p-2} w_j + 4 a \Delta}{\epsilon} \\
 &\leq \frac{3 a^2 (4^{p-1} - 1) a^{3p - 4} \Delta + 4 a \Delta}{\epsilon^{p-1}} \\
 &\leq \frac{4^p a^{3 p} \Delta}{\epsilon^p} \\
 &= (4 a^3 / \epsilon)^p \Delta \leq n^{1/p} \Delta \, .
\end{align*}
The last inequality follows from \Cref{lem:p_is_balancing}.
\end{proof}

\begin{lemma}\label{lem:bound_on_radius}
$ a r_{p-1} + b \leq n^{1/p} \Delta $.
\end{lemma}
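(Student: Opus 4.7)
The strategy is to mirror the proof of \Cref{lem:bound_on_additive_error}, using the explicit bound on $r_{p-1}$ provided by \Cref{lem:bounds_on_radius_and_weight_sum} and then applying \Cref{lem:p_is_balancing} at the end. First I would substitute the bound
\begin{equation*}
r_{p-1} \leq \frac{3 \cdot 4^{p-2} a^{3(p-1)} \Delta + (9 \cdot 4^{p-2} - 2) a^{3(p-1) - 1} b}{\epsilon^{p-1}}
\end{equation*}
into $a r_{p-1} + b$, multiplying through by $a$ and keeping the additive $b$.

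Next I would simplify the resulting expression using the standing assumptions that $\epsilon \leq 1 \leq \alpha \leq a$ and $\beta \leq b \leq \Delta$. In particular I can replace $b$ by $\Delta$ and pull powers of $a$ to the top, and I can artificially multiply the stray $+b$ term by $a^{3p-3}/\epsilon^{p-1}$ (which is at least $1$) so that everything sits over a common denominator $\epsilon^{p-1}$. Combining the constants in the numerator, a small calculation bounds $a r_{p-1} + b$ by something of the form $C \cdot 4^{p-2} a^{3p-2} \Delta / \epsilon^{p-1}$ for a modest absolute constant $C$, and one then checks $C \cdot 4^{p-2} a^{3p-2} / \epsilon^{p-1} \leq 4^p a^{3p} / \epsilon^p$ using $16 a^2 / \epsilon \geq 16$ to absorb $C$.

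Finally, this gives $a r_{p-1} + b \leq (4 a^3 / \epsilon)^p \Delta$, and \Cref{lem:p_is_balancing} yields $(4 a^3 / \epsilon)^p \leq n^{1/p}$, completing the proof. No real obstacle is anticipated; the only mildly delicate step is keeping the constants in the numerator small enough that the inequality $C \cdot 4^{p-2} a^{3p-2}/\epsilon^{p-1} \leq 4^p a^{3p}/\epsilon^p$ is immediate, which is why one uses $b \leq \Delta$ rather than a cruder bound and why the $+b$ term is absorbed via $a \geq 1$ and $\epsilon \leq 1$ instead of treating it separately.
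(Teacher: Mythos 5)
Your proof is correct, but it takes a different (more computational) route than the paper. The paper observes that, by the definitions of $r_{p-1}$ and $\gamma_0$, one has the exact identity $r_{p-1} = \gamma_0/\epsilon$, and then the claim falls out in one line from \Cref{lem:bound_on_additive_error}: $a r_{p-1} + b = a\gamma_0/\epsilon + b \leq (a\gamma + b)/\epsilon \leq n^{1/p}\Delta$. You instead redo, for $r_{p-1}$ directly, essentially the same constant-chasing that the paper performs for $\gamma$ inside \Cref{lem:bound_on_additive_error}: substitute the bound from \Cref{lem:bounds_on_radius_and_weight_sum}, absorb the $b$-terms via $b \leq \Delta$ and $a \geq 1$, $\epsilon \leq 1$, and land at $13 \cdot 4^{p-2} a^{3p-2}\Delta/\epsilon^{p-1} \leq 4^p a^{3p}\Delta/\epsilon^p = (4a^3/\epsilon)^p \Delta \leq n^{1/p}\Delta$. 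Your arithmetic checks out (the constant $13 = 3 + 9 + 1$ is indeed absorbed by the factor $16 a^2/\epsilon \geq 16$), and every assumption you invoke ($\epsilon \leq 1 \leq \alpha \leq a$, $\beta \leq b \leq \Delta$, $p \geq 2$) is a standing hypothesis of \Cref{pro:from_balls_to_sssp}. One small caveat on citation: \Cref{lem:p_is_balancing} is stated as an equality but what its proof actually establishes (and what you need) is the inequality $(4a^3/\epsilon)^p \leq n^{1/p}$, so your use of it is fine. The paper's route is shorter and avoids duplicating the computation; yours is self-contained modulo \Cref{lem:bounds_on_radius_and_weight_sum} and does not depend on \Cref{lem:bound_on_additive_error}.
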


\begin{proof}
By the definitions of $ r_{p-1} $ and $ \gamma_0 $ we have $ r_{p-1} = \gamma_0 / \epsilon $.
Since $ \gamma_0 \leq \gamma $ and $ a \gamma + b \leq \epsilon n^{1/p} \Delta $ by \Cref{lem:bound_on_additive_error}, we have
\begin{equation*}
a r_{p-1} + b = a \frac{\gamma_0}{\epsilon} + b \leq \frac{a \gamma_0 + b}{\epsilon} \leq \frac{a \gamma + b}{\epsilon} \leq n^{1/p} \Delta \, . \qedhere
\end{equation*}
\end{proof}

\begin{lemma}\label{lem:auxiliary_inequality_distance_to_next_node}
For all $ 0 \leq i \leq j \leq p - 1 $, $ s (r_i, j - i) \leq r_j $
\end{lemma}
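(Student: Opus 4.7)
The plan is to prove the inequality by induction on the difference $k = j - i \geq 1$. The base case $k = 1$ is immediate since then $j - 1 = i$, so both sides equal $s(r_i, 1)$. For the inductive step, the strategy is to unfold one application of the recursion for $s$, invoke the inductive hypothesis to bound $s(r_i, l)$ by $s(r_{i+l-1}, 1) = s_{i+l-1}$, and then compare the resulting expression against $a r_{i+l} + b$ using the defining formula for $r_{i+l}$.

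Concretely, assuming $s(r_i, l) \leq s(r_{i+l-1}, 1)$ for the inductive step with $l = k$, I would apply Property~\ref{prop:finding_node_of_higher_priority2} to write
\begin{equation*}
s(r_i, l+1) \leq \frac{a(\alpha + 1 + \epsilon)(\alpha s(r_i, l) + \beta) + \beta}{\epsilon} + b .
\end{equation*}
Since $s(r_i, l) \leq a r_{i+l-1} + b = s_{i+l-1}$, we get $\alpha s(r_i, l) + \beta \leq \alpha s_{i+l-1} + \beta = w_{i+l-1}$, so
\begin{equation*}
s(r_i, l+1) \leq \frac{a(\alpha + 1 + \epsilon) w_{i+l-1} + \beta}{\epsilon} + b .
\end{equation*}
It remains to show this is at most $a r_{i+l} + b = \frac{a(\alpha + 1 + \epsilon) \sum_{0 \leq j' \leq i+l-1} w_{j'} + a\beta}{\epsilon} + b$. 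After subtracting the common $b$ and multiplying by $\epsilon$, this reduces to $a(\alpha + 1 + \epsilon) w_{i+l-1} + \beta \leq a(\alpha + 1 + \epsilon) \sum_{0 \leq j' \leq i+l-1} w_{j'} + a \beta$, which follows because each $w_{j'} \geq 0$ (so $w_{i+l-1}$ is dominated by the full sum) and $a \geq 1$ (so $\beta \leq a\beta$).

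The main obstacle, if any, is purely bookkeeping: verifying that the recursive definition of $r_{i+l}$ contains the term $w_{i+l-1}$ that appears after applying the inductive hypothesis, and that the remaining tail $\sum_{0 \leq j' \leq i+l-2} w_{j'}$ together with the extra $(a-1)\beta$ is enough slack. Since all $w_{j'}$ are non-negative and $a \geq 1$, no delicate cancellation is needed; the induction closes directly without any use of the balancing choice of $p$ from \Cref{lem:p_is_balancing} or the bounds from \Cref{lem:bounds_on_radius_and_weight_sum}.
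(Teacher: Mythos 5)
Your proof is correct and follows essentially the same route as the paper: both arguments induct on the gap $j-i$, unfold one step of the recursion from Property~\ref{prop:finding_node_of_higher_priority2}, apply the induction hypothesis to replace $s(r_i,l)$ by $s_{i+l-1}$, and then use the fact that the defining sum for $r_{i+l}$ dominates the single term $w_{i+l-1}$ (together with $a\geq 1$ for the $\beta$ term). Your write-up of the final comparison is if anything more explicit than the paper's.
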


\begin{proof}
Fix some $ 0 \leq i \leq p-2 $.
The proof is by induction on $ j $.
In the first base case $ j = i $, the claim is trivially true as $ s (r_i, 0) \leq r_i $.
Now remember that for $ j \geq 1 $ we have
\begin{equation*}
r_j = \frac{(\alpha + 1 + \epsilon) \sum_{0 \leq j' \leq j-1} w_{j'} + \beta}{\epsilon} \geq (\alpha + 1 + \epsilon) w_{j-1} + \beta = (\alpha + 1 + \epsilon) (\alpha s_{j-1} + \beta) + \beta \, .
\end{equation*}
Thus, in the second base case $ j = i + 1 $ the claim holds because $ s (r_i, j - i) = s (r_i, 1) \leq a r_i + b = s_i = s_{j - 1} \leq r_j $.
Finally, consider the induction step where we assume that the inequality holds for $ j - 1 $ and have to show that it also holds for $ j $, where $ j \geq i + 2 $.
By the induction hypothesis we have $ s (r_i, j - 1 - i) \leq r_{j - 1} $ and since $ j - i \geq 2 $ we have
\begin{equation*}
s (r_i, j - i) \leq (\alpha + 1 + \epsilon) (\alpha a s (r_i, j - i - 1) + \alpha b + \beta) + \beta \, .
\end{equation*}
We now get:
\begin{align*}
r_j &\geq (\alpha + 1 + \epsilon) (\alpha s_{j-1} + \beta) + \beta \\
 &= (\alpha + 1 + \epsilon) (\alpha a r_{j-1} + \alpha b + \beta) + \beta \\
 &\geq (\alpha + 1 + \epsilon) (\alpha a s (r_i, j - i - 1) + \alpha b + \beta) + \beta \\
 &\geq s (r_i, j - i) \, . \qedhere
\end{align*}
\end{proof}

\subsection{Analysis of Approximation Guarantee}

We now analyze the approximation error of a monotone ES-tree maintained on $ H'' $.
This approximation error consists of two parts.
The first part is an approximation error that comes from the fact that the monotone ES-tree only considers paths from $ s $ with a relatively small number of edges and therefore has to use edges from the hop set~$ F $.
The second part is the approximation error we get from rounding the edge weights.
We first give a formula for the approximation error that depends on the priority of the nodes and their distance to the root of the monotone ES-tree.

Before we give the proof we review a few properties of the monotone ES-tree (see~\cite{HenzingerKNSICOMP16} for the full algorithm).
Similar to the classic ES-tree, the monotone ES-tree with root $ s $ maintains a level $ \lev (v) $ for every node $ v $.
The monotone ES-tree is initialized by computing a shortest paths tree up to depth $ L $ from $ s $ in $ H'' $ and thus, initially, $ \lev (v) = \dist_{H''} (s, v) $.
A single deletion or edge weight increase in~$ G $ might result in a sequence of deletions, weight increases and insertions in~$ F $, and thus~$ H'' $. 
The monotone ES-tree first processes the insertions and then the deletions and edge weight increases.
It handles deletions and edge weights increases in the same way as the classic ES-tree.
Once the level $ \lev (u) $ of a node $ u $ exceeds the maximum level $ L $, we set $ \lev (u) = \infty $.
The procedure for handling the insertion of an edge $ (u, v) $ is trivial: it only stores the new edge and in particular does \emph{not} change $ \lev (u) $ or $ \lev (v) $.
For completeness we list the pseudocode of the monotone ES-tree in Algorithm~\ref{alg:monotone_ES_tree}.

\begin{algorithm}
\caption{Monotone ES-tree}
\label{alg:monotone_ES_tree}

\SetKwProg{procedure}{Procedure}{}{}
\SetKwFunction{initialize}{Initialize}
\SetKwFunction{delete}{Delete}
\SetKwFunction{increase}{Increase}
\SetKwFunction{insert}{Insert}
\SetKwFunction{updateLevels}{UpdateLevels}

\tcp{Internal data structures:}
\tcp{$ N (u) $: for every node $ u $ a heap $ N (u) $ whose intended use is to store for every neighbor $ v $ of $ u $ in the current graph the value of $ \lev (v) + w_{H''} (u, v) $}
\tcp{$ Q $: global heap whose intended use is to store nodes whose levels might need to be updated}

\BlankLine

\procedure{\initialize{}}{
	Compute shortest paths tree from $ s $ in $ H'' $ up to depth $ L $\;
	\ForEach{$ u \in V $}{
		Set $ \lev (u) = \dist_{H''} (s, u) $\;
		\lFor{every edge $ (u, v) $ in $ H'' $}{
			insert $ v $ into heap $ N(u) $ of $ u $ with key $ \lev(v) + w_{H''} (u, v) $
		}
	}
}

\BlankLine

\procedure{\delete{u, v}}{
	\increase{$u$, $v$, $\infty$} 
}

\BlankLine

\procedure{\increase{u, v, $w (u, v)$}}{
	\tcp{Increase weight of edge $ (u, v) $ to $ w(u, v) $}
	Insert $ u $ and $v$ into heap $ Q $ with keys $ \lev(u) $ and $\lev(v)$ respectively\;\label{line:insert u}
	Update key of $ v $ in heap $ N(u) $ to $ \lev(v) + w(u, v) $ and key of $ u $ in heap $ N(v) $ to $ \lev(u) + w(u, v) $\;\label{line:update N after increase}
	$ \updateLevels{} $\;
}

\BlankLine

\procedure{\insert{$u$, $v$, $ w (u, v)$}}{
	\tcp{Increase edge $ (u, v) $ of weight $ w(u, v) $}
	Insert $ v $ into heap $ N(u) $ with key $ \lev (v) + w (u, v) $ and $u$ into heap $N(v)$ with key $ \lev (u) + w_{H''}(u, v) $\; 
}

\BlankLine

\procedure{\updateLevels{}}{
	\While{heap $ Q $ is not empty}{
		Take node $ u $ with minimum key $ \lev (u) $ from heap $ Q $ and remove it from $ Q $\;
		$ \lev'(u) \gets \min_{v} (\lev (v) + w_{H''} (u, v)) $\;
		\tcp{$\min_{v} (\lev (v) + w_{H''} (u, v)) $ can be retrieved from the heap $ N(u) $. $\arg\min_{v} (\lev (v) + w_{H''} (u, v)) $ is $u$'s parent in the ES-tree. }
	
		\If{$ \lev'(u) > \lev (u) $}{\label{line:check_for_level_increase}
			$\lev(u)\gets \lev'(u)$\;\label{line:level_increase}
			\lIf{$ \lev' (u) > L $}{
				$ \lev (u) \gets \infty $
			}

		\ForEach{neighbor $ v $ of $ u $}{
				update key of $ u $ in heap $ N(v) $ to $ \lev(u) + w_{H''} (u, v) $\;
				insert $ v $ into heap $ Q $ with key $ \lev(v) $ if $Q$ does not already contain $ v $\;
			}
		}
	}
}
\end{algorithm}

For the analysis of the monotone ES-tree we will use the following terminology.
We say that an edge $ (u, v) $ is \emph{stretched} if $ \lev (u) > \lev (v) + w_{H''} (u, v) $.
We say that a node $ u $ is \emph{stretched} if it is incident to an edge $ (u, v) $ that is stretched.
Note that for a node $ u $ that is not stretched we have $ \lev (u) \leq \lev (v) + w_{H''} (u, v) $ for every edge $ (u, v) $ contained in $ H'' $.
In our proof we will use the following properties of the monotone ES-tree.

\begin{observation}[\cite{HenzingerKNSICOMP16}]\label{obs:simple_observations_monotone_ES}
The following holds for the monotone ES-tree:
\begin{enumerate}[label=(\arabic{*})]
\item \label{item: observation one} The level of a node never decreases.
\item \label{item: observation two} An edge can only become stretched when it is inserted.
\item \label{item: observation three} As long as a node is stretched, its level does not change. % Reason: can only change level in procedure reconnect
\item \label{item: observation four} For every tree edge $ (u, v) $ (where $ v $ is the parent of $ u $), $ \lev (u) \geq \lev (v) + w_{H''} (u, v) $.
\end{enumerate}
\end{observation}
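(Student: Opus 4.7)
The plan is to verify each of the four bullets by direct inspection of Algorithm~\ref{alg:monotone_ES_tree}. Two structural features of the pseudocode do almost all of the work: (i) outside of \textsc{Initialize}, the only line that writes to $ \lev(u) $ is the assignment $ \lev(u) \gets \lev'(u) $ in \textsc{UpdateLevels}, and it is guarded by $ \lev'(u) > \lev(u) $; and (ii) $ \lev'(u) $ is always computed as $ \min_v(\lev(v) + w_{H''}(u, v)) $ over the current neighbors of $ u $ in $ H'' $. Both features will be used over and over.

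Items \ref{item: observation one}--\ref{item: observation three} then follow quickly. Item~\ref{item: observation one} is immediate from the guarded assignment: every write to $ \lev(u) $ strictly increases it. For item~\ref{item: observation two}, the quantity $ \lev(v) + w_{H''}(u, v) $ is non-decreasing in time, because edge weights in $ H'' $ only increase (inherited from edge deletions and weight increases in $ G $ and the monotone updates reported by \textsc{ApproxBalls} via Property~\ref{prop:returning_updated_nodes_of_balls2}) and levels are non-decreasing by item~\ref{item: observation one}; the defining strict inequality of stretching can therefore start to hold only when $ \lev(u) $ itself strictly increases, but at that instant $ \lev(u) $ is set to at most $ \lev(v) + w_{H''}(u, v) $, which leaves $ (u,v) $ unstretched. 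The only remaining mechanism is \textsc{Insert}, which modifies the edge set without touching any level. For item~\ref{item: observation three}, if $ x $ is stretched via some edge $ (x, v) $, then $ \lev'(x) \leq \lev(v) + w_{H''}(x, v) < \lev(x) $, so each time $ x $ is popped from $ Q $ the guard $ \lev'(x) > \lev(x) $ fails and $ \lev(x) $ is not touched.

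The main obstacle is item~\ref{item: observation four}, which is an invariant rather than a one-shot check. I would strengthen it to: \emph{after every return from \textsc{UpdateLevels}, every node $ u $ satisfies $ \lev(u) \geq \min_v(\lev(v) + w_{H''}(u, v)) $}; identifying the minimizer as $ u $'s parent then yields the claim. The base case follows because \textsc{Initialize} computes an actual shortest-paths tree in $ H'' $, where equality holds at every node. For the inductive step, any event that could raise the right-hand side strictly above $ \lev(u) $ is either a weight increase or deletion of an edge incident to $ u $, in which case \textsc{Increase} inserts $ u $ into $ Q $, or an increase of a neighbor's level, in which case \textsc{UpdateLevels} explicitly inserts $ u $ into $ Q $; pure insertions can only shrink the right-hand side and so cannot break the invariant. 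In every case $ u $ is eventually popped: either the guard fires and $ \lev(u) $ is raised to exactly the new minimum, restoring equality, or the guard fails, which means the invariant already held. The subtle point to verify is a Dijkstra-style ordering argument, namely that extracting $ Q $ in non-decreasing order of $ \lev(\cdot) $ together with the fact that level updates only push values upward ensures each affected node is settled at most once, so no node is left with its stored level below its current minimum over neighbors.
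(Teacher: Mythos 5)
The paper itself offers no proof of this observation: it is imported, with a citation, from the earlier monotone ES-tree paper~\cite{HenzingerKNFOCS13}, so a direct verification against Algorithm~\ref{alg:monotone_ES_tree} is the only available route, and that is what you do. Your arguments for items \ref{item: observation one}--\ref{item: observation three} are correct: outside \textsc{Initialize} every write to $\lev(u)$ is guarded by $\lev'(u) > \lev(u)$; the quantity $\lev(v) + w_{H''}(u,v)$ is non-decreasing, so the stretching inequality can only begin to hold when $\lev(u)$ increases, at which moment $\lev(u)$ is capped by $\min_x(\lev(x)+w_{H''}(u,x))$, leaving insertion as the only mechanism; and a stretched node always fails the guard, so its level is frozen.

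For item \ref{item: observation four}, the strengthened invariant $\lev(u) \geq \min_v(\lev(v) + w_{H''}(u,v))$ together with the identification of the parent as the argmin is the right thing to induct on, and your inventory of the events that can raise the right-hand side (deletions and weight increases incident to $u$, which push $u$ into $Q$ inside \textsc{Increase}; level increases of neighbors, which push $u$ into $Q$ inside \textsc{UpdateLevels}; insertions, which can only lower it) is complete. The one flaw is the closing justification: it is \emph{not} true that each affected node is settled at most once --- a node can be popped from $Q$, have its level raised, and later re-enter $Q$ when a neighbor's level rises, possibly many times within a single call --- and the Dijkstra-style extraction order matters only for the running-time analysis, not for correctness. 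The clean way to finish is the simpler queue invariant that your own case analysis already establishes: at every point during \textsc{UpdateLevels}, any node $u$ whose right-hand side may currently exceed $\lev(u)$ is present in $Q$; since the loop terminates (levels are integers bounded by $L$ before being sent to $\infty$) with $Q$ empty, no violation survives. With that repair the argument is complete and matches what the cited reference does.
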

Observe that property~\ref{item: observation four} above implies property~\ref{prop:no_underestimation_of_distance2}, i.e., that the returned distance estimate never underestimates the true distance.

A second prerequisite from~\cite{HenzingerKNSICOMP16} tells us when we may apply a variant of the triangle inequality to argue about the levels of nodes.

\begin{lemma}[\cite{HenzingerKNSICOMP16}]\label{lem:level_weight_inequality}
Let $ (u, v) $ be an edge of $ H'' $ such that $ \lev (v) + w_{H''} (u, v) \leq L $.
If $ (u, v) $ is not stretched and after the previous update in $ G $ the level of $ u $ was less than $ \infty $, then for the current level of~$ u $ we have $ \lev (u) \leq \lev (v) + w_{H''} (u, v) $.
\end{lemma}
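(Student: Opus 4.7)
The plan is to argue that the stated inequality is essentially the definition of \emph{not stretched}, modulo carefully excluding the corner case where $\lev(u)$ has been set to the sentinel value $\infty$. By definition, $(u,v)$ not stretched means precisely $\lev(u) \leq \lev(v) + w_{H''}(u,v)$, which is the conclusion; so once we know $\lev(u) < \infty$, we are done.

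To rule out $\lev(u) = \infty$, I would assume this for contradiction. Since levels are non-decreasing (\Cref{obs:simple_observations_monotone_ES}, item~\ref{item: observation one}) and the previous update in $G$ left $\lev(u)$ finite by hypothesis, $\lev(u)$ must have been set to $\infty$ somewhere during the cascade of operations on $H''$ triggered by the current update. Inspecting \Cref{alg:monotone_ES_tree}, this assignment happens only inside \textsc{UpdateLevels} at the moment when some freshly computed $\lev'(u) = \min_{w} (\lev(w) + w_{H''}(u,w))$ exceeds $L$. The goal is to exhibit $v$ as a neighbor whose contribution $\lev(v) + w_{H''}(u,v)$ at that moment is at most $L$, forcing $\lev'(u) \leq L$ and yielding a contradiction.

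Two observations close the argument. First, $\lev(v)$ only grows between that moment and the present (again by \Cref{obs:simple_observations_monotone_ES}, item~\ref{item: observation one}), so its value then was no larger than its current value. Second, the edge weight $w_{H''}(u,v)$ is unchanged during the \textsc{UpdateLevels} phase: within a single update in $G$ the monotone ES-tree first processes all insertions (the only operations that can lower $w_{H''}(u,v)$, via a cheaper shortcut appearing in~$F$), and only afterwards processes the deletions and weight increases that invoke \textsc{UpdateLevels}. Hence $w_{H''}(u,v)$ during \textsc{UpdateLevels} equals its current value. Combining, $\lev(v) + w_{H''}(u,v) \leq L$ at the moment in question, contradicting $\lev'(u) > L$.

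The main obstacle is the bookkeeping around edge-weight changes: since $H''$ is built from $G$ together with the shortcut set $F$, an insertion into $F$ can in principle lower $w_{H''}(u,v)$, and one must rule out that such lowering happens concurrently with a level-update step that would sabotage the bound. The ordering convention---all insertions first, then deletions and weight increases---is the key mechanism making the monotonicity argument go through, and identifying it as the crucial ingredient is really the only non-mechanical part of the proof.
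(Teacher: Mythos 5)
Your argument is sound, but note that this paper never proves the lemma itself: it is imported verbatim from the cited prior work \cite{HenzingerKNFOCS13}, and the only commentary given here is the remark following the statement that the hypotheses ``capture the property \ldots that once the level of a node exceeds $L$ it is set to $\infty$ and will never be decreased anymore.'' Your reconstruction matches that intent: under the paper's definition of \emph{stretched} the inequality is essentially definitional once $\lev(u)$ is finite, and the substantive content is exactly your contradiction argument ruling out a spurious assignment $\lev(u) \gets \infty$ inside \textsc{UpdateLevels}, using monotonicity of levels together with the convention that insertions are processed before deletions and weight increases. Two small imprecisions are worth flagging. First, your claim that $w_{H''}(u,v)$ is \emph{unchanged} during the \textsc{UpdateLevels} phase is too strong: weight increases are interleaved with calls to \textsc{UpdateLevels} (each \textsc{Increase} triggers one), so the weight at the moment of the hypothetical $\infty$-assignment may be strictly smaller than the current weight; fortunately only the inequality $w_{\text{then}} \leq w_{\text{now}}$ is needed, and that is what the insertions-first ordering guarantees, so the bound $\lev(v)_{\text{then}} + w_{\text{then}} \leq \lev(v)_{\text{now}} + w_{\text{now}} \leq L$ still goes through. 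Second, taken literally, the paper's definition of a stretched edge ($\lev(u) > \lev(v) + w_{H''}(u,v)$, with $\infty$ exceeding any finite value) already forces $\lev(u) \leq \lev(v) + w_{H''}(u,v) \leq L$ from the ``not stretched'' hypothesis alone; your extra analysis is what one actually needs in the contexts where the lemma is applied, namely where ``not stretched'' is certified via \Cref{obs:simple_observations_monotone_ES}\ref{item: observation two} for never-inserted edges, so it is the right proof to give even though it is not strictly forced by the stated definitions.
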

Note that the second precondition simply captures the property of the monotone ES-tree that once the level of a node exceeds $ L $ it is set to $ \infty $ and will never be decreased anymore.
At the initialization (i.e., before the first update in $ H'' $), the first precondition is fulfilled automatically as no edge is stretched yet.

To count the additive error from rounding the edge weights, we define, for every node~$ u $ and every $ 0 \leq i \leq p-1 $, the function $ h (u, i) $ as follows:
\begin{equation*}
h (u, i) =
\begin{cases}
 0 & \text{if $ u = s $} \\
 (p+1) \left\lceil \frac{\max (\dist_G (u, s) - r_i, 0)}{\Delta} \right\rceil + p + 1 - i & \text{otherwise}
\end{cases}
\, .
\end{equation*}
The intuition is that $ h (u, i) $ bounds the number of hops from $ u $ to $ s $, i.e., the number of edges required to go from $ u $ to $ s $ while at the same time providing the desired approximation guarantee.
The approximation guarantee can now formally be stated as follows
\begin{lemma}[Approximation Guarantee]\label{lem:approximation_guarantee_priority}
For every $ 0 \leq i \leq p - 1 $ and every node $ u $ of priority~$ i $ with $ \dist_G (u, s) \leq D + \sum_{0 \leq i' \leq i-1} s_{i'} $ we have
\begin{equation*}
\distest (s, u) \leq (\alpha + \epsilon) \dist_G (u, s) + \gamma_i + h (u, i) \cdot \varphi \, .
\end{equation*}
\end{lemma}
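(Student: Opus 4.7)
The plan is to prove the inequality by induction on the sequence of updates processed by the monotone ES-tree on $H''$. Because levels never decrease (\Cref{obs:simple_observations_monotone_ES}), it suffices to show that whenever $\lev(u)$ strictly increases, the new value still satisfies the claimed bound. At such a moment $u$ is not stretched and its previous level was finite, so by \Cref{lem:level_weight_inequality} I only need to exhibit one neighbor $v$ of $u$ in $H''$ for which $\lev(v) \cdot \varphi + w_{H'}(u, v)$ meets the claimed bound after dividing out $\varphi$. The base case, where $\lev(u) = \dist_{H''}(s, u)$ after Dijkstra's initialization, is handled by the same construction applied iteratively along an actual witness path in $H''$.

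The neighbor $v$ is produced by mimicking the Thorup--Zwick-style hop-set chain underlying Property~\ref{prop:finding_node_of_higher_priority2}. Let $\pi$ be a shortest $u$-to-$s$ path in $G$, let $u^*$ be the farthest node on $\pi$ with $\dist_G(u, u^*) \leq r_i$, and let $u^{**}$ be its successor on $\pi$ (so $\dist_G(u, u^{**}) > r_i$, unless $u^* = s$). Applying Property~\ref{prop:finding_node_of_higher_priority2} to the pair $(u, u^*)$ --- its premise $s(\dist_G(u, u^*), p - 1 - i) \leq \hat{D}$ being secured by $n^{1/p}\Delta \leq \hat{D}$ together with \Cref{lem:bound_on_radius} and \Cref{lem:auxiliary_inequality_distance_to_next_node} --- yields two cases. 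In case~(i), $u^* \in B(u)$, so $(u, u^*) \in F$ with $w_F(u, u^*) \leq \alpha \dist_G(u, u^*) + \beta$; I chain this hop-set edge with the $G$-edge $(u^*, u^{**})$ to obtain a two-hop step that strictly decreases the $G$-distance to $s$ by more than $r_i \geq \Delta$, and set $v = u^{**}$ (or $v = s$ if $u^* = s$). In case~(ii), there is a node $v'$ of priority $j > i$ with $u \in B(v')$ and $\dist_G(u, v') \leq s(\dist_G(u, u^*), j - i) \leq s(r_{j-1}, 1) = a r_{j-1} + b$ by \Cref{lem:auxiliary_inequality_distance_to_next_node}, so $(v', u) \in F$ with $w_F(v', u) \leq \alpha\, \dist_G(v', u) + \beta$; I set $v = v'$, whose inductive bound enjoys the strictly smaller additive constant $\gamma_j \leq \gamma_i - (\alpha + 1 + \epsilon) w_i$.

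The sequences $r_i$, $s_i$, $w_i$, $\gamma_i$ are calibrated so the two cases combine into the target bound. A case-(ii) jump from priority $i$ to priority $j$, combined with the triangle inequality $\dist_G(v', s) \leq \dist_G(u, s) + \dist_G(u, v')$ applied to the inductive bound for $v'$, contributes additive overhead at most $(\alpha + 1 + \epsilon) w_i$, which matches the recurrence $\gamma_i = \gamma_{i+1} + (\alpha + 1 + \epsilon) w_i$; at most $p - 1 - i$ consecutive case-(ii) jumps occur before case~(i) fires, accounting for the $p + 1 - i$ term in $h(u, i)$. A case-(i) round uses at most two hops and reduces the residual $G$-distance to $s$ by more than $r_i \geq \Delta$, so the number of such rounds starting from $u$ is at most $\lceil \max(\dist_G(u, s) - r_i, 0)/\Delta \rceil$, which combines with the factor $p + 1$ to give the remaining term of $h(u, i)$. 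The rounding slack of $\varphi$ per edge (since $w_{H'}(e) \leq w_H(e) + \varphi$) is absorbed into the $h(u, i) \cdot \varphi$ term.

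The main obstacle will be the side-condition bookkeeping when invoking the inductive hypothesis on the chosen neighbor. I must verify that the neighbor satisfies the distance hypothesis $\dist_G(\cdot, s) \leq D + \sum_{0 \leq i' \leq j'-1} s_{i'}$ at its own priority $j'$ --- in case~(ii) this uses $\dist_G(v', s) \leq \dist_G(u, s) + s_{j-1}$, which exactly matches the bump from $\sum_{0 \leq i' \leq i - 1} s_{i'}$ to $\sum_{0 \leq i' \leq j - 1} s_{i'}$ since $s_{i'} \geq 0$ --- that the edge $(u, v)$ has weight at most the truncation threshold $D + n^{1/p}\Delta$ so it actually appears in $H''$ (here \Cref{lem:bound_on_radius} is essential to bound $a r_{j-1} + b$), and that the claimed bound does not exceed $L \cdot \varphi$ so the level cannot be capped at $\infty$. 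The terminal cases $u^* = s$ and ``case~(ii) climbs all the way to priority $p - 1$'' are the delicate boundary conditions, and it is precisely there that the tight estimates of \Cref{lem:bound_on_radius}, \Cref{lem:bound_on_additive_error}, and \Cref{lem:alternative_formulation_of_radius} make the recurrence close.
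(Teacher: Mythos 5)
Your proposal matches the paper's proof in all essentials: the same witness construction (the farthest node $u^*$ on the shortest path within $r_i$ of $u$ plus its successor $u^{**}$, versus the higher-priority node $v'$ supplied by Property~\ref{prop:finding_node_of_higher_priority2}), the same telescoping of $\gamma_i$ against the case-(ii) overhead $(\alpha+1+\epsilon)w_{j-1}$, the same hop-counting role of $h(u,i)$, and the same reliance on \Cref{lem:alternative_formulation_of_radius}, \Cref{lem:auxiliary_inequality_distance_to_next_node}, \Cref{lem:bound_on_additive_error}, and \Cref{lem:bound_on_radius} to close the recurrence and keep levels below $L$. The only cosmetic difference is that the paper formalizes your ``unrolled chain'' as an explicit double induction on the update sequence and on $h(u,i)$ (with Claims showing $h$ strictly decreases along the chain), which is exactly the bookkeeping you flag as the remaining obstacle.
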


Once we have proved this lemma, the desired bound on the approximation error (Property~\ref{prop:approximation_guarantee2}) follows easily because $ h (u, i) \cdot \varphi \leq \epsilon \dist_G (u, s) + 2 \epsilon \Delta $ (as we show below) and $ \gamma \leq \epsilon n^{1/p} \Delta $ by~\Cref{lem:bound_on_additive_error}, and thus
\begin{align*}
\distest (s, u) &\leq (\alpha + \epsilon) \dist_G (u, s) + \gamma_i + h (u, i) \cdot \varphi \\
 &\leq (\alpha + \epsilon) \dist_G (u, s) + \gamma_0 + h (u, i) \cdot \varphi \\
 &\leq (\alpha + \epsilon) \dist_G (u, s) + \gamma_0 + \epsilon \dist_G (u, s) + 2 \epsilon \Delta \\
 &= (\alpha + 2 \epsilon) \dist_G (u, s) + \gamma \\
 &\leq (\alpha + 2 \epsilon) \dist_G (u, s) + \epsilon n^{1/p} \Delta \, .
\end{align*}

\begin{lemma}\label{lem:upper_bound_on_total_rounding_error}
For every node $ u $ and every $ 0 \leq i \leq p - 1 $,
\begin{equation*}
h (u, i) \cdot \varphi \leq \epsilon \dist_G (u, s) + 2 \epsilon \Delta
\end{equation*}
\end{lemma}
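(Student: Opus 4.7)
The plan is to unfold the definition of $h(u, i)$ and $\varphi = \epsilon \Delta / (p+1)$ and bound the two summands separately; no combinatorial or dynamic-algorithmic reasoning is needed, this is a purely arithmetic estimate.

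First I would dispatch the case $u = s$, where $h(u, i) = 0$ and the right-hand side is $2\epsilon\Delta \geq 0$, so the inequality holds trivially. For the main case $u \neq s$, I would substitute the definition of $h(u, i)$ into $h(u, i) \cdot \varphi$ and write it as a sum of two terms:
\begin{equation*}
h(u, i) \cdot \varphi = (p+1) \left\lceil \frac{\max(\dist_G(u, s) - r_i, 0)}{\Delta} \right\rceil \cdot \frac{\epsilon \Delta}{p+1} + (p + 1 - i) \cdot \frac{\epsilon \Delta}{p+1} \, .
\end{equation*}

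Next I would simplify each summand. The first one reduces, after cancelling $p+1$, to $\epsilon \Delta \cdot \lceil \max(\dist_G(u, s) - r_i, 0)/\Delta \rceil$, and applying the standard bound $\lceil x \rceil \leq x + 1$ gives an upper bound of $\epsilon \max(\dist_G(u, s) - r_i, 0) + \epsilon \Delta \leq \epsilon \dist_G(u, s) + \epsilon \Delta$, where the last step uses $r_i \geq 0$ (and we just drop the $-r_i$). The second summand is at most $\epsilon \Delta$, since $p + 1 - i \leq p + 1$ for $i \geq 0$. Adding these two bounds yields $\epsilon \dist_G(u, s) + 2\epsilon \Delta$, which is exactly what we want.

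There is no real obstacle here; the only thing to be slightly careful about is the treatment of the ceiling. Using $\lceil x \rceil \leq x + 1$ is what contributes the extra additive $\epsilon \Delta$, which together with the $\epsilon \Delta$ coming from the $(p + 1 - i)$ term accounts for the $2\epsilon\Delta$ slack on the right-hand side. Hence the proof is essentially a one-line calculation once the definition is unfolded.
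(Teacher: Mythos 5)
Your proof is correct and follows essentially the same route as the paper: bound the ceiling by $\left\lceil x \right\rceil \leq x + 1$, drop the $-r_i$ inside the max, bound $p+1-i$ by $p+1$, and multiply through by $\varphi = \epsilon\Delta/(p+1)$. The only cosmetic difference is that you treat the $u = s$ case explicitly and split the expression into two summands, whereas the paper carries out the same estimate in a single chain of inequalities.
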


\begin{proof}
If $ u = s $, then the claim is trivially true.
Otherwise we have
\begin{align*}
h (u, i) &= \left( (p+1) \left\lceil \frac{\max (\dist_G (u, s) - r_i, 0)}{\Delta} \right\rceil + p + 1 - i \right) \varphi \\
&\leq \left( (p+1) \left\lceil \frac{\dist_G (u, s)}{\Delta} \right\rceil + p + 1 \right) \varphi \\
&\leq \left( (p+1) \left( \frac{\dist_G (u, s)}{\Delta} + 1 \right) + p + 1 \right) \varphi \\
&= \left(\frac{(p+1) \dist_G (u, s)}{\Delta} + 2 (p+1) \right) \varphi \\
&= \left(\frac{(p+1) \dist_G (u, s)}{\Delta} + 2 (p+1) \right) \cdot \frac{\epsilon \Delta}{p+1} \\
&= \epsilon \dist_G (u, s) + 2 \epsilon \Delta \, . \qedhere
\end{align*}
\end{proof}

\begin{proof}[Proof of \Cref{lem:approximation_guarantee_priority}]
The proof is by double induction first on the number of updates in $ G $ and second on $ h (u, i) $.
Let $ 0 \leq i \leq p - 1 $ and let $ u $ be a node of priority~$ i $ such that $ \dist_G (u, s) \leq D + \sum_{0 \leq i' \leq i-1} s_{i'} $.
Remember that $ \distest (u, s) = \lev (u) \cdot \varphi $, where $ \lev (u) $ is the level of $ u $ in the monotone ES-tree of $ s $.
We know that after the previous deletion in $ G $ the distance estimate gave an approximation of the true distance in $ G $.
Since distances in $ G $ are non-decreasing it must have been the case that the level of $ u $ was less than $ \infty $ after the previous deletion in $ G $.

If $ u = s $, the claim is trivially true because $ \lev (s) = 0 $.
Assume that $ u \neq s $.
If $ u $ is stretched in the monotone ES-tree, then the level of $ u $ has not changed since the previous deletion in $ G $ and thus the claim is true by induction.
If $ u $ is not stretched, then $ \lev (u) \leq \lev (v) + w_{H''} (u, v) $ for every edge $ (u, v) $ in $ H'' $.
Define the nodes $ v $ and~$ x $ as follows.
If $ \dist_G (u, s) \leq r_i $, then $ v = s $.
If $ \dist_G (u, s) > r_i $, then consider a shortest path $ \pi $ from $ u $ to $ s $ in $ G $ and let $ v $ be the furthest node from $ u $ on $ \pi $ such that $ \dist_G (u, v) \leq r_i $ (which implies $ \dist_G (v, s) \geq \dist_G (u, s) - r_i $). % (note that it might happen that $ v = u $). % TODO this case is not so important now
Furthermore let~$ x $ be the neighbor of $ v $ on the shortest path $ \pi $ that is closer to $ s $ than $ v $ is.
Note that $ \dist_G (u, x) \geq r_i $ (and thus $ \dist_G (x, s) \leq \dist_G (u, s) - r_i $) and in particular $ G $ contains the edge $ (v, x) $.
The edge $ (v, x) $ is also contained in~$ H $ (and thus in~$ H' $ and~$ H'' $) by the following argument:
For $ \dist_G (u, s) \leq D + \sum_{0 \leq i' \leq i-1} s_{i'} $ to hold it has to be the case that $ w_G (v, x) \leq D + \sum_{0 \leq i' \leq i-1} s_{i'} $.
Note that $ \sum_{0 \leq i' \leq i-1} s_{i'} \leq \sum_{0 \leq i' \leq i-1} w_{i'} \leq r_{p-1} \leq n^{1/p} \Delta $ by \Cref{lem:bound_on_radius}.
Thus, $ w_G (v, x) \leq D + n^{1/p} \Delta $, which by the definition of $ H $ means that the edge $ (v, x) $ is contained in $ H $.

Note that $ s (\dist_G (u, v), p - 1 - i) \leq s (r_i, p - 1 - i) $ since the function $ s (\cdot, \cdot) $ is non-decreasing in the first argument.
By \Cref{lem:auxiliary_inequality_distance_to_next_node} we have $ s (r_i, p - 1 - i) \leq r_{p - 1} $ and by \Cref{lem:bound_on_radius} we have $ r_{p-1} \leq n^{1/p} \Delta \leq \hat{D} $.
It follows that $ s (\dist_G (u, v), p - 1 - i) \leq \hat{D} $.
Thus, by Property~\ref{prop:finding_node_of_higher_priority2} we know that either $ v \in B (u) $ or there is a node $ v' $ of priority $ j' > i $ such that $ u \in B (v) $ and $ \dist_G (u, v') \leq s (\dist_G (u, v), j' - i) $.
Note that in the first case the set of edges~$ F $ contains the edge $ (u, v) $ and in the second case it contains the edge $ (u, v') $.

\noindent \underline{Case 1: $ v \in B (u) $}

If $ v \in B (u) $, then $ F $ contains an edge $ (u, v) $ such that
\begin{equation}
w_F (u, v) = \hat{\distest} (u, v) \leq \alpha \dist_G (u, v) + \beta \label{eq:approximation_of_weight_in_F}
\end{equation}
Since $ \dist_G (u, v) \leq r_i $ we have $ w_F (u, v) \leq \alpha r_i + \beta \leq \alpha r_{p-1} + \beta \leq n^{1/p} \Delta $, where the last inequality holds by \Cref{lem:bound_on_radius}.
Thus, $ (u, v) $ is contained in~$ H $ and thus also in $ H' $ and~$ H'' $.

If $ \dist_G (u, s) \leq r_i $, then we have $ v = s $.
First observe that by the definition of $ H'' $ we have $ w_{H''} (u, s) = w_{H'} (u, s) / \varphi $.
Furthermore the rounding of the edge weights in $ H' $ guarantees that $ w_{H'} (u, s) \leq w_H (u, s) + \varphi $.
We therefore get
\begin{align*}
w_{H''} (u, s) &\leq \frac{w_F (u, s) + \varphi}{\varphi} \\
 &\leq \frac{\alpha \dist_G (u, s) + \beta + \varphi}{\varphi} \\
 &\leq \frac{\alpha \left( D + \sum_{0 \leq i' \leq i-1} s_{i'} \right) + \beta + \varphi}{\varphi} \\
 &\leq \frac{ \alpha D + (\alpha + 1 + \epsilon) \sum_{0 \leq i' \leq p-2} w_{i'} + \beta + \varphi}{\varphi} \\
 &= \frac{\alpha D + \gamma_0 + \varphi}{\varphi} \\
 &= \frac{\alpha D + \gamma_0 + \frac{\epsilon \Delta}{p+1}}{\varphi} \\
 &\leq \frac{\alpha D + \gamma_0 + 2 \epsilon \Delta}{\varphi} \\
 &= \frac{\alpha D + \gamma}{\varphi}
 \leq \frac{\alpha D + \epsilon n^{1/p} \Delta}{\varphi} \leq \frac{(\alpha + 2 \epsilon) D}{\varphi} + (p+1) n^{1/p} = L \, .
\end{align*}
Here we have used the inequality $ \gamma \leq \epsilon n^{1/p} \Delta $ from \Cref{lem:bound_on_additive_error}.
Since the maximum level in the monotone ES-tree is $ L $ and $ u $ is not stretched, it follows from \Cref{lem:level_weight_inequality} that $ \lev (u) \leq \lev (s) + w_{H''} (u, s) = w_{H''} (u, s) $.
Together with the observations $ h (u, i) \geq 1 $ (since $ u \neq s $) and $ \beta \leq \gamma_0 $ we therefore get
\begin{multline*}
\distest (s, u) = \lev (u) \cdot \varphi \leq w_{H''} (u, s) \cdot \varphi \leq \alpha \dist_G (u, s) + \beta + \varphi \\
\leq \alpha \dist_G (u, s) + \beta + h (u, i) \cdot \varphi \leq (\alpha + \epsilon) \dist_G (u, s) + \gamma_0 + h (u, i) \cdot \varphi \, .
\end{multline*}

Consider now the case $ \dist_G (u, s) > r_i $.
Let $ j $ denote the priority of $ x $.
We first prove the following inequality, which will allow us among other things to use the induction hypothesis on $ x $.

\begin{claim}\label{claim:hops_inequality_1}
If $ \dist_G (u, s) > r_i $, then $ h (x, j) + 2 \leq h (u, i) $.
\end{claim}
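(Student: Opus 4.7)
The plan is to prove the inequality by a direct arithmetic comparison, splitting on whether $x = s$, and, when $x \neq s$, on whether $\dist_G(x,s) \leq r_j$.

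First I will record three elementary facts about the setup. (i) Since $\dist_G(u,s) > r_i$ by hypothesis, $\lceil (\dist_G(u,s) - r_i)/\Delta \rceil \geq 1$. (ii) Since $v$ is the furthest node on $\pi$ with $\dist_G(u,v) \leq r_i$ and $x$ is its successor on $\pi$ toward $s$, the triangle inequality along $\pi$ gives $\dist_G(x,s) \leq \dist_G(u,s) - r_i$. (iii) The sequence $(r_i)_{i=0}^{p-1}$ is non-decreasing with $r_0 = \Delta$, so $r_j \geq \Delta$ for every priority $j$; this is immediate from the recursion $r_{i+1} - r_i = (\alpha + 1 + \epsilon) w_i/\epsilon$ together with the non-negativity of each $w_i$ (which in turn follows from $\alpha \geq 1$, $\beta \geq 0$).

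With these in hand, if $x = s$ then $h(x,j) = 0$ by definition, while $h(u,i) \geq (p+1) + (p+1-i) \geq p+2 \geq 2$ using $i \leq p-1$ and $p \geq 2$, which already gives the claim. Otherwise I will write
\[
h(u,i) - h(x,j) = (p+1)\bigl(\lceil A/\Delta\rceil - \lceil B/\Delta\rceil\bigr) + (j - i),
\]
with $A := \dist_G(u,s) - r_i$ and $B := \max(\dist_G(x,s) - r_j, 0)$. Since $0 \leq i, j \leq p-1$ we have $j - i \geq -(p-1)$, so it suffices to prove that the bracketed ceiling difference is at least $1$: that will give $h(u,i) - h(x,j) \geq (p+1) - (p-1) = 2$. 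In the subcase $B = 0$, this follows immediately from fact (i). In the subcase $B > 0$, facts (ii) and (iii) together yield
\[
A - B = \bigl(\dist_G(u,s) - \dist_G(x,s)\bigr) - r_i + r_j \geq r_j \geq \Delta,
\]
so $A/\Delta \geq B/\Delta + 1$, and hence $\lceil A/\Delta\rceil \geq \lceil B/\Delta\rceil + 1$, as desired.

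The argument has no real obstacle; the only subtle point is the bookkeeping that the ceiling difference, when multiplied by $(p+1)$, must absorb the worst-case drop $(j-i) = -(p-1)$. This is precisely what motivates the coefficient $(p+1)$ in the definition of $h$. No heavy computation is needed beyond noting that the auxiliary radii satisfy $r_j \geq \Delta$ uniformly in $j$.
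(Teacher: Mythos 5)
Your proof is correct and follows essentially the same route as the paper's: both arguments reduce to the same case split on whether $\max(\dist_G(x,s)-r_j,0)$ vanishes and rest on the same three facts, namely $\lceil(\dist_G(u,s)-r_i)/\Delta\rceil \geq 1$, $\dist_G(x,s)\leq\dist_G(u,s)-r_i$, and $r_j\geq r_0 = \Delta$. Packaging the comparison as ``the ceiling difference is at least $1$ and, multiplied by $p+1$, absorbs the worst-case drop $j-i\geq-(p-1)$'' is a clean reorganization of the paper's two chains of inequalities, not a different argument.
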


\begin{proof}
Remember that $ i \leq p-1 $.
The assumption $ \dist_G (u, s) > r_i $ implies that $ \dist_G (x, s) \leq \dist_G (u, s) - r_i $.
If $ \dist_G (x, s) < r_j $, we have
\begin{align*}
h (x, j) + 2 \leq p + 1 - j + 2 \leq p + 1 + 2 &\leq p + 1 + p + 1 - i \\
 &\leq (p+1) \left\lceil \frac{\dist_G (u, s) - r_i}{\Delta} \right\rceil + p + 1 - i = h (u, i) \, .
\end{align*}
Here we use the inequality $ \lceil (\dist_G (u, s) - r_j) / \Delta \rceil \geq 1 $ which follows from the assumption $ \dist_G (u, s) > r_i $.

If $ \dist_G (x, s) \geq r_j $, then, using $ r_j \geq r_0 \geq \Delta $, we get
\begin{align*}
h (x, j) + 2 &= (p+1) \left\lceil \frac{\dist_G (x, s) - r_j}{\Delta} \right\rceil + p + 1 - j + 2 \\
 &\leq (p+1) \left\lceil \frac{\dist_G (x, s) - \Delta}{\Delta} \right\rceil + p + 1 + 2 \\
 &= (p+1) \left\lceil \frac{\dist_G (x, s)}{\Delta} - 1 \right\rceil + p + 1 + 2 \\
 &= (p+1) \left( \left\lceil \frac{\dist_G (x, s)}{\Delta} \right\rceil - 1 \right) + p + 1 + 2 \\
 &= (p+1) \left\lceil \frac{\dist_G (x, s)}{\Delta} \right\rceil + 2 \\
 &\leq (p+1) \left\lceil \frac{\dist_G (x, s)}{\Delta} \right\rceil + p + 1 - i \\
 &\leq (p+1) \left\lceil \frac{\dist_G (u, s) - r_i}{\Delta} \right\rceil + p + 1 - i \\
 &\leq (p+1) \left\lceil \frac{\max (\dist_G (u, s) - r_i, 0)}{\Delta} \right\rceil + p + 1 - i = h (u, i) \, .
\end{align*}
Here the last inequality follows from the trivial observation $ \dist_G (u, s) - r_i \leq \max (\dist_G (u, s) - r_i, 0) $.
\end{proof}

Having proved this claim, we go on with the proof of the lemma.
We will now show that
\begin{equation}
\lev (x) + w_{H''} (v, x) + w_{H''} (u, v) \leq \frac{(\alpha + \epsilon) \dist_G (u, s) + \gamma_i + h (u, i) \cdot \varphi}{\varphi} \label{eq:bound_on_tentative_level_case1}
\end{equation}
as follows.
If $ \dist_G (u, s) > r_i $, then we have $ \dist_G (u, x) \geq r_i $ by the choice of $ x $.
Remember that the edge $ (v, x) $ lies on a shortest path from $ u $ to $ s $ in $ G $.
It is therefore contained in~$ G $ since before the first deletion and thus will never be stretched.
We also may apply the induction hypothesis on $ x $ since
\begin{equation*}
\dist_G (x, s) = \dist_G (u, s) - \dist_G (u, x) \leq \dist_G (u, s) - r_i \leq D + \sum_{0 \leq i' \leq i-1} s_{i'} - r_i \leq D
\end{equation*}
due to $ \sum_{0 \leq i' \leq i-1} s_{i'} \leq r_i $ by the definition of $ r_i $.
Therefore we get
\begin{align*}
(\lev (x) &+ w_{H''} (v, x) + w_{H''} (u, v)) \cdot \varphi \\
 &\leq \distest (s, x) + w_{H''} (v, x) \cdot \varphi + w_{H''} (u, v) \cdot \varphi && \text{\small (definition of $ \distest (s, x) $)} \\
 &= \distest (s, x) + w_{H'} (v, x) + w_{H'} (u, v) && \text{\small (definition of $ H' $)} \\
 &\leq \distest (s, x) + w_H (v, x) + \varphi + w_H (u, v) + \varphi && \text{\small (property of $ w_{H'} $)} \\
 &\leq \distest (s, x) + w_G (v, x) + \varphi + w_F (u, v) + \varphi && \text{\small ($ (v, x) \in E $ and $ (u, v) \in F $)} \\
 &\leq  (\alpha + \epsilon) \dist_G (x, s) + \gamma_j + h (x, j) \cdot \varphi + w_G (v, x) + \varphi + w_F (u, v) + \varphi && \text{\small (induction hypothesis)} \\
 &= (\alpha + \epsilon) \dist_G (x, s) + \gamma_j + w_F (u, v) + w_G (v, x) + (h (x, j) + 2) \cdot \varphi && \text{\small (rearranging terms)} \\
 &\leq (\alpha + \epsilon) \dist_G (x, s) + \gamma_j + w_F (u, v) + w_G (v, x) + h (u, i) \cdot \varphi && \text{\small (Claim~\ref{claim:hops_inequality_1})} \\
 &\leq (\alpha + \epsilon) \dist_G (x, s) + \gamma_0 + w_F (u, v) + w _G(v, x) + h (u, i) \cdot \varphi && \text{\small ($ \gamma_j \leq \gamma_0 $)} \\
 &\leq (\alpha + \epsilon) \dist_G (x, s) + \gamma_0 + \alpha \dist_G (u, v) + \beta + w_G (v, x) + h (u, i) \cdot \varphi && \text{\small (by Inequality~\eqref{eq:approximation_of_weight_in_F})} \\
 &= (\alpha + \epsilon) \dist_G (x, s) + \gamma_0 + \alpha \dist_G (u, v) + \beta + \dist_G (v, x) + h (u, i) \cdot \varphi && \text{\small ($ (v, x) $ on shortest path)} \\
 &\leq (\alpha + \epsilon) \dist_G (x, s) + \gamma_0 + \alpha \dist_G (u, v) + \beta + \alpha \dist_G (v, x) + h (u, i) \cdot \varphi && \text{\small ($ \alpha \geq 1 $)} \\
 &= (\alpha + \epsilon) \dist_G (x, s) + \alpha (\dist_G (u, v) + \dist_G (v, x)) + \beta + \gamma_0 + h (u, i) \cdot \varphi && \text{\small (rearranging terms)} \\
 &= (\alpha + \epsilon) \dist_G (x, s) + \alpha \dist_G (u, x) + \beta + \gamma_0 + h (u, i) \cdot \varphi && \text{\small ($ v $ on shortest path)} \\
 &= (\alpha + \epsilon) \dist_G (x, s) + \alpha \dist_G (u, x) + \beta + \gamma_0 - \gamma_i + \gamma_i + h (u, i) \cdot \varphi && \text{\small (zero addition)} \\
 &= (\alpha + \epsilon) \dist_G (x, s) + \alpha \dist_G (u, x) + \epsilon r_i + \gamma_i + h (u, i) \cdot \varphi && \text{\small (by \Cref{lem:alternative_formulation_of_radius})} \\
 &\leq (\alpha + \epsilon) \dist_G (x, s) + \alpha \dist_G (u, x) + \epsilon \dist_G (u, x) + \gamma_i + h (u, i) \cdot \varphi && \text{\small ($ \dist_G (u, x) \geq r_i $)} \\
 &= (\alpha + \epsilon) (\dist_G (u, x) + \dist_G (x, s)) + \gamma_i + h (u, i) \cdot \varphi && \text{\small (rearranging terms)} \\
 &= (\alpha + \epsilon) \dist_G (u, s) + \gamma_i + h (u, i) \cdot \varphi && \text{\small ($ x $ on shortest path)} \, .
\end{align*}

By \Cref{lem:upper_bound_on_total_rounding_error} we have $ h (u, i) \cdot \varphi \leq \epsilon \dist_G (u, s) + 2 \epsilon \Delta $ and thus Inequality~\eqref{eq:bound_on_tentative_level_case1} implies that
\begin{align*}
\lev (x) + w_{H''} (v, x) + w_{H''} (u, v) &\leq \frac{(\alpha + 2 \epsilon) \dist_G (u, s) + \gamma_i + 2 \epsilon \Delta}{\varphi} \\
 &\leq \frac{(\alpha + 2 \epsilon) \left( D + \sum_{0 \leq i' \leq i-1} s_{i'} \right) + \gamma_i + 2 \epsilon \Delta}{\varphi} \\
 &\leq \frac{(\alpha + 2 \epsilon) D + (\alpha + 1 + \epsilon) \left( \sum_{0 \leq i' \leq i-1} w_{i'} \right) + \gamma_i + 2 \epsilon \Delta}{\varphi} \\
 &\leq \frac{(\alpha + 2 \epsilon) D + \gamma_0 + 2 \epsilon \Delta}{\varphi} \\
 &= \frac{(\alpha + 2 \epsilon) D + \gamma}{\varphi} \\
 &\leq \frac{(\alpha + 2 \epsilon) D + \epsilon n^{1/p} \Delta}{\varphi} \\
 &= \frac{(\alpha + 2 \epsilon) D}{\varphi} + (p+1) n^{1/p} = L
\end{align*}
where the last inequality follows from \Cref{lem:bound_on_additive_error}.
As the maximum level in the monotone ES-tree is $ L $ and the edge $ (v, x) $ is not stretched, it follows from \Cref{lem:level_weight_inequality} that $ \lev (v) \leq \lev (x) + w_{H''} (v, x) $ and since $ u $ is not stretched, we have
\begin{equation*}
\lev (u) \leq \lev (v) + w_{H''} (u, v) \leq \lev (x) + w_{H''} (v, x) + w_{H''} (u, v) \, .
\end{equation*}
and thus
\begin{equation*}
\distest (s, u) = \lev (u) \cdot \varphi \leq (\lev (x) + w_{H''} (v, x) + w_{H''} (u, v)) \cdot \varphi \leq (\alpha + \epsilon) \dist_G (u, s) + \gamma_i + h (u, i) \cdot \varphi
\end{equation*}

\noindent \underline{Case 2: $ v \notin B (u) $}

By Property~\ref{prop:finding_node_of_higher_priority2} we know that there is some node $ v' $ of priority $ j' > i $ such that $ u \in B (v') $ and $ \dist_G (u, v') \leq s (\dist_G (u, v), j'-i) $.
By \Cref{lem:auxiliary_inequality_distance_to_next_node} we therefore have
\begin{equation*}
\dist_G (u, v') \leq s (r_i, j'-i) \leq s (r_{j'-1}, 1) = s_{j'-1} \, .
\end{equation*}
From the definition of $ F $ and Property~\ref{prop:approximation_in_balls2} it now follows that $ F $ contains the edge $ (u, v') $ of weight
\begin{equation}
\dist_G (u, v') \leq w_F (u, v') = \hat{\distest} (u, v') \leq \alpha \dist_G (u, v') + \beta \leq \alpha s_{j'-1} + \beta = w_{j'-1} \label{eq:upper_and_lower_bound_on_hop_set_edge}
\end{equation}
Since $ j' \leq p-1 $ we have $ w_{j'-1} \leq w_{p-2} \leq r_{p-1} $.
As $ r_{p-1} \leq n^{1/p} \Delta $, by \Cref{lem:bound_on_radius}, we conclude that the edge $ (u, v') $ is contained $ H $ and thus also in $ H' $ and $ H'' $.

We first prove the following inequality, which will allow us among other things to apply the induction hypothesis on $ v' $.

\begin{claim}\label{claim:hops_inequality_2}
$ h (v', j') + 1 \leq h (u, i) $
\end{claim}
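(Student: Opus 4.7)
The plan is to prove $h(v', j') + 1 \leq h(u, i)$ by combining the triangle inequality with a clean numerical comparison between the radii $r_i$ and $r_{j'}$, and then splitting into two easy subcases according to whether $\dist_G(v', s)$ exceeds $r_{j'}$.

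The first and key step is to establish the inequality $s_{j'-1} + r_i \leq r_{j'}$. Since $j' \geq i + 1$ and the sequence $(r_\ell)$ is non-decreasing (because each $w_\ell \geq 0$, and $r_0 = \Delta \leq r_1$ from the recursion), we have $r_i \leq r_{j'-1}$, and from the recursive definition
\begin{equation*}
r_{j'} - r_{j'-1} = \frac{(\alpha + 1 + \epsilon) w_{j'-1}}{\epsilon} \geq w_{j'-1} = \alpha s_{j'-1} + \beta \geq s_{j'-1} \, ,
\end{equation*}
where the first inequality uses $(\alpha + 1 + \epsilon)/\epsilon \geq 1$ and the last uses $\alpha \geq 1$ and $\beta \geq 0$. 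Plugging in the hypothesis $\dist_G(u, v') \leq s_{j'-1}$ derived just before the claim, the triangle inequality then yields
\begin{equation*}
\dist_G(v', s) - r_{j'} \leq \dist_G(u, s) + s_{j'-1} - r_{j'} \leq \dist_G(u, s) - r_i \, .
\end{equation*}

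With this comparison in hand, I would split into two cases. If $\dist_G(v', s) \leq r_{j'}$, the defining formula gives $h(v', j') = p + 1 - j' \leq p - i$ using $j' \geq i + 1$, and since $u \neq s$ (that case was already disposed of earlier in the proof of \Cref{lem:approximation_guarantee_priority}) the bound $h(u, i) \geq p + 1 - i$ holds trivially, so $h(v', j') + 1 \leq p + 1 - i \leq h(u, i)$. Otherwise $\dist_G(v', s) > r_{j'}$, and then the previous display forces $\dist_G(u, s) > r_i$ as well, so both $h$-values use their non-trivial expressions; monotonicity of the ceiling in its argument gives
\begin{equation*}
\left\lceil \frac{\dist_G(v', s) - r_{j'}}{\Delta} \right\rceil \leq \left\lceil \frac{\dist_G(u, s) - r_i}{\Delta} \right\rceil \, ,
\end{equation*}
and combining this with $j' - i \geq 1$ yields $h(v', j') + 1 \leq h(u, i)$ by term-by-term comparison.

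I do not foresee any real obstacle. The only point demanding care is the chain of elementary monotonicities feeding into the radius gap $s_{j'-1} \leq r_{j'} - r_i$; once this clean bound is in hand, the remainder is a routine case split.
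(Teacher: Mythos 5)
Your proof is correct and follows essentially the same route as the paper: the key inequality $r_i + s_{j'-1} \leq r_{j'}$ (which the paper merely asserts follows from the definition of $r_{j'}$), the triangle inequality, and a case split on whether $\dist_G(v',s)$ exceeds $r_{j'}$. One tiny caveat: your identity $r_{j'} - r_{j'-1} = (\alpha+1+\epsilon)w_{j'-1}/\epsilon$ only holds literally for $j' \geq 2$ (for $j'=1$ one must subtract $r_0 = \Delta$ from the closed form rather than telescope), but the needed bound $r_{j'} - r_{j'-1} \geq s_{j'-1}$ still holds there since $s_0 \geq \Delta$, so the argument goes through.
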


\begin{proof}
Remember that $ j' \geq i + 1 $.
If $ \dist_G (v', s) < r_{j'} $, we get
\begin{equation*}
h (v', j') + 1 \leq p + 1 - j' + 1 \leq p + 1 - i \leq h (u, i) \, .
\end{equation*}

If $ \dist_G (v', s) \geq r_{j'} $, then we use the inequality $ r_{j'} \geq r_i + s_{j'-1} $ (which easily follows from the definition of $ r_{j'} $ and the fact that $ \alpha \geq 1 $) and get
\begin{align*}
h (v', j') + 1 &= (p+1) \left\lceil \frac{\dist_G (v', s) - r_{j'}}{\Delta} \right\rceil + p + 1 - j' + 1 \\
 &\leq (p+1) \left\lceil \frac{\dist_G (v', s) - r_{j'}}{\Delta} \right\rceil + p + 1 - i - 1 + 1 \\
 &\leq (p+1) \left\lceil \frac{\dist_G (v', u) + \dist_G (u, s) - r_{j'}}{\Delta} \right\rceil + p + 1 - i \\
 &\leq (p+1) \left\lceil \frac{s_{j'-1} + \dist_G (u, s) - r_{j'}}{\Delta} \right\rceil + p + 1 - i \\
 &\leq (p+1) \left\lceil \frac{\dist_G (u, s) - r_i}{\Delta} \right\rceil + p - i \\
 &\leq (p+1) \left\lceil \frac{\max(\dist_G (u, s) - r_i, 0)}{\Delta} \right\rceil + p + 1 - i = h (u, i) \, . \qedhere
\end{align*}
\end{proof}

Having proved this claim, we go on with the proof of the lemma.
Note that we may apply the induction hypothesis on $ v' $ because by the triangle inequality we have
\begin{align*}
\dist_G (v', s) \leq \dist_G (u, s) + \dist_G (v', u) &\leq D + \sum_{0 \leq i' \leq i-1} s_{i'} + \dist_G (v', u) \\
 &\leq D + \sum_{0 \leq i' \leq i-1} s_{i'} + s_{j'-1} \leq D + \sum_{0 \leq i' \leq j'-1} s_{i'} \, .
\end{align*}

We will now show that
\begin{equation}
\lev (v') + w_{H''} (u, v') \leq \frac{(\alpha + \epsilon) \dist_G (u, s) + \gamma_i + h (u, i) \cdot \varphi}{\varphi} \label{eq:bound_on_tentative_level_case2}
\end{equation}
as follows:
\begin{align*}
(\lev (v') &+ w_{H''} (u, v')) \cdot \varphi && \text{\small ($ u $ not stretched)} \\
 &= \distest (v', s) + w_{H''} (u, v') \cdot \varphi && \text{\small (definition of $ \distest (v', s) $)} \\
 &= \distest (v', s) + w_{H'} (u, v') && \text{\small (definition of $ H'' $)} \\
 &\leq \distest (v', s) + w_{H} (u, v') + \varphi && \text{\small (property of $ w_{H'} (u, v') $)} \\
 &\leq \distest (v', s) + w_{F} (u, v') + \varphi && \text{\small (definition of $ H $)} \\
 &\leq (\alpha + \epsilon) \dist_G (v', s) + \gamma_{j'} + h (v', j') \cdot \varphi + w_F (u, v') + \varphi && \text{\small (induction hypothesis)} \\
 &= (\alpha + \epsilon) \dist_G (v', s) + \gamma_{j'} + w_F (u, v') + (h (v', j') + 1) \cdot \varphi && \text{\small (rearranging terms)} \\
 &\leq (\alpha + \epsilon) \dist_G (v', s) + \gamma_{j'} + w_F (u, v') + h (u, i) \cdot \varphi && \text{\small (Claim~\ref{claim:hops_inequality_2})} \\
 &\leq (\alpha + \epsilon) (\dist_G (v', u) + \dist_G (u, s) ) + \gamma_{j'} + w_F (u, v') + h (u, i) \cdot \varphi && \text{\small (triangle inequality)} \\
 &\leq (\alpha + \epsilon) (w_F (u, v') + \dist_G (u, s) ) + \gamma_{j'} + w_F (u, v') + h (u, i) \cdot \varphi && \text{\small (by Inequality~\eqref{eq:upper_and_lower_bound_on_hop_set_edge})} \\
 &= (\alpha + \epsilon) \dist_G (u, s) + \gamma_{j'} + (\alpha + \epsilon + 1) w_F (u, v') + h (u, i) \cdot \varphi && \text{\small (rearranging terms)} \\
 &\leq (\alpha + \epsilon) \dist_G (u, s) + \gamma_{j'} + (\alpha + \epsilon + 1) w_{j'-1} + h (u, i) \cdot \varphi && \text{\small (by Inequality~\eqref{eq:upper_and_lower_bound_on_hop_set_edge})} \\
 &= (\alpha + \epsilon) \dist_G (u, s) + \gamma_{j'-1} + h (u, i) \cdot \varphi && \text{\small (definition of $ \gamma_{j'-1} $)} \\
 &\leq (\alpha + \epsilon) \dist_G (u, s) + \gamma_i + h (u, i) \cdot \varphi && \text{\small ($ \gamma_i \geq \gamma_{j'-1} $ as $ j' \geq i+1 $)} \, .
\end{align*}

By \Cref{lem:upper_bound_on_total_rounding_error} we have $ h (u, i) \cdot \varphi \leq \epsilon \dist_G (u, s) + 2 \epsilon \Delta $ and thus Inequality~\eqref{eq:bound_on_tentative_level_case2} implies that
\begin{equation*}
\lev (v') + w_{H''} (u, v') \leq \frac{(\alpha + 2 \epsilon) \dist_G (u, s) + \gamma_i + 2 \epsilon \Delta}{\varphi}
 \leq \frac{(\alpha + 2 \epsilon) D}{\varphi} + (p+1) n^{1/p} = L \, .
\end{equation*}
As the maximum level in the monotone ES-tree is $ L $ and $ u $ is not stretched, it follows from \Cref{lem:level_weight_inequality} that $ \lev (u) \leq \lev (v') + w_{H''} (u, v') $ and thus
\begin{align*}
\distest (s, u) &= \lev (u) \cdot \varphi \leq (\lev (v') + w_{H''} (u, v')) \cdot \varphi \leq (\alpha + \epsilon) \dist_G (u, s) + \gamma_i + h (u, i) \cdot \varphi \, . \qedhere
\end{align*}
\end{proof}

\section{Putting Everything Together}\label{sec:putting_together}

In the following we combine the results of \Cref{sec:from_sssp_to_balls} and \Cref{sec:from_balls_to_sssp} to obtain decremental algorithms for approximate SSSP and approximate APSP.

\subsection{Approximate SSSP}

We first show how to obtain an algorithm for approximate SSSP.
First, we obtain an algorithm that provides approximate distance for all nodes that are at distance of at most $ R $ from the source, where $ R $ is some range parameter.
We use a hierarchical approach to obtain this algorithm:
Given an algorithm for maintaining approximate shortest paths, we obtain an algorithm for maintaining approximate balls, which in turn gives us an algorithm for maintaining approximate shortest paths for a larger range of distances than the initial algorithm.
This scheme is repeated several times and can be ``started'' with the (exact) ES-tree.

\begin{lemma}\label{lem:restricted_approximate_shortest_paths}
For every $ R \geq n $ and every $ 0 < \epsilon \leq 1 $, there is a decremental approximate SSSP algorithm that, given a fixed source node $ s $, maintains, for every node $ v $, a distance estimate $ \distest (s, v) $ such that $ \distest (s, v) \geq \dist_G (s, v) $ and if $ \dist_G (s, v) \leq R $, then $ \distest (s, v) \leq (1 + \epsilon) \dist_G (s, v) $.
It has a total update time of $ O (m^{1 + O ((\log{\log{R}})/q)} R^{2/q} + n) $ in expectation, where
\begin{equation*}
q = \left\lfloor \sqrt{ \left\lfloor \frac{\sqrt{\log{n}}}{\sqrt{\log{\left( \frac{12 \cdot 4^3 \log{n}}{\epsilon} \right)}}} \right\rfloor } \right\rfloor
\end{equation*}
and, after every update in $ G $, returns each node $ v $ such that $ \distest (s, v) $ has changed together with the new value of $ \distest (s, v) $.
\end{lemma}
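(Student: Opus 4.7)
The plan is to construct a hierarchy of $\ell = O(\log\log R)$ SSSP algorithms, alternating applications of \Cref{pro:from_sssp_to_balls} and \Cref{pro:from_balls_to_sssp}, with the classic ES-tree (\Cref{lem:ES-tree}) as the base case. Level $i$ will cover distances in an extended range with slightly worse approximation than level $i-1$, and the final distance estimate for each node is taken as the minimum across all levels, which preserves the monotonicity of estimates as well as Property A3.

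\textbf{Base case.} I would run the ES-tree from $s$ up to depth $D_0 = R^{2/q}$, giving an exact SSSP data structure (so $\alpha_0 = 1$, $\beta_0 = 0$) with total update time $O(m R^{2/q})$, trivially satisfying Properties \ref{prop:no_underestimation_of_distance}--\ref{prop:returning_updated_nodes} of \Cref{pro:from_sssp_to_balls}.

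\textbf{Inductive step.} Given the level-$i$ SSSP algorithm with depth $D_i$ and approximation $(\alpha_i, \beta_i)$, I would first feed it into \Cref{pro:from_sssp_to_balls} with priority parameter $p = q^2$ to obtain a level-$i$ approximate-balls algorithm with the same $(\alpha_i, \beta_i)$ guarantee plus a structural function $s_i(\cdot,\cdot)$ satisfying Property \ref{prop:finding_node_of_higher_priority2}. I would then feed the balls into \Cref{pro:from_balls_to_sssp}, setting $\Delta_{i+1} = D_i/n^{1/p}$ (the largest value allowed by $n^{1/p}\Delta \le \hat D = D_i$) and $D_{i+1}$ to the chosen target depth. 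This yields a level-$(i+1)$ SSSP algorithm with approximation $(\alpha_i + \epsilon', \epsilon' D_i)$, where $\epsilon' = \Theta(\epsilon/\ell)$ is a fixed per-level slack. The additive error $\epsilon' D_i$ is absorbed into a multiplicative guarantee in the regime $\dist_G(s,v) \ge D_i$, which is precisely the new range that level $i+1$ is responsible for (the smaller regime is still covered by the lower-level estimate via the minimum).

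\textbf{Parameter choice and running time.} I would choose a depth-growth schedule so that $D_\ell \ge R$ in $\ell = O(\log\log R)$ levels and split the $\epsilon$-budget so that $\prod_{i\le \ell}(1+\epsilon') \le 1+\epsilon$. The definition of $q$ in the lemma statement is calibrated exactly so that $p = q^2 \le \sqrt{\log n / \log(4a_i^3/\epsilon')}$ holds at every level, fulfilling the hypothesis of \Cref{pro:from_balls_to_sssp} uniformly in $i$. The total update time then telescopes: each \Cref{pro:from_sssp_to_balls}-layer contributes an $m^{O(1/p)} \polylog$ factor from the $m^{1+1/p}\log D$ term and from the $\sum_j m^{1-j/p} T(m_j, n_j)$ recursion, and each \Cref{pro:from_balls_to_sssp}-layer contributes an $n^{O(1/p)}$ factor together with the ratio $D_{i+1}/\Delta_{i+1} = n^{O(1/p)}$. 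Unrolling with $p = q^2$ and $\ell = O(\log\log R)$ gives the claimed $\tilde O(m^{1 + O(\log\log R / q)} R^{2/q})$ bound.

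\textbf{Main obstacle.} The hardest part is the consistent propagation of the structural function $s(\cdot,\cdot)$ required by Property \ref{prop:finding_node_of_higher_priority2} across the tower: I need to show that composing it through $\ell$ layers still satisfies the recurrence $s(x,\ell+1) \le \bigl(a(\alpha+1+\epsilon)(\alpha s(x,\ell)+\beta)+\beta\bigr)/\epsilon + b$ with the inductively bounded $(\alpha_i,\beta_i)$ of that layer. Simultaneously I must verify that the constraints $\Delta_i \ge b_i$, $n^{1/p}\Delta_i \le \hat D_i$, $D_i \ge \Delta_i$, and the $p$-bound all continue to hold after $\epsilon$ is split into $\Theta(\log\log R)$ pieces. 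This will require a careful induction on $i$ maintaining the invariants $\alpha_i \le 1 + O(\epsilon)$, $\beta_i \le \epsilon' D_{i-1}$, and $4 a_i^3/\epsilon' \le 8\cdot 4^3 \log n/\epsilon$, which match the logarithmic constants hard-coded into the definition of $q$.
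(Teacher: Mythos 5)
Your overall architecture is exactly the paper's: start from an exact ES-tree of bounded depth, alternate \Cref{pro:from_sssp_to_balls} and \Cref{pro:from_balls_to_sssp} to build a tower of SSSP structures of increasing depth, take the minimum of the estimates over all levels (which preserves \ref{prop:no_underestimation_of_distance} and \ref{prop:returning_updated_nodes}), absorb each level's additive error $\epsilon' n^{1/p}\Delta$ multiplicatively in the range that the new level is responsible for, and split the $\epsilon$-budget across levels. The telescoping of the $m^{1/p}$ and $\log R$ factors through the $\sum_i m^{1-i/p} T(m_i,n_i)$ recursion is also the right accounting.

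The genuine gap is your level schedule: you cannot cover the range $R$ with $\ell = O(\log\log R)$ levels while keeping the claimed running time. The cost of \Cref{pro:from_balls_to_sssp} at a given level is $\tilde O\bigl((\alpha D/\Delta + n^{1/p}) m^{1+1/p}/\epsilon'\bigr)$, so the per-level budget of $R^{2/q}\cdot m^{O(1/p)}$ forces $D/\Delta \le R^{2/q}$; combined with the constraint $n^{1/p}\Delta \le \hat D$ (your own choice $\Delta_{i+1} = D_i/n^{1/p}$ is already maximal), the reachable depth grows by at most a factor $R^{2/q}$ per level. Starting from $D_0 = R^{2/q}$ (itself capped by the $O(mD_0)$ cost of the base ES-tree), reaching depth $R$ therefore requires $\Theta(q)$ levels, not $O(\log\log R)$. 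This is what the paper does: it uses $q-1$ levels with $\Delta_k = R^{k/q}$ and $D_k = R^{(k+2)/q}$, so that $D_k/\Delta_k = R^{2/q}$ is constant across levels and $D_{q-2}=R$. The $\log\log R$ in the exponent of the final bound does not come from the number of levels; it comes from the $(\log R)^{q}$ factor accumulated over the $\Theta(q)$ levels of recursion, bounded via $(\log R)^{q} \le (n^{1/p})^{\log\log R} \le m^{(\log\log R)/q}$. Consequently your $\epsilon$-split should be $\epsilon' = \Theta(\epsilon/q)$ rather than $\Theta(\epsilon/\log\log R)$ (this is also why the hard-coded constant in $q$ involves $\log n/\epsilon$ rather than $\log\log R/\epsilon$: one only needs $\epsilon' \ge \epsilon/(2\log n)$). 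If you replace your $O(\log\log R)$-level schedule with the fixed geometric schedule of $\Theta(q)$ levels, the rest of your plan, including the propagation of the structural function $s(\cdot,\cdot)$ with $s(x,l)=a(a+1)^l x$ and $a=4$, goes through as in the paper.
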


\begin{proof}
In the proof we will use the following values.
We set $ a = 4 $,
\begin{equation*}
p = \left\lfloor \frac{\sqrt{\log{n}}}{\sqrt{\log{\left( \frac{12 a^3 \log{n}}{\epsilon} \right)}}} \right\rfloor
\end{equation*}
and $ q = \lfloor \sqrt{p} \rfloor $.
Furthermore we set $ \epsilon' = \epsilon / (2 (q-2)) $ and for every $ 0 \leq k \leq q-2 $ we set $ \alpha_k = 1 + 3 k \epsilon' \leq 1 + \epsilon $, $ a_k = 2 \alpha_k \leq a $, $ \Delta_k = R^{k/q} $, and $ D_k = R^{(k+2)/q} $.

The heart of our proof is the following claim which gives us decremental approximate SSSP algorithms for larger and larger depths, until finally the full range $ R $ is covered.

\begin{claim}
For every $ 0 \leq k \leq q-2 $, there is a decremental approximate SSSP algorithm $ \textsc{ApproxSSSP}_k $ with the following properties:
\begin{enumerate}[label=\textbf{A\arabic*}]
\item $ \distest (s, v) \geq \dist_G (s, v) $
\item If $ \dist_G (s, v) \leq D_k $, then $ \distest (s, v) \leq \alpha_k \dist_G (s, v) $.
\item The expected total update time of $ \textsc{ApproxSSSP}_k $ is
\begin{equation*}
T_k (m, n) = O (p \log{n} \log{R})^k \cdot O (m^{1 + k/p} R^{2/q} / \epsilon') + O (n) \, .
% OLD
%T_k (m, n) = O (p m^{1 + k/p} R^{2/q} (\log{R})^k \log{n} / \epsilon' + n) \, .
\end{equation*}
\item After every update in $ G $, $ \textsc{ApproxSSSP}_k $ returns each node $ v $ such that $ \distest (s, v) $ has changed together with the new value of $ \distest (s, v) $.
\end{enumerate}
\end{claim}

\begin{proof}
We prove the claim by induction on $ k $.
In the base case $ k = 0 $ we use the (exact) ES-tree (see \Cref{lem:ES-tree}), which for distances up to $ D \leq D_0 $ has a total update time of $ O (m D_0 + n) = O (m R^{2/q} + n) $ and thus has all claimed properties

We now consider the induction step.
We apply \Cref{pro:from_sssp_to_balls} to obtain a decremental algorithm $ \textsc{ApproxBalls}_k $ (with parameters $ \hat{\p} = p $ and~$ \hat{\epsilon} = 1 $) that maintains for every node $ u \in V $ a set of nodes $ B_k (u) $ and a distance estimate $ \hat{\distest}_k (u, v) $ for every node $ v \in B_k (u) $ such that:
\begin{enumerate}[label=\textbf{B\arabic*}]
\item For every node $ u $ and every node $ v \in B_k (u) $ we have $ \dist_G (u, v) \leq \hat{\distest}_k (u, v) \leq  \alpha_{k-1} \dist_G (u, v) $.
\item For all $ x \geq 0 $, set $ s_k (x, 0) = x $, and for all $ x \geq 0 $ and $ l \geq 1 $, set $ s_k (x, l) = a_{k-1} (a_{k-1} + 1)^{l-1} x $. Then for every $ 0 \leq i \leq p - 1 $, every node~$ u $ of priority~$ i $, and every node~$ v $ such that $ s_k (\dist_G (u, v), p - 1 - i) \leq D_k $, either (1) $ v \in B_k (u) $ or (2) there is some node $ v' $ of priority $ j > i $ such that $ u \in B_k (v') $ and $ \dist_G (u, v') \leq s_k (\dist_G (u, v), j - i) $.
\item In expectation, $ \sum_{u \in V} \mathcal{B}_k (u) = O (p m^{1+1/p} \log{D_k}) $, where $ \mathcal{B}_k (u) $ denotes the number of nodes ever contained in $ B_k (u) $.
\item The total update time of $ \textsc{ApproxBalls}_k $ is
\begin{equation*}
t_k (m, n) = O \left( \left(p m^{1 + 1/p} + \sum_{0 \leq i \leq p - 1} \frac{m}{m^{i/p}} \cdot T_{k-1} (m_i, n_i) \right) \log{n} \log{D_k} + p T_{k-1} (m, n) \right)
\end{equation*}
in expectation, where, for each $ 0 \leq i \leq p - 1 $, $ m_i = O (m^{(i+1)/p}) $ and $ n_i = O (m^{(i+1)/p}) $.
\end{enumerate}

Note that $ D_k \leq R $ and thus $ \log{D_k} \leq \log{R} $ and remember that by the induction hypothesis we have
\begin{equation*}
T_k (m, n) = O (p \log{n} \log{R})^{k-1} \cdot O (m^{1 + (k-1)/p} R^{2/q} / \epsilon') + O (n) \, .
% OLD
% T_{k-1} (m, n) = O (p m^{1+(k-1)/p} R^{2/q} (\log{R})^{k-1} \log{n} / \epsilon' + n) \, .
\end{equation*}
To analyze $ \tfrac{m}{m^{i/p}} \cdot T_{k-1} (m_i, n_i) $ for every $ 0 \leq i \leq p - 1 $, observe that $ \tfrac{m}{m^{i/p}} \cdot (m^{(i+1)/p})^{1+(k-1)/p} \leq m^{1 + k/p} $
because
\begin{align*}
1 - i/p + ((i+1)/p) \cdot (1 + (k-1)/p) &= 1 + 1/p + ((i+1)/p) ((k-1)/p) \\
 &\leq 1 + 1/p + (k-1)/p \\
 &= 1 + k/p \, .
\end{align*}
It now follows that
\begin{equation*}
t_k (m, n) = O (p \log{n} \log{R})^k \cdot O (m^{1 + k/p} R^{2/q} / \epsilon') + O (n) \, .
% OLD
% t_k (m, n) = O (p m^{1 + k/p} R^{2/q} (\log{R})^k \log{n} / \epsilon' + n) \, .
\end{equation*}

We now want to argue that we may apply \Cref{pro:from_balls_to_sssp} to obtain an approximate decremental SSSP algorithm $ \textsc{ApproxSSSP}_k' $ (with parameters $ p $, $ \Delta_k $ $ D_k $, and $ \epsilon' $).
We first show that
\begin{equation*}
p \leq \frac{\sqrt{\log{n}}}{\sqrt{\log{\left( \frac{4 a^3}{\epsilon'} \right)}}} \, ,
\end{equation*}
First note that $ q \leq \log{n} $ and thus $ \epsilon' = \epsilon ( 2 (q-2) \geq \epsilon / (2 q) \geq \epsilon / (2 \log{n}) $.
It follows that
\begin{equation*}
\frac{\sqrt{\log{n}}}{\sqrt{\log{\left( \frac{4 a^3}{\epsilon'} \right)}}} \geq \frac{\sqrt{\log{n}}}{\sqrt{\log{\left( \frac{12 a^3 \log{n}}{\epsilon} \right)}}} \geq \left\lfloor \frac{\sqrt{\log{n}}}{\sqrt{\log{\left( \frac{12 \cdot 4^3 \log{n}}{\epsilon} \right)}}} \right\rfloor = p \, .
\end{equation*}
Note also that for all $ x \geq 0 $ we have $ s_k (x, 1) = a_{k-1} x $ and for all $ x \geq 0 $ and $ l \geq 1 $ we have
\begin{equation*}
s_k (x, l + 1) = (a_{k-1} + 1) s_k (x, l) \leq 2 a_{k-1} s_k (x, l) \leq (\alpha_{k-1} + 1 + \epsilon') \alpha_{k-1} a_{k-1} s_k (x, l) \, .
\end{equation*}
We therefore may apply \Cref{pro:from_balls_to_sssp} to obtain an approximate decremental SSSP algorithm $ \textsc{ApproxSSSP}_k' $ (with parameters $ p $, $ \Delta_k $, $ D_k $, and $ \epsilon' $) that maintains, for every node $ v \in V $, a distance estimate $ \distest' (s, v) $ such that:
\begin{enumerate}[label=\textbf{A\arabic*'}]
\item $ \distest' (s, v) \geq \dist_G (s, v) $
\item If $ \dist_G (s, v) \leq D_k $, then $ \distest' (s, v) \leq (\alpha_k + 2 \epsilon') \dist_G (s, v) + \epsilon' n^{1/p} \Delta_k $
\item The total update time of $ \textsc{ApproxSSSP}_k' $ is
\begin{equation*}
T_k' (m, n) = t_k (m, n) + O \left(p \left( \alpha_k D_k / \Delta_k + n^{1/p} \right) \left(m + \sum_{u \in V} \mathcal{B}_k (u) \right) / \epsilon' + n \right) \, .
\end{equation*}
\item After every update in $ G $, $ \textsc{ApproxSSSP}_k' $ returns each node $ v $ such that $ \distest (s, v) $ has changed together with the new value of $ \distest (s, v) $.
\end{enumerate}

Note that $ \alpha_k \leq 1 + \epsilon \leq 2 $ and $ D_k / \Delta_k = R^{2/q} $.
Since $ q \leq p $ and $ R \geq n $ we have $ n^{1/p} \leq R^{2/q} $.
We also have $ \sum_{u \in V} \mathcal{B}_k (u) = O (p m^{1 + 1/p} \log{R}) $ in expectation.
Therefore the expected total update time of $ \textsc{ApproxSSSP}_k' $ is
\begin{align*}
T_k' (m, n) &= O (p \log{n} \log{R})^k \cdot O (m^{1 + k/p} R^{2/q} / \epsilon') + O (p^2 m^{1+1/p} R^{2/q} \log{R} / \epsilon' + n)
% OLD
% T_k' (m, n) = O (p m^{1 + k/p} R^{2/q} (\log{R})^k \log{n} / \epsilon' + p^2 m^{1+1/p} R^{2/q} \log{R} / \epsilon' + n)
\end{align*}
and since $ p \leq \log{n} $ it follows that
\begin{equation*}
T_k' (m, n) = O (p \log{n} \log{R})^k \cdot O (m^{1 + k/p} R^{2/q} / \epsilon') + O (n) \, .
\end{equation*}

Let $ \textsc{ApproxSSSP}_k $ denote the algorithm that internally runs both $ \textsc{ApproxSSSP}_k' $ and $ \textsc{ApproxSSSP}_{k-1} $ and additionally maintains, for every node $ v $, the value $ \distest_k (s, v) = \min (\distest_k' (s, v), \distest_{k-1} (s, v)) $.
Since both $ \textsc{ApproxSSSP}_k' $ and $ \textsc{ApproxSSSP}_{k-1} $ return, after each update in $ G $, every node $ v $ for which $ \distest (s, v) $ has changed, and the minimum can be computed in constant time, $ \textsc{ApproxSSSP}_k $ has the same asymptotic total update time as $ \textsc{ApproxSSSP}_k' $.
It remains to show that $ \distest_k (s, v) $ fulfills the desired approximation guarantee for every node $ v $.
Since both $ \distest_k' (s, v) \geq \dist_G (s, v) $ and $ \distest_{k-1} (s, v) \geq \dist_G (s, v) $ also $ \distest_k (s, v) \geq \dist_G (s, v) $.
Furthermore, we know that if $ \dist_G (s, v) \leq D_k $, then $ \distest_k' (s, v) \leq \epsilon n^{1/p} \Delta_k $.
Let $ v $ be a node such that $ \dist_G (s, v) \leq D_k $.
If $ \dist_G (s, v) \leq D_{k-1} $, then $ \distest_k (s, v) \leq \distest_{k-1} (s, v) \leq \alpha_{k-1} \dist_G (s, v) \leq \alpha_k \dist_G (u, v) $.
If $ \dist_G (s, v) \geq D_{k-1} $, then
\begin{align*}
\distest_k (s, v) \leq \distest_k' (s, v) &\leq (\alpha_{k-1} + 2 \epsilon') \dist_G (s, v) + \epsilon' n^{1/p} \Delta_k \\
&\leq (\alpha_{k-1} + 2 \epsilon') \dist_G (s, v) + \epsilon' D_{k-1} \leq (\alpha_{k-1} + 3 \epsilon') \dist_G (s, v) \\
&= \alpha_k \dist_G (s, v) \, .
\end{align*}
This finishes the proof of the claim.
\end{proof}

The lemma now follows from the claim by observing that $ \textsc{ApproxSSSP}_{q-2} $ is the desired decremental approximate SSSP algorithm.
The correctness simply follows from the choice $ D_{q-2} = R $.
The expected total update time is
\begin{equation*}
T_{q-2} (m, n) = O (p \log{n} \log{R})^{q-2} \cdot O (m^{1 + (q-2)/p} R^{2/q} / \epsilon') + O (n) \, .
% OLD
%T_{q-2} (m, n) = O (p m^{1 + (q-2)/p} R^{2/q} (\log{R})^{q-2} \log{n} / \epsilon' + n) \, .
\end{equation*}
Remember that $ q = \lfloor \sqrt{p} \rfloor $ and thus $ (q-2)/p \leq q/p \leq 1 / \sqrt{p} \leq 1 / q $.
By the definition of~$ p $ we have $ (2/\epsilon')^p \leq n^{1/p} $ and thus $ (2/\epsilon')^q \leq (2/\epsilon')^p \leq n^{1/p} \leq n^{1/q} $ and furthermore, since $ p \leq \log{n} $ and $ R \geq n $, $ p \leq \log{n} \leq (\log{R})^q \leq (\log{R})^p = (2^p)^{\log{\log{R}}} \leq (n^{1/p})^{\log{\log{R}}} = n^{(\log{\log{R}})/p} \leq n^{(\log{\log{R}})/q} $.
It follows that the total update time is
\begin{equation*}
T_{q-2} (m, n) = O (m^{1 + O ((\log{\log{R}})/q)} R^{2/q})+ n) \, . \qedhere
\end{equation*}
\end{proof}

We can turn the algorithm above into an algorithm for the full distance range by using the rounding technique once more.

\begin{theorem}\label{thm:approximate_SSSP}
For every $ 0 < \epsilon \leq 1 $, there is a decremental approximate SSSP algorithm that, given a fixed source node $ s $, maintains, for every node $ v $, a distance estimate $ \distest (s, v) $ such that $ \dist_G (s, v) \leq \distest (s, v) \leq (1 + \epsilon) \dist_G (s, v) $.
It has constant query time and a total update time of
\begin{equation*}
O (m^{1 + O(\log^{5/4} ((\log{n}) / \epsilon) / \log^{1/4}{n}) } \log{W} + n)
\end{equation*}
in expectation.
If $ 1 / \epsilon = \polylog{n} $, then the total update time is $ O (m^{1 + o(1)} \log{W} + n) $ in expectation.
\end{theorem}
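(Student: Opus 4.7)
The plan is to reduce the full-distance-range problem to $O(\log(nW))$ invocations of the restricted-range algorithm from \Cref{lem:restricted_approximate_shortest_paths}, via the scale-and-round technique sketched in the ``Extension to Weighted Graphs'' paragraph of the overview. Concretely, for each scale $i \in \{0, 1, \ldots, \lfloor \log(nW) \rfloor\}$ I would maintain a rescaled graph $G_i$ on the same node set as $G$, with integer edge weights
\[
w_{G_i}(u,v) = \left\lceil \frac{w_G(u,v) \cdot n}{\epsilon \cdot 2^i} \right\rceil \, .
\]
Each deletion or weight increase in $G$ translates into a matching update in every $G_i$, so each $G_i$ evolves decrementally. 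A standard rounding argument shows that (i) if $\dist_G(u,v) \leq 2^{i+1}$ then $\dist_{G_i}(u,v) \leq 2n/\epsilon + n =: R$, and (ii) if $\dist_G(u,v) \geq 2^i$ then $\dist_{G_i}(u,v) \cdot \epsilon \cdot 2^i / n$ is a $(1 + O(\epsilon))$-approximation of $\dist_G(u,v)$.

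On each $G_i$ I would run an independent instance of \Cref{lem:restricted_approximate_shortest_paths} with range $R$ and slack $\epsilon' = \Theta(\epsilon)$, obtaining a distance estimate $\distest_i(s,v)$ for every node $v$. The final estimate is
\[
\distest(s,v) \;=\; \min_{\substack{i \,:\, \distest_i(s,v) \leq R}}\, \left\lceil \frac{\distest_i(s,v) \cdot \epsilon \cdot 2^i}{n} \right\rceil .
\]
Since $\dist_G(s,v)$ lies in some dyadic window $[2^{i^*}, 2^{i^*+1}]$, the scale-$i^*$ instance produces a valid $(1 + O(\epsilon))$-estimate, while other scales either return $\infty$ or overestimate safely, so taking the minimum is sound. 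Rescaling $\epsilon$ by a constant delivers the required $(1+\epsilon)$ approximation. Constant query time is achieved by maintaining, for every node $v$, the per-node minimum over the $O(\log(nW))$ scale-specific estimates incrementally: since each monotone ES-tree only raises levels, every $\distest_i(s,v)$ is non-decreasing, so the running minimum is too, and can be updated on the fly using the change notifications guaranteed by property~A4 of \Cref{lem:restricted_approximate_shortest_paths} (with total work absorbed into the update time).

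The running time is obtained by summing over the $O(\log(nW))$ scales. Each invocation costs $\tilde O\bigl(m^{1 + 3 \log\log R / q} R^{2/q}\bigr)$ with $R = O(n/\epsilon)$ and $q = \Theta\bigl(\log^{1/4} n / \log^{1/4}((\log n)/\epsilon)\bigr)$, per \Cref{lem:restricted_approximate_shortest_paths}. Since $\log\log R \leq O(\log((\log n)/\epsilon))$, the exponent of $m$ becomes $1 + O\bigl(\log^{5/4}((\log n)/\epsilon) / \log^{1/4} n\bigr)$, while $R^{2/q}$ contributes only a lower-order $m^{O(1/q)} = m^{o(1)}$ term that is dominated. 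Multiplying by $O(\log(nW))$ absorbs the $\log n$ factor into the $m^{O(\cdot)}$ exponent (using $\log n = m^{o(1)}$) and leaves the explicit $\log W$ factor in the final bound. When $1/\epsilon = O(\polylog n)$, $\log((\log n)/\epsilon) = O(\log\log n)$ and the exponent simplifies to $o(1)$, giving $O(m^{1+o(1)} \log W)$.

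The main obstacle is bookkeeping rather than a new idea. Specifically, one has to verify that the edge-weight rounding combined with the additive slack $\epsilon' n^{1/p} \Delta$ already present in \Cref{lem:restricted_approximate_shortest_paths} collapses into a single $(1+\epsilon)$ multiplicative error after the back-scaling $x \mapsto x \cdot \epsilon \cdot 2^i/n$, which forces a consistent choice of the $\Theta(\cdot)$ constants relating $\epsilon'$, $R$, and the scale spacing. A secondary check is confirming that the factor $R^{2/q}$ is indeed subordinate to $m^{3 \log\log R/q}$ throughout the admissible range $\epsilon \in (0, 1]$, which is a short algebraic comparison in the spirit of \Cref{lem:p_is_balancing}.
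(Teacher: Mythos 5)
Your proposal is correct and follows essentially the same route as the paper: the same per-scale rounding $w_{G_i}(u,v) = \lceil w_G(u,v)/\varphi_i\rceil$ with $\varphi_i = \epsilon 2^i/n$, the same $R = O(n/\epsilon)$ invocations of \Cref{lem:restricted_approximate_shortest_paths} on each scale, the same minimum-over-scales combination justified by the fact that rounding up never underestimates, and the same constant-query-time bookkeeping (the paper uses a per-node min-heap rather than your running minimum, but both are absorbed into the update time). The only nit is the outer $\lceil\cdot\rceil$ in your final formula for $\distest(s,v)$, which is unnecessary and could push the estimate above $(1+\epsilon)\dist_G(s,v)$ when the distance is small; simply omit it, as the estimate need not be an integer.
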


\begin{proof}
For every $ 0 \leq i \leq \lfloor \log(nW) \rfloor $ we define
\begin{equation*}
\varphi_i = \frac{\epsilon 2^i}{n} \, .
\end{equation*}
Let $ G_i' $ be the graph that has the same nodes and edges as $ G $ and in which every edge weight is rounded to the next multiple of $ \varphi_i $, i.e., every edge $ (u, v) $ in $ G_i' $ has weight
\begin{equation*}
w_{G_i'} (u, v) = \left\lceil \frac{w_G (u, v)}{\varphi_i} \right\rceil \cdot \varphi_i
\end{equation*}
where $ w_G (u, v) $ is the weight of $ (u, v) $ in $ G $.
This rounding guarantees that
\begin{equation*}
w_G (u, v) \leq w_{G_i} (u, v) \leq w_G (u, v) + \varphi_i
\end{equation*}
for every edge $ (u, v) $ of $ G $.
Furthermore we define $ G_i'' $ to be the graph that has the same nodes and edges as $ G_i' $ and in which every edge weight is scaled down by a factor of $ 1 / \varphi_i $, i.e., every edge $ (u, v) $ in $ G_i'' $ has weight
\begin{equation*}
w_{G_i''} (u, v) = \frac{w_{G_i'} (u, v)}{\varphi_i} = \left\lceil \frac{w (u, v)}{\varphi_i} \right\rceil \, .
\end{equation*}

The algorithm is as follows:
For every $ 0 \leq i \leq \lfloor \log(nW) \rfloor $ we use the algorithm of \Cref{lem:restricted_approximate_shortest_paths} on the graph $ G_i'' $ with $ R = 4 n / \epsilon $ to maintain a distance estimate $ \distest_i (s, v) $ for every node $ v $ that satisfies
\begin{itemize}
\item $ \distest_i (s, v) \geq \dist_{G_i''} (s, v) $ and
\item if $ \dist_{G_i''} (s, v) \leq R $, then $ \distest_i (s, v) \leq (1 + \epsilon) \dist_{G_i''} (s, v) $.
\end{itemize}
We let our algorithm return the distance estimate
\begin{equation*}
\distest (s, v) = \min_{0 \leq i \leq \lfloor \log{nW} \rfloor} \varphi_i \distest_i (s, v) \, .
\end{equation*}

We now show that there is some $ 0 \leq i \leq \lfloor \log(nW) \rfloor $ such that $ \varphi_i \distest_i (s, v) \leq (1 + 3 \epsilon) \dist_G (s, v) $.
As $ \distest (s, v) $ is the minimum of all the distance estimates, this implies that $ \distest (s, v) \leq (1 + 3 \epsilon) \dist_G (s, v) $.
In particular, we know that there is some $ 0 \leq i \leq \lfloor \log(nW) \rfloor $ such that $ 2^i \leq \dist_G (s, v) \leq 2^{i+1} $ since $ W $ is the maximum edge weight and all paths consist of at most $ n $ edges.
Consider a shortest path $ \pi $ from $ s $ to $ v $ in $ G $ whose weight is equal to $ \dist_G (s, v) $.
Let $ w_G (\pi) $ and $ w_{G_i'} (\pi) $ denote the weight of the path $ \pi $ in $ G $ and~$ G_i' $, respectively.
Since $ \pi $ consists of at most $ n $ edges we have $ w_{G_i'} (\pi) \leq w (\pi) + n \varphi_i $.
Therefore we get
\begin{align*}
\dist_{G_i'} (s, v) \leq w_{G_i'} (\pi) \leq w (\pi) + n \varphi_i = \dist_G (s, v) + \epsilon 2^i &\leq \dist_G (s, v) + \epsilon \dist_G (s, v) \\
 &= (1 + \epsilon) \dist_G (s, v) \, .
\end{align*}

Now observe the following:
\begin{align*}
\dist_{G_i''} (s, v) = \frac{\dist_{G_i'} (s, v)}{\varphi_i} \leq \frac{(1 + \epsilon) \dist_G (s, v)}{\varphi_i} \leq \frac{2 \dist_G (s, v)}{\varphi_i} &= \frac{2 \dist_G (s, v) n}{\epsilon 2^i} \\
 &\leq \frac{2 \cdot 2^{i+1} n}{\epsilon 2^i} = \frac{4 n}{\epsilon} = R \, .
\end{align*}
Since $ \dist_{G_i''} (s, v) \leq R $ we get $ \distest_i (s, v) \leq (1 + \epsilon) \dist_{G_i''} (s, v) $ by \Cref{lem:restricted_approximate_shortest_paths}.
Thus, we get
\begin{align*}
\varphi_i \distest_i (s, v) \leq \varphi_i((1 + \epsilon) \dist_{G_i''} (s, v)) = (1 + \epsilon) \dist_{G_i'} (s, v) &\leq (1 + \epsilon)^2 \dist_G (s, v) \\
 &\leq (1 + 3 \epsilon) \dist_G (s, v)
\end{align*}
as desired.

We now analyze the running time of this algorithm.
By \Cref{lem:restricted_approximate_shortest_paths}, for every $ 0 \leq i \leq \lfloor \log{(nW)} \rfloor $, maintaining $ \distest_i (s, v) $ on $ G_i'' $ for every node $ v $ takes time $ O (m^{1 + O ((\log{\log{R}})/q)} R^{2/q} + n) $, where
\begin{equation*}
q = \left\lfloor \sqrt{ \left\lfloor \frac{\sqrt{\log{n}}}{\sqrt{\log{\left( \frac{12 \cdot 4^3 \log{n}}{\epsilon} \right)}}} \right\rfloor } \right\rfloor
\end{equation*}
By our choice of $ R = 4 n / \epsilon $, the total update time for maintaining all these $ \lfloor \log{(nW)} \rfloor $ distance estimates is $ O (m^{1 + O ((\log{\log{(n / \epsilon)}})/q)} \log{W} / \epsilon) $ in expectation.
To obtain a $ (1 + \epsilon) $-approximation (instead of a $ (1 + 3 \epsilon) $-approximation, we simply run the whole algorithm with $ \epsilon' = \epsilon/3 $.
This results in a total update time of $ O (m^{1 + O ((\log{\log{(n / \epsilon)}})/q)} \log{W} / \epsilon) $, where
\begin{equation*}
q = \left\lfloor \sqrt{ \left\lfloor \frac{\sqrt{\log{n}}}{\sqrt{\log{\left( \frac{36 \cdot 4^3 \log{n}}{\epsilon} \right)}}} \right\rfloor } \right\rfloor
\end{equation*}
Now observe that $ 1 / \epsilon \leq n^{1/q} $ and that
\begin{align*}
O \left( \frac{\log{\log{\left( \frac{n}{\epsilon} \right)}}}{q} \right)
 = O \left( \frac{ \left( \log{\log{\left( \frac{n}{\epsilon} \right)}} \right) \left( \log \left( \frac{\log{n}}{\epsilon} \right) \right)^{1/4} }{(\log{n})^{1/4}} \right)
 = O \left( \frac{ \left( \log \left( \frac{\log{n}}{\epsilon} \right) \right)^{5/4} }{(\log{n})^{1/4}} \right) \, .
\end{align*}
The total update time therefore is
\begin{equation*}
O (m^{1 + O(\log^{5/4} ((\log{n}) / \epsilon) / \log^{1/4}{n}) } \log{W} + n) \, .
\end{equation*}
If $ 1 / \epsilon = \polylog{n} $, then the total update time is $ O (m^{1 + (\log^{5/4}{\log{n}}) / \log^{1/4}{n} } \log{W} + n) $, which is $ O (m^{1 + o(1)} \log{W} + n) $ since $ \lim_{x \to \infty} (\log^{5/4}{\log{n}}) / \log^{1/4}{n} = 0 $.

The query time of the algorithm described above is $ O (\log (nW)) $ as it has to compute $ \distest (s, v) = \min_{0 \leq i \leq \lfloor \log{nW} \rfloor} \varphi_i \distest_i (s, v) $ when asked for the approximate distance from $ s $ to $ v $.
We can reduce the query time to $ O (1) $ by using a min-heap for every node $ v $ that stores $ \distest_i (s, v) $ for all $ 0 \leq i \leq \lfloor \log(nW) \rfloor $.
This allows us to query for $ \distest (s, v) $ in constant time and does not increase our asymptotic bound on the total update time.
\end{proof}

\subsection{Approximate APSP}

We now show how to use our techniques to obtain a decremental approximate APSP algorithm.
This is conceptually simple now.
We use the approximate SSSP algorithm from \Cref{thm:approximate_SSSP} and plug it into the algorithm for maintaining approximate balls from \Cref{pro:from_sssp_to_balls}.
By using an adequate query procedure we can use the distance estimates maintained for the approximate balls to return the approximate distances between any two nodes.

\begin{theorem}
For every $ k \geq 2 $ and every $ 0 < \epsilon \leq 1 $, there is a decremental approximate APSP algorithm that upon a query for the approximate between any pair of nodes $ u $ and $ v $ returns a distance estimate $ \distest (u, v) $ such that $ \dist_G (u, v) \leq \distest (u, v) \leq {((2 + \epsilon)^k - 1)} \dist_G (u, v) $.
It has a query time of $ O (k^k) $ and a total update time of
\begin{equation*}
O (m^{1 + 1/k + O(\log^{5/4} ((\log{n}) / \epsilon) / \log^{1/4}{n}) } \log^2{W} + n)
\end{equation*}
in expectation.
If $ 1/\epsilon = \polylog{n} $, then the total update time is $ O(m^{1 + 1/k + o(1)} \log^2{W} + n) $ in expectation.
\end{theorem}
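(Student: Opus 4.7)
The plan is to invoke \Cref{pro:from_sssp_to_balls} with the decremental $(1+\epsilon')$-approximate SSSP algorithm of \Cref{thm:approximate_SSSP} plugged in as the black-box subroutine \textsc{ApproxSSSP}, using parameters $\p = k$, $\epsilon' = \Theta(\epsilon/k)$ (to absorb the geometric blow-up in the stretch analysis), and depth $D = 2^{O(k)} nW$, chosen large enough that the premise $s(\dist_G(u,v), k-1-i) \leq D$ of Property B2 holds for every pair of nodes in $G$. Since $T(m, n) = O(m^{1 + o(1)} \log W)$ and $\log D = O(\log n + \log W)$ whenever $k \leq \log n$, each term $m^{1-i/k} \cdot T(m^{(i+1)/k}, n_i)$ in the running-time bound of \Cref{pro:from_sssp_to_balls} is $O(m^{1 + 1/k + o(1)} \log W)$, so the ball-maintenance data structure runs in total update time $O(m^{1 + 1/k + O(\log^{5/4}((\log n)/\epsilon)/\log^{1/4} n)} \log^2 W)$, matching the bound claimed in the theorem (one $\log W$ factor from the SSSP subroutine, the other from the $\log D$ factor).

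To convert the maintained approximate balls $B(u)$ and their distance estimates $\distest(u, v)$ into a query procedure for arbitrary pairs, I additionally maintain, for each ordered pair $(u, j)$ with $0 \leq j \leq k-1$, a min-heap containing $(w, \distest(w, u))$ over all nodes $w$ of priority $j$ such that $u \in B(w)$; by Property B5, these heaps can be updated in amortized $O(\log n)$ time per event, which does not inflate the total update time. On top of these heaps I use the Thorup--Zwick-style recursive routine $\mathrm{Query}(u, v, j)$: return $\distest(u, v)$ if $v \in B(u)$; otherwise return the minimum, over $j' \in \{j, \ldots, k-1\}$, of $\distest(w_{j'}, u) + \mathrm{Query}(w_{j'}, v, j'+1)$, where $w_{j'}$ is the minimum entry in the heap for $(u, j')$. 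A distance query for $(u, v)$ calls $\mathrm{Query}(u, v, 0)$; since the priority threshold strictly increases at every recursive call the recursion depth is at most $k$, and since branching at each level is at most $k$, the total query time is $O(k^k)$.

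The stretch bound $((2+\epsilon)^k - 1)$ will be proved by induction on the remaining priority budget. I take $s(x, l)$ to be the minimal function satisfying the inequality of Property B2, namely $s(x, l) = a(a+1)^{l-1} x + ((a+1)^l - 1)b/a$ with $a = (1+\epsilon')^2$ and $b = 1$. Whenever $v \notin B(u)$, Property B2 guarantees a pivot $w$ of some priority $j' > i(u)$ satisfying $\dist_G(u, w) \leq s(\dist_G(u,v), j'-i(u))$, and this $w$ is among the candidates examined by $\mathrm{Query}$ (since the heap-top minimizes $\distest(w, u)$). Combining the $(1+\epsilon')$-approximation guarantee of $\distest$, the triangle inequality $\dist_G(w, v) \leq \dist_G(u, w) + \dist_G(u, v)$, and the inductive stretch bound for $\mathrm{Query}(w, v, j'+1)$ yields a recurrence of the form $\sigma_{l+1} \leq (1+\epsilon') + (2+\epsilon') \sigma_l$, which unrolls to $\sigma_k \leq (2+\epsilon')^k - 1 \leq (2+\epsilon)^k - 1$ after reparameterizing $\epsilon' = \Theta(\epsilon/k)$. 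The main obstacle will be to choose $s$ so that B2's functional inequality (which forces roughly a factor-2 blow-up per priority level) exactly matches the factor needed in the stretch recurrence; the delicate step is verifying that the pivot-distance bound at level $l$ interacts cleanly with the telescoping inductive bound, so that the single additive term $+1$ coming from $\beta$ does not accumulate into a genuine additive error in the final stretch.
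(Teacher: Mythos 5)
Your proposal follows essentially the same route as the paper: plug the $(1+\epsilon')$-approximate SSSP algorithm of \Cref{thm:approximate_SSSP} into \Cref{pro:from_sssp_to_balls} with $\p = k$, maintain per-node, per-priority heaps of pivots keyed by the ball distance estimates, answer queries with the Thorup--Zwick-style recursion giving $O(k^k)$ query time, and prove the stretch by induction on the priority with the telescoping $(a+1)^{k-i}$-type bound. The delicate points you flag (the geometric growth of $s(\cdot,\cdot)$ across priority levels and the additive $b$ term, which is absorbed since distances are at least $1$) are exactly the ones the paper's calculation resolves, so the plan is sound and matches the paper's proof.
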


\begin{proof}
We use the approximate SSSP algorithm of \Cref{thm:approximate_SSSP} that provides a $ (1 + \epsilon) $-approximation and has an total update time of
\begin{equation*}
T (m, n) = O (m^{1 + O(\log^{5/4} ((\log{n}) / \epsilon) / \log^{1/4}{n}) } \log{W} + n)
\end{equation*}
in expectation, and if $ 1 / \epsilon = \polylog{n} $, then the total update time is $ T (m, n) = O (m^{1 + o(1)} \log{W} + n)$ in expectation.
By \Cref{pro:from_sssp_to_balls} we can maintain approximate balls with $ D = 5^k n W $ in total update time
\begin{equation*}
t (m, n, \p, \epsilon) = O \left( \left(\p m^{1 + 1/\p} + \sum_{0 \leq i \leq \p - 1} \frac{m}{m^{i/\p}} \cdot T (m_i, n_i) \right) \p \log{n} \frac{\log{(nW)}}{\epsilon} + \p \cdot T (m, n) \right)
\end{equation*}
in expectation, where, for each $ 0 \leq i \leq \p - 1 $, $ m_i = O (m^{(i+1)/\p}) $ and $ n_i = O (m^{(i+1)/\p}) $.
After simplification, we have $ t (m, n, \p, \epsilon) = O (m^{1 + 1/k + O(\log^{5/4} ((\log{n}) / \epsilon) / \log^{1/4}{n}) } \log^2{W} + n) $ and $ t (m, n, \p, \epsilon) = O (m^{1 + 1/k + o(1)} \log^2{W} + n) $ if $ 1/\epsilon = \polylog{n} $ as desired.

Additionally we maintain, for every node $ v \in V $ and every $ 1 \leq i \leq \p - 1 $, the node $ c_i (v) $ which is a node with minimum $ \distest (u, v) $ among all nodes $ u $ of priority $ i $ such that $ v \in B (u) $. 
This can be done as follows.
For every node $ v $ we maintain a heap containing all nodes $ u $ of priority $ i $ such that $ v \in B(u) $ using the key $ \distest (u, v) $.
Every time $ v $ joins or leaves $ B (u) $ we insert or remove $ u $ from the heap of $ v $.
Every time $ \distest (u, v) $ changes, we update the key of $ u $ in the heap of $ v $.
After each insert, remove, or update in the heap of some node $ v $, we find the minimal element $ c_i (v) $ of the heap.
As each heap operation takes logarithmic time, the total update time of the algorithm of \Cref{pro:from_sssp_to_balls} only increases by a logarithmic factor, which does not alter the overall running time bound of our algorithm.

\begin{procedure}
\caption{Query($u$, $v$)}
\label{alg:query}

\eIf{$ v \in B (u) $}{
	$ \distest' (u, v) \gets \distest (u, v) $\;
}{
	Set $ i $ to the priority of $ u $\;
	\ForEach{$ j = i + 1 $ \KwTo $ k - 1 $}{
		\eIf{$ c_j (u) $ exists}{
			$ v'' \gets c_j (u) $\;
			$ \distest' (v'', v) \gets $ \Query{$v''$, $v$}\;
			$ \distest_j' (u, v) \gets \distest (v, v'') + \distest' (v'', v) $\;
		}{
			$ \distest_j' (u, v) \gets \infty $\;
		}
	}
	$ \distest' (u, v) \gets \min_{i+1 \leq j \leq k-1} \distest_j' (u, v) $\;
}
\KwRet{$ \distest' (u, v) $}\;
\end{procedure}

To answer a query for the approximate distance between a pair of nodes $ u $ and $ v $ we use Procedure~\ref{alg:query}.
This procedure first tests whether $ v \in B (u) $ and if yes returns $ \distest (u, v) $.
Otherwise it does the following for every $ j \geq i+1 $, where $ i $ is the priority of $ u $:
It first computes the node $ c_j (u) $, which among the nodes $ v' $ of priority $ j $ with $ u \in B (v') $ is the one with the minimum value of $ \distest (v', u) $.
Then it recursively queries for the approximate distance $ \distest' (c_j (u), v) $ from $ c_j (u) $ to $ v $ and sets the distance estimate via $ c_j (u) $ to $ \distest_j' (u, v) =  \distest (v, c_j (u)) + \distest' (c_j (u), v) $.
Finally, it returns the minimum of all distance estimates $ \distest_j' (u, v) $.

Note that in each instance there are $ O (k) $ recursive calls and with each recursive call the priority of $ u $ increases by at least one.
Thus the running time of the query procedure is $ O(k^k) $.

\begin{claim}
For every pair of nodes $ u $ and $ v $ the distance estimate $ \distest' (u, v) $ computed by Procedure~\ref{alg:query} satisfies $ \distest' (u, v) \leq (((1 + \epsilon)^2 + 1)^{k-i} - 1) \dist_G (u, v) $, where $ i $ is the priority of $ u $.
\end{claim}

\begin{proof}
The proof is by induction on the priority $ i $ of $ u $.
Let $ \distest' (u, v) $ denote the distance estimate returned by Procedure~\ref{alg:query}.
If $ i = k-1 $, then by \Cref{pro:from_sssp_to_balls} and our choice of $ D $ we know that $ v \in B (u) $ and thus $ \distest' (u, v) = \distest (u, v) \leq (1 + \epsilon) \dist_G (u, v) $.
If $ i < k-1 $ we distinguish between the two cases $ v \in B (u) $ and $ v \notin B (u) $.
If $ v \in B (u) $, then $ \distest' (u, v) = \distest (u, v) \leq (1 + \epsilon) \dist_G (u, v) $.
If $ v \notin B (u) $, then by \Cref{pro:from_sssp_to_balls} and our choice of $ D $ there is a node $ v' $ of priority $ j > i $ such that $ u \in B (v') $ and $ \dist_G (u, v') \leq (1 + \epsilon)^2 ((1 + \epsilon)^2 + 1)^{j-i-1} \dist_G (u, v) $.

We will now argue that $ \distest_j' (u, v) \leq  2 ((1 + \epsilon)^3 + 1)^{k-1-i} - 1 ) \dist_G (u, v) $, which implies the same upper bound for $ \distest' (u, v) $.
Set $ v'' \gets c_j (u) $.
Since both $ v'' $ and $ v' $ have priority $ j $ and $ u \in B (v') $ as well as $ v \in B (v'') $ we have $ \distest (u, v'') \leq \distest (u, v') $ by the definition of~$ v'' $.
Since $ \distest (u, v') \leq (1 + \epsilon) \dist_G (u, v') $, we have
\begin{align*}
\distest (u, v'') \leq (1 + \epsilon) \dist_G (u, v') &\leq (1 + \epsilon)^3 ((1 + \epsilon)^2 + 1)^{j-i-1} \dist_G (u, v) \\
 &\leq (1 + \epsilon)^3 ((1 + \epsilon)^3 + 1)^{j-i-1} \dist_G (u, v) \, .
\end{align*}
To simplify the presentation in the following we set $ a = (1 + \epsilon)^3 $ and thus have $ \distest (u, v'') \leq a (a + 1)^{j-i-1} \dist_G (u, v) $.
By the triangle inequality we have
\begin{align*}
\dist_G (v'', v) \leq \dist_G (v'', u) + \dist_G (u, v) &\leq \distest (v'', u) + \dist_G (u, v) \\
 &\leq (a (a + 1)^{j-i-1} + 1) \dist_G (u, v)
\end{align*}
and by the induction hypothesis we have
\begin{align*}
\distest' (v'', v) &\leq (2 (a + 1)^{k-1-j} - 1) \dist_G (v'', v) \\
 &\leq (2 (a + 1)^{k-1-j} - 1) (a (a + 1)^{j-i-1} + 1) \dist_G (u, v) \, .
\end{align*}
Since $ j \geq i + 1 $ we get
\begin{align*}
\distest_j' (u, v) &= \distest (u, v'') + \distest' (v'', v) \\
 &\leq \left( a (a + 1)^{j-i-1} + (2 (a + 1)^{k-1-j} - 1) (a (a + 1)^{j-i-1} + 1) \right) \dist_G (u, v) \\
 &= \left( 2 (a + 1)^{k-1-j} (a (a + 1)^{j-i-1} + 1) - 1 \right) \dist_G (u, v) \\
 &= \left( 2 a (a + 1)^{k-1-(i+1)} + 2 (a + 1)^{k-1-j}) - 1 \right) \dist_G (u, v) \\
 &\leq \left( 2 a (a + 1)^{k-1-(i+1)} + 2 (a + 1)^{k-1-(i+1)}) - 1 \right) \dist_G (u, v) \\
 &= \left( 2 (a + 1)^{k-1-(i+1)} (a + 1) - 1 \right) \dist_G (u, v) \\
 &= ( 2 (a + 1)^{k-1-i} - 1 ) \dist_G (u, v) \, . \qedhere
\end{align*}
\end{proof}

Note that $ 2 \leq ((1 + \epsilon)^3 + 1) $ and therefore we have $ \distest' (u, v) \leq (((1 + \epsilon)^3 + 1)^{k-i} - 1) \dist_G (u, v) $.
Furthermore, $ (1 + \epsilon)^3 \leq 1 + 7 \epsilon $ and in the worst case $ i = 0 $.
Thus, by running the whole algorithm with $ \epsilon' = \epsilon / 7 $, we can guarantee that $ \distest' (u, v) \leq ((2 + \epsilon)^k - 1) \dist_G (u, v) $.
\end{proof}

\section{Conclusion}\label{sec:conclusion}

In this paper, we showed that single-source shortest paths in undirected graphs can be maintained under edge deletions with near-linear total update time and constant query time.
The main approach is to maintain an $(n^{o(1)}, \epsilon)$-hop set of near-linear size in near-linear time.
We leave two major open problems.
The first problem is whether the same total update time can be achieved for directed graphs, substantially improving the current $ m n^{0.9 + o(1)} $ total update time by \cite{HenzingerKNSTOC14,HenzingerKNICALP15}.
This problem is very challenging because a suitable hop set for directed graphs is not known even in the static setting.
Moreover, improving the current $ \tilde O (m \sqrt{n}) $ total update time by \cite{ChechikHILP16} for the decremental reachability problem is already very interesting. 

The second major open problem is to derandomize our algorithm.
The main task here is to deterministically maintain the priority-induced balls of the nodes up to small bounded distance, which is the key to maintaining the hop set.
A related question is whether the algorithm of Roditty and Zwick~\cite{RodittyZ12} for decrementally maintaining the priority-induced clusters of Thorup and Zwick~\cite{ThorupZ05} up to small bounded distance (and the corresponding spanners and emulators) can be derandomized, which is possible in the static setting \cite{RodittyTZ05}.
We have previously demonstrated~\cite{HenzingerKNSICOMP16} how to derandomize the decremental $ (1 + \epsilon) $-approximate APSP algorithm of Roditty and Zwick~\cite{RodittyZ12}, but the technique does not carry over to maintaining the clusters.
Using a principally different approach, Bernstein and Chechik~\cite{BernsteinC16} have recently introduced a technique to deterministically maintain approximate SSSP under edge deletions yielding total update times of $ \tilde O (n^2 \log{W}) $ in weighted graphs~\cite{Bernstein17} and $ \tilde O (m n^{3/4}) $ in unweighted graphs~\cite{BernsteinC17}, respectively.
Can their technique be extended to obtain a deterministic algorithm that is as fast as our randomized one in the sparse regime?

\section*{Acknowledgement}
The authors would like to thank the anonymous reviewers of FOCS and JACM for their valuable feedback.

\printbibliography[heading=bibintoc] % Make bibliography show up in table of contents

\end{document}